\newlength{\defbaselineskip}
\newtheorem{theorem}{Theorem}
\newtheorem{corollary}{Corollary}
\newtheorem{definition}{Definition}
\newtheorem{lemma}{Lemma}
\DeclareMathOperator{\poly}{poly}
\DeclareMathOperator{\bigO}{\mathcal{O}}
\DeclareMathOperator{\nnz}{nnz}
\DeclareMathOperator{\A}{\mathcal{A}}
\DeclareMathOperator{\R}{\mathbb{R}}
\DeclareMathOperator{\T}{\mathcal{T}}
\DeclareMathOperator{\D}{\mathcal{D}}
\DeclareMathOperator{\E}{\mathcal{E}}
\begin{document}

\title{Low-distortion Subspace Embeddings in Input-sparsity Time \\ and
  Applications to Robust Linear Regression }

\author{
  Xiangrui Meng
  \thanks{
    Most of this work was done while the author was at 
    ICME, Stanford University supported by NSF DMS-1009005.
    Current affiliation: LinkedIn Corporation,
    Mountain View, 94403.
    Email: ximeng@linkedin.com.
  }
  \and
  Michael W. Mahoney
  \thanks{
    Dept.\ of Mathematics,
    Stanford University,
    Stanford, CA 94305.
    Email: mmahoney@cs.stanford.edu
  }
}

\date{}
\maketitle


\begin{abstract}%
\noindent
Low-distortion subspace embeddings are critical building blocks for developing
improved random sampling and random projection algorithms for common linear
algebra problems.
Here, we show that, given a matrix $A \in \R^{n \times d}$, with $n \gg d$, and
a $p \in [1, 2)$, with a constant probability, we can construct a low-distortion
embedding matrix $\Pi \in \R^{\poly(d) \times n}$ that embeds $\A_p$, the
$\ell_p$ subspace spanned by $A$'s columns, into $(\R^{\bigO(\poly(d))},
\|\cdot\|_p)$; the distortion of our embeddings is only $\bigO(\poly(d))$, and
we can compute $\Pi A$ in $\bigO(\nnz(A))$ time, i.e., input-sparsity time.
Our result generalizes the input-sparsity time $\ell_2$ subspace embedding
proposed recently by Clarkson and Woodruff; and for completeness, we present a
simpler and improved analysis of their construction for $\ell_2$.
These input-sparsity time $\ell_p$ embeddings are optimal, up to constants, in
terms of their running time; and the improved running time propagates to
applications such as $(1\pm \epsilon)$-distortion $\ell_p$ subspace embedding
and relative-error $\ell_p$ regression.
For $\ell_2$, we show that a $(1+\epsilon)$-approximate solution to the $\ell_2$
regression problem specified by the matrix $A$ and a vector $b \in \R^n$ can be
computed in $\bigO(\nnz(A) + d^3 \log(d/\epsilon) /\epsilon^2)$ time; and for
$\ell_p$, via a subspace-preserving sampling procedure, we show that a $(1\pm
\epsilon)$-distortion embedding of $\A_p$ into $\R^{\bigO(\poly(d))}$ can be
computed in $\bigO(\nnz(A) \cdot \log n)$ time, and we also show that a
$(1+\epsilon)$-approximate solution to the $\ell_p$ regression problem $\min_{x
  \in \R^d} \|A x - b\|_p$ can be computed in $\bigO(\nnz(A) \cdot \log n +
\poly(d) \log(1/\epsilon)/\epsilon^2)$ time.
Moreover, we can also improve the embedding dimension or equivalently the sample
size to $\bigO(d^{3+p/2} \log(1/\epsilon) / \epsilon^2)$ without increasing the
complexity.
\end{abstract}

\section{Introduction}
\label{sxn:intro}

Regression problems are ubiquitous, and the fast computation of their 
solutions is of interest in many large-scale data applications.
A parameterized family of regression problems that is of particular interest 
is the \emph{overconstrained $\ell_p$ regression problem}: given a matrix 
$A \in \R^{n \times d}$, with $n > d$, a vector $b \in \R^n$, a norm 
$\|\cdot\|_p$ parameterized by $p\in[1,\infty]$, and an error parameter 
$\epsilon > 0$, find a $(1+\epsilon)$-approximate solution 
$\hat{x} \in \R^d$ to:
\begin{equation}
  f^* = \min_{x \in \R^d} \|A x - b\|_p,
\end{equation}
i.e., find a vector $\hat{x}$ such that 
$\|A \hat{x} - b\|_p \leq (1+\epsilon) f^*$, where the $\ell_p$ norm of a
vector $x$ is $ \|x\|_p=\left(\sum_i|x_i|^p\right)^{1/p} $, defined to be
$\max_i |x_i|$ for $p=\infty$.
Special cases include the $\ell_2$ regression problem, also known as Least
Squares Approximation problem, and the $\ell_1$ regression problem, also known
as the Least Absolute Deviations or Least Absolute Errors problem.
The latter is of particular interest as a \emph{robust estimation} or
\emph{robust regression} technique, in that it is less sensitive to the presence
of outliers than the former.
We are most interested in this paper in the $\ell_1$ regression problem due 
to its robustness properties, but our methods hold for general $p\in[1,2]$, 
and thus we formulate our results in $\ell_p$.

It is well-known that for $p \ge 1$, the overconstrained $\ell_p$ regression
problem is a convex optimization problem; for $p=1$ and $p=\infty$, it is an
instance of linear programming; and for $p=2$, it can be solved with
eigenvector-based methods such as with the QR decomposition or the Singular
Value Decomposition of $A$.
In spite of their low-degree polynomial-time solvability, $\ell_p$ regression
problems have been the focus in recent years of a wide range of random sampling
and random projection algorithms, largely due to a desire to develop improved
algorithms for large-scale data applications~\cite{AMT10,MSM11_TR,CDMMMW13_SODA}.
For example, Clarkson~\cite{Cla05} uses subgradient and sampling methods to
compute an approximate solution to the overconstrained $\ell_1$ regression
problem in roughly $\bigO(nd^5\log n)$ time; and Dasgupta et
al.~\cite{DDHKM09_lp_SICOMP} use well-conditioned bases and subspace-preserving
sampling algorithms to solve general $\ell_p$ regression problems, for
$p\in[1,\infty)$, in roughly $\bigO(nd^5\log n)$ time.
A similar subspace-preserving sampling algorithm was developed by Drineas,
Mahoney, and Muthukrishnan~\cite{DMM06} to compute an approximate solution to
the $\ell_2$ regression problem.
The algorithm of~\cite{DMM06} relies on the estimation of the $\ell_2$ leverage
scores%
\footnote{Recall that for an $n \times d$ matrix $A$, with $n \gg d$, the
  \emph{$\ell_2$ leverage scores} of the rows of $A$ are equal to the diagonal
  elements of the projection matrix onto the span of $A$.
  That is, if $A=QR$ is a QR decomposition of $A$, or if $A=Q \Sigma V^T$ is the
  thin SVD of $A$, then the leverage scores equal the Euclidean norms squared of
  the rows of the $n \times d$ matrix $Q$, and thus they can be computed exactly
  in $\bigO(nd^2)$ time.
  See~\cite{Mah-mat-rev_BOOK,DMMW12_ICML} for details; and note that they can be
  generalized to $\ell_1$ and other $\ell_p$ norms~\cite{CDMMMW13_SODA} as well as
  to arbitrary $n \times d$ matrices, with both $n$ and $d$ large, if one
  specifies a low-rank parameter~\cite{CUR_PNAS,DMMW12_ICML}.}
of $A$ to be used as an importance sampling distribution, but when combined with
the results of Sarl\'{o}s~\cite{Sarlos06} and Drineas et
al.~\cite{DMMS07_FastL2_NM10} (that quickly preprocess $A$ to uniformize those
scores) or Drineas et al.~\cite{DMMW12_ICML} (that quickly computes
approximations to those scores), this leads to a random projection or random
sampling (respectively) algorithm for the $\ell_2$ regression problem that runs
in roughly $\bigO(n d \log d)$ time~\cite{DMMS07_FastL2_NM10,Mah-mat-rev_BOOK}.
More recently, Sohler and Woodruff~\cite{SW11} introduced the Cauchy Transform
to obtain improved $\ell_1$ embeddings, thereby leading to an algorithm for the
$\ell_1$ regression problem that runs in $\bigO(nd^{1.376+})$ time; and Clarkson
et al.~\cite{CDMMMW13_SODA} use the Fast Cauchy Transform and ellipsoidal rounding
methods to compute an approximation to the solution of general $\ell_p$
regression problems in roughly $\bigO(nd \log n)$ time.

These algorithms, and in particular the algorithms for $p=2$, form the basis for
much of the large body of recent work in randomized algorithms for low-rank
matrix approximation, and thus optimizing their properties can have immediate
practical benefits.
See, e.g., the recent monograph of Mahoney~\cite{Mah-mat-rev_BOOK} and
references therein for details.
Although some of these algorithms are near-optimal for dense inputs, they all
require $\Omega(nd \log d)$ time, which can be large if the input matrix is 
very sparse.
Thus, it was a significant result when Clarkson and
Woodruff~\cite{CW12sparse_TR} developed an algorithm for the $\ell_2$ regression
problem (as well as the related problems of low-rank matrix approximation and
$\ell_2$ leverage score approximation) that runs in \emph{input-sparsity time},
i.e., in $\bigO(\nnz(A) + \poly(d/\epsilon))$ time, where $\nnz(A)$ is the
number of non-zero elements in $A$ and $\epsilon$ is an error parameter.
This result depends on the construction of a \emph{sparse embedding matrix}
$\Pi$ for $\ell_2$.
By this,
we mean the following: 
for an $n \times d$ matrix $A$, an $s \times n$ matrix $\Pi$ such that, 
$$
(1-\epsilon)\|Ax\|_2 \le \|\Pi A x\|_2 \le (1+\epsilon)\|Ax\|_2 , 
$$
for all $x\in\R^{d}$.
That is, $\Pi$ embeds the column space of $A$ into $\R^{s}$, while approximately
preserving the $\ell_2$ norms of all vectors in that subspace.
Clarkson and Woodruff achieve their improved results for $\ell_2$-based problems
by showing how to construct such a $\Pi$ with $s=\poly(d/\epsilon)$ and showing
that it can be applied to an arbitrary $A$ in $\bigO(\nnz(A))$
time~\cite{CW12sparse_TR}.
(In particular, this embedding result improves the result of Meng, Saunders, and
Mahoney~\cite{MSM11_TR}, who in their development of the parallel least-squares
solver \textsc{LSRN} use a result from Davidson and
Szarek~\cite{davidson2001local} to construct a constant-distortion embedding for
$\ell_2$ that runs in $\bigO(\nnz(A) \cdot d)$ time.)
Interestingly, the analysis of Clarkson and Woodruff coupled ideas from the data
streaming literature with the structural fact that there cannot be too many
high-leverage constraints/rows in $A$.
In particular, they showed that the high-leverage parts of the subspace may be
viewed as heavy-hitters that are ``perfectly hashed,'' and thus contribute no
distortion, and that the distortion of the rest of the subspace as well as the
``cross terms'' may be bounded with a result of Dasgupta, Kumar, and
Sarl\'{o}s~\cite{DKT10}.

In this paper, we provide improved low-distortion subspace embeddings for
$\ell_p$, for all $p\in[1,2]$, in input-sparsity time; and we show that, by
coupling with recent work on fast subspace-preserving sampling
from~\cite{CDMMMW13_SODA}, these embeddings can be used to provide
$(1+\epsilon)$-approximate solutions to $\ell_p$ regression problems, for
$p\in[1,2]$, in nearly input-sparsity time.
In more detail, our main results are the following.
\begin{itemize}
\item For $\ell_2$, we obtain an improved result for the input-sparsity time
  $(1\pm\epsilon)$-distortion embedding of~\cite{CW12sparse_TR}.
  In particular, for the same embedding procedure, we obtain improved bounds 
  for the embedding dimension with a much simpler analysis 
  than~\cite{CW12sparse_TR}.
  See Theorem~\ref{thm:sparse_l2} of Section~\ref{sxn:l2} for a precise 
  statement of this result.
  Our analysis is direct and does \emph{not} rely on splitting the 
  high-dimensional space into a set of heavy-hitters consisting of the 
  high-leverage components and the complement of that heavy-hitting set.  
  In addition, since our result directly improves the $\ell_2$ embedding 
  result of Clarkson and Woodruff~\cite{CW12sparse_TR}, it immediately leads 
  to improvements for the $\ell_2$ regression, low-rank matrix approximation, 
  and $\ell_2$ leverage score estimation problems that they consider.
\item For $\ell_1$, we obtain a low-distortion sparse embedding matrix $\Pi$
  such that $\Pi A$ can be computed in input-sparsity time.
  That is, we construct an embedding matrix $\Pi \in
  \R^{\poly(d) \times n}$ such that, for all $x\in\R^{d}$,
  $$
  1/\bigO(\poly(d)) \cdot \|Ax\|_1 \le \|\Pi A x\|_1 \le \bigO(\poly(d)) \cdot
  \|Ax\|_1,
  $$
  with a constant probability, and $\Pi A$ can be computed in $\bigO(\nnz(A))$
  time.  
  See Theorem~\ref{thm:sparse_l1} of Section~\ref{sxn:l1} for a precise
  statement of this result.  
  Here, our proof involves splitting the set $Y = \{ U x \,|\,
  \|x\|_{\infty}=1,\ x\in\R^{d} \}$, where $U$ is an $\ell_1$ well-conditioned
  basis for the span of $A$, into two parts, informally a subset where
  coordinates of high $\ell_1$ leverage dominate $\|y\|_1$ and the complement of
  that subset. 
  This $\ell_1$ result leads to immediate improvements in $\ell_1$-based
  problems.
  For example, by taking advantage of the fast version of subspace-preserving
  sampling from~\cite{CDMMMW13_SODA}, we can construct and apply a
  $(1\pm\epsilon)$-distortion sparse embedding matrix for $\ell_1$ in
  $\bigO(\nnz(A) \cdot \log n + \poly(d/\epsilon))$~time. 
  In addition, we can use it to compute a $(1+\epsilon)$-approximation to the
  $\ell_1$ regression problem in $O(\nnz(A) \cdot \log n + \poly(d/\epsilon))$
  time, which in turn leads to immediate improvements in $\ell_1$-based matrix
  approximation objectives, e.g., for the $\ell_1$ subspace approximation
  problem~\cite{bd09,SW11,CDMMMW13_SODA}.
\item For $\ell_p$, for all $p\in (1,2)$, we obtain a low-distortion sparse
  embedding matrix $\Pi$ such that $\Pi A$ can be computed in input-sparsity
  time.
  That is, we construct an embedding matrix $\Pi \in \R^{\poly(d) \times n}$
  such that, for all $x\in\R^{d}$,
  $$
  1/\bigO(\poly(d)) \cdot \|Ax\|_p \le \|\Pi A x\|_p \le
  \bigO(\poly(d)) \cdot \|Ax\|_p ,
  $$
  with a constant probability, and $\Pi A$ can be computed in $\bigO(\nnz(A))$
  time.
  See Theorem~\ref{thm:sparse_lp} of Section~\ref{sxn:lp} for a precise
  statement of this result.
  Here, our proof generalizes the $\ell_1$ result, but we need to prove upper
  and lower tail bound inequalities for sampling from general $p$-stable
  distributions that are of independent interest.
  Although these distributions don't have closed forms for $p\in(1,2)$ in
  general, we prove that there exists an order among the Cauchy distribution, a
  $p$-stable distribution with $p\in(1,2)$, and the Gaussian distribution such
  that for all $p\in(1,2)$ we can use the upper bound from the Cauchy
  distribution and the lower bound from the Gaussian distribution.
  As with our $\ell_1$ result, this $\ell_p$ result has several extensions: in
  $\bigO(\nnz(A) \cdot \log n + \poly(d/\epsilon))$ time, we can construct and
  apply a $(1\pm\epsilon)$-distortion sparse embedding matrix for $\ell_p$; in
  $\bigO(\nnz(A) \cdot \log n + \poly(d/\epsilon))$ time, we can compute a
  $(1+\epsilon)$-approximation to the $\ell_p$ regression problem; and in
  $\bigO(\nnz(A) \cdot d \log d)$ time, we can construct and apply a
  near-optimal (in terms of embedding dimension and distortion factor) 
  embedding~matrix.
\end{itemize}

\noindent
The $(1\pm\epsilon)$-distortion subspace embedding (for $\ell_p$, $p\in[1,2)$,
that we construct from the input-sparsity time embedding and the fast
subspace-preserving sampling) has embedding dimension $s =
\bigO(\poly(d)\log(1/\epsilon)/\epsilon^2)$, where the somewhat large $\poly(d)$
term directly multiplies the $\log(1/\epsilon)/\epsilon^2$ term. 
We can also improve this, showing that it is possible, without increasing the
overall complexity, to decouple the large $\poly(d)$ and
$\log(1/\epsilon)/\epsilon^2$ via another round of sampling and conditioning,
thereby obtaining an embedding dimension that is a small $\poly(d)$ times
$\log(1/\epsilon)/\epsilon^2$. 
See Theorem~\ref{thm:improved-dim} of Section~\ref{sxn:improve} for a precise
statement of this result.  

\bigskip

\textbf{Remark.}
Subsequent to our posting a preliminary version of this paper on the
arXiv~\cite{MM12_TR}, Clarkson and Woodruff let us know that, independently of
us, they used a result from~\cite{CDMMMW13_SODA} to extend their $\ell_2$
subspace embedding from~\cite{CW12sparse_TR} to provide a nearly input-sparsity
time algorithm for $\ell_p$ regression, for all $p\in[1,\infty)$.
This is now posted as Version 2 of~\cite{CW12sparse_TR}.
Their approach requires solving a rounding problem of size $O(n/\poly(d)) \times
d$, which depends on $n$ (possibly very large).
Our approach does not contain this intermediate step and it only needs
$O(\poly(d))$ storage.
Moreover, to the best of our knowledge, their method does not provide
low-distortion $\ell_p$ subspace embeddings in input-sparsity time, as we are
able to provide (in a simple and oblivious way).

\textbf{Remark.}
In the first version of this paper, the embedding dimension for $\ell_2$ in
Theorem~\ref{thm:sparse_l2} was $\bigO(d^4/\epsilon^2)$.
Subsequent to the dissemination of this version, Drineas pointed out to us that,
with a slight modification to our original proof, our result could very easily
be improved to $\bigO(d^2/\epsilon^2)$.
Nelson and Nguyen also let us know that, at about the same time and using the
same technique, but independent of us, they too obtained and first published the
$\bigO(d^2/\epsilon^2)$ embedding result~\cite{nelson2012osnap}.

\section{Background}
\label{sxn:lbackground} 

We use $\|\cdot\|_p$ to denote the $\ell_p$ norm of a vector, $\|\cdot\|_2$ the
spectral norm of a matrix, $\|\cdot\|_F$ the Frobenius norm of a matrix, and
$|\cdot|_p$ the element-wise $\ell_p$ norm of a matrix.
Given $A \in \R^{n \times d}$ with full column rank and $p \in [1, 2]$, we 
use $\A_p$ to denote the $\ell_p$ subspace spanned by $A$'s columns.
In this paper, we are interested in fast embedding of $\A_p$ into a
$d$-dimensional subspace of $(\R^{\poly(d)}, \|\cdot\|_p)$, with distortion
either $\poly(d)$ or $(1\pm\epsilon)$, for some $\epsilon > 0$, as well as
applications of this embedding to problems such as $\ell_p$ regression.
We assume that $n \gg \poly(d) \geq d \gg \log n$.
To state our results, we assume that we are capable of computing a
$(1+\epsilon)$-approximate solution to an $\ell_p$ regression problem of size
$n' \times d$ for some $\epsilon > 0$, as long as $n'$ is independent of $n$.
Let us denote the running time needed to solve this smaller problem by
$\T_p(\epsilon; n', d)$.
In theory, we have $\T_2(\epsilon; n', d) = \bigO(n' d \log (d/\epsilon) + d^3)$
(see Rokhlin and Tygert~\cite{RT08} and Drineas et 
al.~\cite{DMMS07_FastL2_NM10}), and $\T_p(\epsilon; n', d) =
\bigO((n' d^2 + \poly(d)) \log(n'/\epsilon))$, for general $p$ (see, e.g.,
Mitchell~\cite{mitchell2003polynomial}).


\paragraph{Conditioning.}
The $\ell_p$ subspace embedding and $\ell_p$ regression problems are closely
related to the concept of conditioning.  
We state here two related notions of $\ell_p$-norm conditioning and then a lemma
that characterizes the relationship between them.

\begin{definition}[$\ell_p$-norm Conditioning (from \cite{CDMMMW13_SODA})]
  \label{def:lpnormcond}
  Given an $n \times d$ matrix $A$ and $p \in [1, \infty]$, let
  \begin{equation*}
    \sigma_p^{\max}(A) = \max_{\|x\|_2 \leq 1} \|A x\|_p \text{ and } \sigma_p^{\min}(A) = \min_{\|x\|_2 \geq 1} \|A x\|_p.
  \end{equation*}
  Then, we denote by $\kappa_p(A)$ the \emph{$\ell_p$-norm condition number of
    $A$}, defined to be:
  \begin{equation*}
    \kappa_p(A) = \sigma_p^{\max}(A) / \sigma_p^{\min}(A).
  \end{equation*}
  For simplicity, we will use $\kappa_p$, $\sigma_p^{\min}$, and 
  $\sigma_p^{\max}$ when the underlying matrix is clear.
\end{definition}

\begin{definition}[$(\alpha, \beta, p)$-conditioning (from \cite{DDHKM09_lp_SICOMP})]
  \label{def:lpbasis}
  Given an $n \times d$ matrix $A$ and $p\in[1,\infty]$, let $q$ be the dual
  norm of $p$.
  Then $A$ is \emph{$(\alpha,\beta,p)$-conditioned} if (1) $|A|_p \leq \alpha$,
  and (2) for all $z \in \R^{d}$, $\|z\|_q \leq \beta \|A z\|_p$.
  Define $\bar{\kappa}_p(A)$ as the minimum value of $\alpha \beta$ such that
  $A$ is $(\alpha, \beta, p)$-conditioned. 
\end{definition}

\begin{lemma}[Equivalence of $\kappa_p$ and $\bar{\kappa}_p$ (from
  \cite{CDMMMW13_SODA})]
  \label{lemma:kappa_equiv}
  Given an $n \times d$ matrix $A$ and $p \in [1, \infty]$, we always have
  \begin{equation*}
    d^{-|1/2-1/p|} \kappa_p(A) \leq \bar{\kappa}_p(A) \leq d^{\max \{1/2, 1/p\}} \kappa_p(A).
  \end{equation*}
\end{lemma}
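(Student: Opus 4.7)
The plan is to prove each direction by exhibiting an explicit $(\alpha,\beta,p)$-witness (for the upper bound) or by extracting $\sigma_p^{\max},\sigma_p^{\min}$ estimates from an arbitrary witness (for the lower bound). Only two ingredients are needed: Hölder's inequality applied row-wise to $Ax$, and the textbook $\ell_q$-vs-$\ell_2$ comparison $d^{-\max\{0,1/p-1/2\}}\|x\|_2 \le \|x\|_q \le d^{\max\{0,1/2-1/p\}}\|x\|_2$, where $q$ is the dual exponent of $p$.

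For the upper bound $\bar\kappa_p(A) \le d^{\max\{1/2,1/p\}}\kappa_p(A)$, I would take the natural witness $\alpha := |A|_p$. Since $|A|_p^p = \sum_{j=1}^d \|Ae_j\|_p^p$ and each column satisfies $\|Ae_j\|_p \le \sigma_p^{\max}$ (as $\|e_j\|_2 = 1$), this gives $\alpha \le d^{1/p}\sigma_p^{\max}$. For $\beta$, chain $\|z\|_q \le d^{\max\{0,1/2-1/p\}}\|z\|_2 \le (d^{\max\{0,1/2-1/p\}}/\sigma_p^{\min})\|Az\|_p$, valid because $\|Az\|_p \ge \sigma_p^{\min}\|z\|_2$ for every $z$ (by homogeneity applied to the definition of $\sigma_p^{\min}$). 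Multiplying, the exponent $1/p + \max\{0,1/2-1/p\}$ collapses to $\max\{1/p,1/2\}$.

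For the lower bound $\kappa_p(A) \le d^{|1/2-1/p|}\bar\kappa_p(A)$, I would start from an arbitrary $(\alpha,\beta,p)$-witness. Applying Hölder row-wise to each entry of $Ax$ yields $|(Ax)_i| \le \|A_{i,:}\|_p\|x\|_q$, and summing $p$-th powers gives $\|Ax\|_p \le |A|_p\|x\|_q \le \alpha\|x\|_q$; together with the upper norm comparison this produces $\sigma_p^{\max} \le \alpha \cdot d^{\max\{0,1/2-1/p\}}$. Meanwhile the $(\alpha,\beta,p)$-condition gives $\|Az\|_p \ge \|z\|_q/\beta \ge d^{-\max\{0,1/p-1/2\}}\|z\|_2/\beta$, hence $\sigma_p^{\min} \ge d^{-\max\{0,1/p-1/2\}}/\beta$. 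Taking the ratio, the exponent sum simplifies to $|1/2-1/p|$, giving $\kappa_p \le d^{|1/2-1/p|}\alpha\beta$; minimizing over witnesses finishes the job.

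There is no real conceptual obstacle; the only care required is tracking the two regimes $p \le 2$ and $p \ge 2$ when invoking the $\ell_q$-to-$\ell_2$ norm comparison, because the direction of the inequality reverses at $p=2$. Writing the exponents uniformly as $\max\{0,\cdot\}$ handles both cases in a single calculation, and the sum and product of the relevant exponents collapse cleanly to $\max\{1/2,1/p\}$ and $|1/2-1/p|$ as required by the lemma statement.
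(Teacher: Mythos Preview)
The paper does not prove this lemma at all; it is stated with attribution to \cite{CDMMMW13_SODA} and used as a black box. There is therefore no ``paper's own proof'' to compare against. Your argument is correct and is in fact the standard one: the upper bound comes from the explicit witness $\alpha=|A|_p\le d^{1/p}\sigma_p^{\max}$ together with $\|z\|_q\le d^{\max\{0,1/2-1/p\}}\|z\|_2$, and the lower bound comes from the row-wise H\"older estimate $\|Ax\|_p\le|A|_p\|x\|_q$ combined with the reverse norm comparison. The exponent bookkeeping you describe (tracking $\max\{0,1/2-1/p\}+\max\{0,1/p-1/2\}=|1/2-1/p|$ and $1/p+\max\{0,1/2-1/p\}=\max\{1/2,1/p\}$) is exactly right and handles both regimes $p\le2$ and $p\ge2$ uniformly.
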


\noindent
\textbf{Remark.}
Given the equivalence established by Lemma~\ref{lemma:kappa_equiv}, we will say
that $A$ is \emph{well-conditioned in the $\ell_p$ norm} if $\kappa_p(A)$ or
$\bar{\kappa}_p(A) = \bigO(\poly(d))$, independent of $n$. 

Although for an arbitrary matrix $A \in \R^{n \times d}$, the condition numbers
$\kappa_p(A)$ and $\bar{\kappa}_p(A)$ can be arbitrarily large, we can often
find a matrix $R \in \R^{d \times d}$ such that $A R^{-1}$ is well-conditioned. 
This procedure is called \emph{conditioning}, and there exist two approaches for
conditioning: via low-distortion $\ell_p$ subspace embedding and via ellipsoidal
rounding.

\begin{definition}[Low-distortion $\ell_p$ Subspace Embedding]
  Given an $n \times d$ matrix $A$ and $p \in [1, \infty]$, $\Pi \in \R^{s
    \times n}$ is a low-distortion embedding of $\A_p$ if $s = \bigO(\poly(d))$
  and
  \begin{equation*}
    1/\bigO(\poly(d)) \cdot \|A x\|_p \leq \|\Pi A x\|_p \leq \bigO(\poly(d)) \cdot \|A x\|_p, \quad \forall x \in \R^d.
  \end{equation*}
\end{definition}

\noindent
\textbf{Remark.}
Given a low-distortion embedding matrix $\Pi$ of $\A_p$, let $R$ be the ``R''
matrix from the QR decomposition of $\Pi A$. 
Then, the matrix $A R^{-1}$ is well-conditioned in the $\ell_p$ norm. 
To see this, note that we~have
\begin{align*}
  \|A R^{-1} x\|_p \leq \bigO(\poly(d)) \cdot \|\Pi A R^{-1} x\|_p 
  \leq \bigO(\poly(d)) \cdot \|\Pi A R^{-1} \|_2 
  = \bigO(\poly(d)) \cdot \|x\|_2, \quad \forall x \in \R^d,
\end{align*}
where the first inequality is due to low distortion and the second inequality is
due to $s = \bigO(\poly(d))$. 
By similar arguments, we can show that $\|A R^{-1} x\|_p \geq 1/\bigO(\poly(d))
\cdot \|x\|_2,\ \forall x \in\R^d$. 
Hence, by combining these results, the matrix $A R^{-1}$ is well-conditioned in
the $\ell_p$ norm. 

For a discussion of ellipsoidal rounding, we refer readers to Clarkson et
al.~\cite{CDMMMW13_SODA}. 
In this paper, we simply cite the following lemma, which is based on ellipsoidal
rounding.

\begin{lemma}[Fast $\bigO(d)$-conditioning (from \cite{CDMMMW13_SODA})]
  \label{lemma:lp_cond_2d}
  Given an $n \times d$ matrix $A$ and $p \in [1, \infty]$, it takes at most
  $\bigO(n d^3 \log n)$ time to find a matrix $R \in \R^{d \times d}$ such that
  $\kappa_p(A R^{-1}) \leq 2 d$.
\end{lemma}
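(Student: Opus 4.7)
My plan is to reduce the conditioning problem to computing an approximate John/Loewner ellipsoidal rounding of the centrally symmetric convex body
\begin{equation*}
K \;=\; \{x \in \R^d \,:\, \|A x\|_p \leq 1\},
\end{equation*}
which is convex because $\|\cdot\|_p$ is a norm and has nonempty interior because $A$ has full column rank. By the classical Loewner--John theorem for symmetric bodies there exists an ellipsoid $\mathcal{E}^* \subseteq K \subseteq \sqrt{d}\,\mathcal{E}^*$; I only need an algorithmic analogue that achieves a looser factor of $2d$ in the right amount of time.

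Concretely, I would run a Khachiyan-style iterative rounding procedure that outputs a positive-definite matrix $E \in \R^{d\times d}$ with $\mathcal{E} := \{x : x^\top E x \leq 1\}$ satisfying $\mathcal{E} \subseteq K \subseteq (2d)\,\mathcal{E}$. Factoring $E = R^\top R$ by a Cholesky decomposition (cost $\bigO(d^3)$) and applying the change of variables $y = R x$, the sandwich $\mathcal{E} \subseteq K \subseteq 2d\cdot\mathcal{E}$ rewrites as
\begin{equation*}
\|A R^{-1} y\|_p \;\leq\; \|y\|_2 \;\leq\; 2d\,\|A R^{-1} y\|_p, \qquad \forall y \in \R^d,
\end{equation*}
which immediately yields $\sigma_p^{\max}(A R^{-1}) \leq 1$ and $\sigma_p^{\min}(A R^{-1}) \geq 1/(2d)$, hence $\kappa_p(A R^{-1}) \leq 2d$, as desired.

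For the running time, the rounding iteration needs a support/separation oracle for $K$: given a current ellipsoid, find the direction $v$ maximizing $v^\top E v$ subject to $v \in K$ (equivalently, detect the direction along which $K$ extends furthest beyond $\mathcal{E}$, using the dual norm $\|\cdot\|_q$). Since $K$ is given explicitly through $A$ and $\|\cdot\|_p$, each such query reduces to an $\ell_q$ maximization against a $d$-dimensional vector, which I can evaluate or approximate in $\bigO(nd)$ time using $A$ directly. Each iteration performs a rank-one update to $E$ at cost $\bigO(d^2)$. A standard potential-function argument (à la Khachiyan's volumetric bound) shows that $\bigO(d^2\log n)$ updates suffice to reduce the rounding factor to $2d$, starting from a crude initial ellipsoid obtained in $\bigO(n d^2)$ time. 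The total cost is therefore $\bigO(nd \cdot d^2 \log n) = \bigO(nd^3\log n)$, matching the claim.

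The main obstacle is showing that the iteration count is $\bigO(d^2 \log n)$ and that the algorithm truly attains the factor $2d$ (rather than a worse $d^{3/2}$ or $d\log d$), \emph{for every} $p \in [1,\infty]$, not only the Euclidean case where John's theorem is directly algorithmic. This requires (i) a careful choice of the separating oracle tailored to the $\ell_p$-body geometry (in particular, exploiting central symmetry so that one can identify a sharp ``most violated'' direction using subgradients of $\|A\cdot\|_p$), and (ii) a log-dependence on $n$ in the potential-function bound, which is what prevents the running time from inheriting extra $\poly(d)$ factors through the iteration count. I would leverage the $\ell_p$-specific oracle analysis already developed in \cite{CDMMMW13_SODA} to discharge both points.
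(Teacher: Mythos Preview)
The paper does not prove this lemma; it is quoted verbatim from \cite{CDMMMW13_SODA} as a black-box result, with the only comment being that it is ``based on ellipsoidal rounding.'' Your proposal correctly identifies that approach: the reduction from the sandwich $\mathcal{E} \subseteq K \subseteq 2d\,\mathcal{E}$ to the bound $\kappa_p(AR^{-1}) \leq 2d$ via a Cholesky factor of the ellipsoid matrix is exactly right, and a Khachiyan-style rounding iteration with an optimization oracle for the symmetric body $K = \{x:\|Ax\|_p\le 1\}$ is indeed how \cite{CDMMMW13_SODA} proceeds. The iteration-count and per-iteration-cost details that you flag as the ``main obstacle'' are precisely what is worked out in that reference, so deferring to it, as you do in your last sentence, is appropriate and is also what the present paper does.
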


\paragraph{Subspace-preserving sampling and $\ell_p$ regression.}
Given $R \in \R^{d \times d}$ such that $A R^{-1}$ is well-conditioned in the
$\ell_p$ norm, we can construct a $(1\pm\epsilon)$-distortion embedding,
specifically a subspace-preserving sampling, of $\A_p$ in $\bigO(\nnz(A) \cdot
\log n)$ additional time and with a constant probability.
This result from Clarkson et al.~\cite[Theorem 5.4]{CDMMMW13_SODA} improves 
the subspace-preserving sampling algorithm proposed by Dasgupta et 
al.~\cite{DDHKM09_lp_SICOMP} by estimating the row norms of $A R^{-1}$
(instead of computing them exactly) to define importance sampling 
probabilities.

\begin{lemma}[Fast Subspace-preserving Sampling (from \cite{CDMMMW13_SODA})]
  \label{lemma:fast_sampling}
  Given a matrix $A \in \R^{n \times d}$, $p \in [1, \infty)$, $\epsilon > 0$,
  and a matrix $R \in \R^{d \times d}$ such that $A R^{-1}$ is well-conditioned,
  it takes $\bigO(\nnz(A) \cdot \log n)$ time to compute a sampling matrix $S
  \in \R^{s \times n}$ (with only one nonzero element per row) with $s =
  \bigO(\bar{\kappa}_p^p(A R^{-1}) d^{|p/2-1|+1} \log(1/\epsilon) / \epsilon^2)$
  such that with a constant probability,
  \begin{equation*}
    (1-\epsilon) \|A x\|_p \leq \|S A x\|_p \leq (1+\epsilon) \|A x\|_p, \quad \forall x \in \R^d.
  \end{equation*}
\end{lemma}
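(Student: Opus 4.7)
Let $U = A R^{-1}$, which is well-conditioned in the $\ell_p$ norm by hypothesis, and write $u_i^T$ for its rows. The standard subspace-preserving sampling template (from Dasgupta et al.~\cite{DDHKM09_lp_SICOMP}) constructs $S$ by sampling each row $i$ independently with probability $p_i \propto \|u_i\|_p^p$, then rescaling the sampled row by $p_i^{-1/p}$. The plan is to instantiate this template, but with the crucial modification that we never compute $\|u_i\|_p$ exactly; instead we estimate these row norms in input-sparsity time so that the total preprocessing cost is $\bigO(\nnz(A)\cdot \log n)$ rather than $\bigO(\nnz(A)\cdot d)$, which is what a direct computation of $U$ row-by-row would require.

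The key step is the fast estimation of $\|u_i\|_p$. First I would form a random $d\times r$ matrix $G$ with $r = \bigO(\log n)$, whose entries are drawn from an appropriate $p$-stable distribution (standard Gaussians for $p=2$, Cauchy for $p=1$, and general $p$-stable variables for $p\in(1,2)$). Because $R$ is a known $d\times d$ matrix, the product $R^{-1}G$ can be formed in $\bigO(d^3)$ time, and then $AR^{-1}G = UG$ is computed in $\bigO(\nnz(A)\cdot\log n)$ time. By the defining property of $p$-stable distributions, the row $u_i^TG\in\R^r$ has coordinates distributed as $\|u_i\|_p$ times a $p$-stable random variable, so standard median/order-statistic estimators yield values $\lambda_i$ satisfying $\tfrac12\|u_i\|_p \le \lambda_i \le 2\|u_i\|_p$ simultaneously for all $i$ with high probability (via a union bound over $n$ rows using $r=\Theta(\log n)$ repetitions). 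Define sampling probabilities $p_i = \min\{1,\,s\,\lambda_i^p / \sum_j \lambda_j^p\}$ and draw the rows accordingly; the resulting $S$ has one nonzero per row by construction.

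To analyze the embedding quality, I would follow the usual two-step argument. First, for a single fixed $x\in\R^d$, write $\|SAx\|_p^p = \sum_i (1/p_i)\,\chi_i\,|(Ax)_i|^p$ where $\chi_i$ is the indicator that row $i$ is sampled; the importance sampling guarantees that each term $|(Ax)_i|^p$ contributes at most $\|Ax\|_p^p \cdot (\text{ratio of } \|u_i\|_p^p \text{ to its sampling weight})$, which is bounded by $\bar\kappa_p^p(U)/s$ up to constants using that $U$ is well-conditioned (so $|(Ax)_i|^p \le \|u_i\|_p^p\cdot\|R x\|_q^p$ and $\sum\|u_i\|_p^p = |U|_p^p$). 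A Bernstein-type inequality then gives a single-vector $(1\pm\epsilon)$ bound with failure probability $\exp(-c\,s\,\epsilon^2/(\bar\kappa_p^p\cdot d^{|p/2-1|}))$. Second, I union bound over a $\gamma$-net $\mathcal{N}$ of the set $\{Ax:\|x\|_2=1\}$ in the $\ell_p$ norm; a standard volumetric argument gives $|\mathcal{N}|\le (C/\gamma)^d$, costing another factor of $d$ in the exponent. Chaining or a direct net-plus-approximation argument then extends the bound from $\mathcal{N}$ to all of $\A_p$, yielding the required $s = \bigO(\bar\kappa_p^p(U)\cdot d^{|p/2-1|+1}\log(1/\epsilon)/\epsilon^2)$.

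The main obstacle is the $d^{|p/2-1|}$ factor that arises in the per-vector concentration step. For $p=2$ this factor is $1/\sqrt d$ and the classical matrix Bernstein bound gives a clean $\bigO(d\log d/\epsilon^2)$ rate; for $p\neq 2$ the relationship between the $\ell_p$ norm on $\A_p$ and the $\ell_2$ norms used in the Bernstein bound for sums of independent real variables is non-trivial, and one must carefully pass between $\ell_p$ and $\ell_2$ on the coefficient space (invoking $\|z\|_q \le \beta\|Uz\|_p$ from the $(\alpha,\beta,p)$-conditioning and dual-norm comparisons $\|z\|_q \le d^{|1/q-1/2|}\|z\|_2$). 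Getting the exponent of $d$ tight, rather than losing an extra $d$ somewhere in the chain of inequalities, is the delicate point; this is exactly where the $d^{|p/2-1|+1}$ factor in the stated sample size comes from, and is the step that would require the most care.
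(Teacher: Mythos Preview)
The paper does not prove this lemma; it is stated in Section~\ref{sxn:lbackground} as a background result imported from Clarkson et al.~\cite{CDMMMW13_SODA}, with only the one-sentence gloss that the improvement over Dasgupta et al.~\cite{DDHKM09_lp_SICOMP} consists in \emph{estimating} the row norms of $AR^{-1}$ rather than computing them exactly. There is therefore no in-paper proof to compare against.

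Your sketch is consistent with that description and with the cited source: approximate the $\ell_p$ row norms of $U=AR^{-1}$ by right-multiplying by an $O(\log n)$-column $p$-stable matrix (so the cost is $O(\nnz(A)\log n)$ after precomputing $R^{-1}G$), sample rows according to the estimated norms, and then invoke the Dasgupta et al.\ Bernstein-plus-net argument for the $(1\pm\epsilon)$ subspace guarantee, tracking the $\bar\kappa_p^p$ and $d^{|p/2-1|+1}$ factors through the well-conditioning inequalities. One genuine gap worth flagging: the lemma is stated for all $p\in[1,\infty)$, but $p$-stable distributions exist only for $p\in(0,2]$, so your row-norm estimation step as written covers only $p\in[1,2]$. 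That suffices for every use of the lemma in this paper, but to match the lemma's stated range you would need a separate estimation mechanism for $p>2$.
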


\noindent
Given such a subspace-preserving sampling algorithm, Clarkson et
al.~\cite[Theorem 5.4]{CDMMMW13_SODA} show that it is straightforward to
compute a $\frac{1+\epsilon}{1-\epsilon}$-approximate solution to an $\ell_p$
regression problem.

\begin{lemma}[$\ell_p$ Regression via Sampling (from \cite{CDMMMW13_SODA}]
  \label{lemma:fast_reg}
  Given an $\ell_p$ regression problem specified by $A \in \R^{n \times d}$, $b
  \in \R^n$, and $p \in [1, \infty)$, let $S$ be a $(1\pm\epsilon)$-distortion
  embedding matrix of the subspace spanned by $A$'s columns and $b$ from
  Lemma~\ref{lemma:fast_sampling}, and let $\hat{x}$ be an optimal solution to
  the subsampled problem $\min_{x \in \R^d} \|S A x - S b\|_p$.
  Then $\hat{x}$ is a $\frac{1+\epsilon}{1-\epsilon}$-approximate solution to
  the original problem.
\end{lemma}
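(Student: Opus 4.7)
The plan is a direct sandwich argument that exploits the fact that the residual vector $Ax - b$ lies in a fixed low-dimensional subspace for every candidate $x$, namely the column span of the augmented matrix $[A, b]$. Since Lemma~\ref{lemma:fast_sampling} is applied to that augmented matrix, the hypothesis gives a single random $S$ that achieves a $(1 \pm \epsilon)$-distortion embedding uniformly on this subspace, so the same $S$ simultaneously preserves the objective value at every $x \in \R^d$.

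First, I would write $Ax - b$ as a vector in the column span of $[A, b]$, and invoke the embedding guarantee to obtain
\begin{equation*}
(1-\epsilon) \|Ax - b\|_p \;\le\; \|S A x - S b\|_p \;\le\; (1+\epsilon) \|Ax - b\|_p, \quad \forall x \in \R^d,
\end{equation*}
with constant probability. Next, let $x^*$ be an optimum of the original problem, so that $\|A x^* - b\|_p = f^*$. Combining the left inequality at $\hat{x}$, the optimality of $\hat{x}$ for the sampled problem, and the right inequality at $x^*$ gives
\begin{equation*}
(1-\epsilon) \|A \hat{x} - b\|_p \;\le\; \|S A \hat{x} - S b\|_p \;\le\; \|S A x^* - S b\|_p \;\le\; (1+\epsilon) \|A x^* - b\|_p = (1+\epsilon) f^*.
\end{equation*}
Dividing through by $1 - \epsilon$ yields $\|A \hat{x} - b\|_p \le \frac{1+\epsilon}{1-\epsilon} f^*$, which is exactly the claimed approximation guarantee.

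There is no real obstacle: the proof is essentially a two-line consequence of the subspace-preserving property. The only point requiring care is that the embedding must be applied to the span of $[A, b]$ rather than just $A$, which the statement already accounts for; this is harmless even when $b$ lies in the column span of $A$, since we simply work inside a subspace of dimension at most $d+1$, and Lemma~\ref{lemma:fast_sampling} applies with $d$ replaced by $d+1$ without affecting the asymptotic sample size.
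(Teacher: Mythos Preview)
Your argument is correct and is exactly the standard sandwich argument the paper relies on; the paper does not give a separate proof of this lemma (it is quoted from \cite{CDMMMW13_SODA}), but the identical chain of inequalities appears in its proof of Corollary~\ref{cor:l1reg} in Appendix~\ref{sxn:pf-cor-l1reg}. Your remark about applying the embedding to the augmented matrix $[A,\,b]$ is also precisely how the paper sets things up in Algorithm~\ref{alg:fast_l1_reg}.
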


\noindent
\textbf{Remark.}
Collecting these results, we see that a low-distortion $\ell_p$ subspace
embedding is a fundamental building block (and very likely a bottleneck) for
$(1\pm\epsilon)$-distortion $\ell_p$ subspace embeddings, as well as for a
$(1+\epsilon)$-approximation to an $\ell_p$ regression problem. 
This motivates our work and its emphasis on finding low-distortion subspace 
embeddings more efficiently.

\paragraph{Stable distributions.}
\label{sec:stable-distributions}

The properties of $p$-stable distributions are essential for constructing
input-sparsity time low-distortion $\ell_p$ subspace embeddings.

\begin{definition}[$p$-stable Distribution]
  A distribution $\D$ over $\R$ is called $p$-stable, if for any $m$ real
  numbers $a_1,\ldots,a_m$, we have
  \begin{equation*}
    \sum_{i=1}^m a_i X_i \simeq \left( \sum_{i=1}^m |a_i|^p \right)^{1/p} X,
  \end{equation*}
  where $X_i \stackrel{\text{iid}}{\sim} \D$ and $X \sim \D$.
  By ``$X \simeq Y$'', we mean $X$ and $Y$ have the same distribution.
\end{definition}

\noindent
By a result due to L{\'e}vy~\cite{levy1925calcul}, it is known that $p$-stable
distributions exist for $p \in (0, 2]$; and from Chambers et
al.~\cite{chambers1976method}, it is known that $p$-stable random variables can
be generated efficiently, thus allowing their practical use.
Let us use $\D_p$ to denote the ``standard'' $p$-stable distribution, for
$p\in[1,2]$, specified by its characteristic function $\psi(t) = e^{-|t|^p}$.
It is known that $\D_1$ is the standard Cauchy distribution, and that $\D_2$ is
the Gaussian distribution with mean $0$ and variance $2$.

\paragraph{Tail inequalities.}
We note two inequalities from Clarkson et al.~\cite{CDMMMW13_SODA} regarding the
tails of the Cauchy distribution.

\begin{lemma}[Cauchy Upper Tail Inequality]
  \label{lemma:cauchy_upper}
  For $i=1,\ldots,m$, let $C_i$ be $m$ (not necessarily independent) standard
  Cauchy variables, and $\gamma_i > 0$ with $\gamma = \sum_i \gamma_i$.
  Let $X = \sum_i \gamma_i |C_i|$.
  For any $t > 1$,
  \begin{equation*}
    \Pr[X > t \gamma] \leq \frac{1}{\pi t} \left( \frac{\log(1+(2mt)^2)}{1 - 1/(\pi t)} + 1 \right).
  \end{equation*}
  For simplicity, we assume that $m \geq 3$ and $t \geq 1$, and then we have
  $\Pr[X > t \gamma] \leq 2 \log (m t)/ t$.
\end{lemma}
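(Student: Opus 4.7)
The plan is to use a truncation-plus-conditional-Markov argument. Since each $|C_i|$ has infinite mean, I cannot control $E[X]$ directly, and since the $C_i$ are not assumed independent, no concentration inequality can be applied to $X$ as a whole. The key observation that circumvents both obstacles is that on the joint event $E = \{\max_i |C_i| \leq M\}$, the pointwise domination $\mathbf{1}_E \leq \mathbf{1}_{|C_i|\leq M}$ holds simultaneously for every $i$, so $E[X\mathbf{1}_E]$ can be controlled using only the Cauchy marginal of each variable, bypassing the lack of independence.

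First I would fix the truncation level $M = 2mt$ and bound $\Pr[E^c]$ by the union bound. Using the standard Cauchy tail $\Pr[|C_i| > M] = (2/\pi)\arctan(1/M) \leq 2/(\pi M)$, this yields $\Pr[E^c] \leq 2m/(\pi M) = 1/(\pi t)$. Next, direct integration of the Cauchy density gives the truncated first moment $E[|C_i|\mathbf{1}_{|C_i|\leq M}] = (2/\pi)\int_0^M x/(1+x^2)\,dx = (1/\pi)\log(1+M^2)$. Combining this with $\mathbf{1}_E \leq \mathbf{1}_{|C_i|\leq M}$ for every $i$ gives $E[X\mathbf{1}_E] = \sum_i \gamma_i\, E[|C_i|\mathbf{1}_E] \leq (\gamma/\pi)\log(1+(2mt)^2)$, and note that only the single-variable Cauchy marginals enter this bound.

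Finally I would apply conditional Markov: since $E[X \mid E] = E[X\mathbf{1}_E]/\Pr[E]$ and $\Pr[E] \geq 1 - 1/(\pi t)$, we obtain $\Pr[X > t\gamma \mid E] \leq \log(1+(2mt)^2)/[\pi t(1 - 1/(\pi t))]$. The decomposition $\Pr[X > t\gamma] \leq \Pr[X > t\gamma \mid E]\Pr[E] + \Pr[E^c]$ then assembles the stated bound, and the simplified form for $m \geq 3$, $t \geq 1$ follows by routine manipulation (noting $1-1/(\pi t) \geq 1 - 1/\pi$ and absorbing constants into $2\log(mt)/t$). The main subtlety, which I expect is the only nontrivial step, is the choice to condition on the joint event $E$ rather than to truncate each $|C_i|$ individually: truncating pointwise would require independence to control the resulting sum, whereas conditioning on $E$ exploits the uniform inequality $\mathbf{1}_E \leq \mathbf{1}_{|C_i|\leq M}$ to reduce everything to marginal Cauchy computations and thereby accommodates arbitrary dependence among the $C_i$.
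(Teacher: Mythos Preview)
The paper does not supply its own proof of this lemma; it is quoted verbatim from Clarkson et al.~\cite{CDMMMW13_SODA} as a known tail inequality. Your truncation-plus-conditional-Markov argument is correct and is essentially the standard proof from that reference: condition on the event that every $|C_i|$ is below a common threshold $M=2mt$, bound the failure probability by a union bound over the Cauchy marginals, bound $E[X\mathbf{1}_E]$ using $\mathbf{1}_E\le \mathbf{1}_{|C_i|\le M}$ for every $i$ together with the truncated Cauchy first moment $(1/\pi)\log(1+M^2)$, and combine via Markov. One very minor remark: in your final assembly, once you multiply $\Pr[X>t\gamma\mid E]$ by $\Pr[E]$ the factors $\Pr[E]$ cancel and you actually recover the slightly sharper bound $\frac{1}{\pi t}\bigl(\log(1+(2mt)^2)+1\bigr)$; the stated form with the extra $1/(1-1/(\pi t))$ factor corresponds to dropping the $\Pr[E]$ multiplier, which is harmless but not needed.
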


\begin{lemma}[Cauchy Lower Tail Inequality]
  \label{lemma:cauchy_lower}
  For $i=1,\ldots,m$, let $C_i$ be independent standard Cauchy random variables, and
  $\gamma_i \geq 0$ with $\gamma = \sum_i \gamma_i$. 
  Let $X = \sum_i \gamma_i |C_i|$. 
  Then, for any $t>0$,
  \begin{equation*}
    \log \Pr[X \leq (1-t) \gamma] \leq \frac{- \gamma t^2}{3 \max_i \gamma_i}.
  \end{equation*}
\end{lemma}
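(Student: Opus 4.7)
The plan is to reduce this lower-tail bound on a heavy-tailed nonnegative weighted sum to a standard Chernoff argument on a bounded surrogate, via truncation. Define $Y_i = \min(|C_i|, M)$ for a universal constant $M$ to be chosen. Since $Y_i \le |C_i|$ deterministically, we have
$$
  \Pr[X \le (1-t)\gamma] \le \Pr\left[\sum_i \gamma_i Y_i \le (1-t)\gamma\right],
$$
so it suffices to control the lower tail of a sum of independent random variables lying in $[0, M]$. The truncation is essential because $E[|C_i|] = \infty$, which prevents any direct application of Chernoff/Bernstein to the $|C_i|$ themselves.

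Next I would compute the truncated mean using the layer-cake identity $\mu(M) := E[\min(|C|,M)] = \int_0^M \Pr[|C| > s]\,ds = M - \frac{2M}{\pi}\arctan M + \frac{1}{\pi}\ln(1+M^2)$. Since this is unbounded in $M$, I can fix a universal constant $M$ large enough that $\mu(M) \ge 1$, which ensures $E[\sum_i \gamma_i Y_i] \ge \gamma$, matching the threshold $(1-t)\gamma$ inside the probability.

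The core estimate is then a Chernoff bound on the truncated sum: for $\theta > 0$,
$$
  \Pr\left[\textstyle\sum_i \gamma_i Y_i \le (1-t)\gamma\right] \le e^{\theta(1-t)\gamma} \prod_i E[e^{-\theta \gamma_i Y_i}].
$$
Since $\gamma_i Y_i \in [0, M\gamma_i]$ with mean at least $\gamma_i$, the convexity inequality $e^{-\theta z} \le 1 - \frac{z}{M\gamma_i}(1 - e^{-\theta M \gamma_i})$ combined with $\log(1-x) \le -x$ and the Taylor bound $1 - e^{-x} \ge x - x^2/2$ gives, after summing over $i$ and using $\sum_i \gamma_i^2 \le \gamma\,\gamma_{\max}$,
$$
  \log \Pr\left[\textstyle\sum_i \gamma_i Y_i \le (1-t)\gamma\right] \le -\theta t \gamma + \tfrac{1}{2} M\,\theta^2\, \gamma\,\gamma_{\max}.
$$
Optimizing over $\theta$ with the choice $\theta = t/(M\gamma_{\max})$ produces a bound of the form $-c\,\gamma t^2/(M\gamma_{\max})$, and fixing $M$ (together with absorbing small slack between $\mu(M)$ and $1$) yields the stated constant $3$ in the denominator.

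The main obstacle is pinning down the exact constant $3$ rather than a worse one: there is a genuine tension between taking $M$ small (which tightens the quadratic coefficient in $\theta^2$) and taking $M$ large enough that $\mu(M) \ge 1$ (needed to recenter around $\gamma$). Matching the clean constant may require either a slightly sharper Taylor estimate on the specific range of arguments $\theta\gamma_i M$ that arise, a direct evaluation of the Laplace transform of the truncated Cauchy rather than the elementary two-point convexity bound, or simply accepting a small loss to land at the round denominator $3\gamma_{\max}$. Otherwise the argument is routine Chernoff bookkeeping.
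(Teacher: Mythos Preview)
The paper does not prove this lemma; it is quoted from \cite{CDMMMW13_SODA} as background (see the sentence ``We note two inequalities from Clarkson et al.\ \ldots'' preceding Lemmas~\ref{lemma:cauchy_upper} and~\ref{lemma:cauchy_lower}).  So there is no in-paper proof to compare against.

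Your truncation-plus-Chernoff plan is the standard and correct route for lower tails of heavy-tailed nonnegative sums, and the bookkeeping you outline is accurate up to constants.  Your own diagnosis of the weak point is also right: the two-point convexity (chord) bound on $E[e^{-\theta\gamma_i Y_i}]$ is slightly too crude to land on the denominator~$3$.  Concretely, that route gives exponent $-\gamma t^2/(2\mu(M)M\gamma_{\max})$ after optimizing $\theta$, and since one needs $\mu(M)\ge 1$ (forcing $M\gtrsim 1.68$), the best attainable denominator is about $3.4$.

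A painless fix, entirely in the spirit of the paper, is to replace the chord step by Maurer's inequality (already cited here as Lemma~\ref{lemma:gaussian_lower}'s source): for independent nonnegative $Z_i$ with finite second moment, $\Pr[\sum Z_i\le E[\sum Z_i]-s]\le\exp\bigl(-s^2/(2\sum E[Z_i^2])\bigr)$.  Applying it to $Z_i=\gamma_i Y_i$ with $M$ chosen so that $\mu(M)=1$ (numerically $M\approx 1.68$) gives exponent $-\gamma t^2/(2E[Y^2]\gamma_{\max})$, and a direct computation yields $E[Y^2]\approx 1.38<3/2$, which delivers the constant~$3$ with room to spare.  This is exactly the ``sharper Laplace/second-moment estimate'' you anticipated needing.
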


\noindent
We also note the following result about Gaussian variables.
This is a direct consequence of Maurer's inequality~(\cite{maurer2003bound}),
and we will use it to derive lower tail inequalities for $p$-stable
distributions.

\begin{lemma}[Gaussian Lower Tail Inequality]
  \label{lemma:gaussian_lower}
  For $i=1,\ldots,m$, let $G_i$ be independent standard Gaussian random
  variables, and $\gamma_i \geq 0$ with $\gamma = \sum_i \gamma_i$. 
  Let $X = \sum_i \gamma_i |G_i|^2$. 
  Then, for any $t>0$,
  \begin{equation*}
    \log \Pr[X \leq (1-t) \gamma] \leq \frac{-\gamma t^2}{6 \max_i \gamma_i}.
  \end{equation*}
\end{lemma}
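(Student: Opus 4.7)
The plan is to apply Maurer's inequality (Maurer 2003) directly to the sequence of non-negative random variables $Y_i = \gamma_i G_i^2$. Recall that Maurer's bound states: for independent non-negative random variables $Y_1,\dots,Y_m$ with finite second moments, and any $s > 0$,
\begin{equation*}
  \Pr\!\left[\sum_{i=1}^m Y_i \leq \sum_{i=1}^m \E[Y_i] - s\right] \leq \exp\!\left(\frac{-s^2}{2 \sum_{i=1}^m \E[Y_i^2]}\right).
\end{equation*}
This is exactly the shape of bound we want for $X = \sum_i \gamma_i |G_i|^2$, so the work is entirely in identifying the two moment sums and plugging in a suitable deviation $s$.

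First I would compute the relevant Gaussian moments. Since each $G_i$ is a standard Gaussian, $\E[G_i^2] = 1$ and $\E[G_i^4] = 3$. Consequently $\E[Y_i] = \gamma_i$, so $\sum_i \E[Y_i] = \gamma$, and $\E[Y_i^2] = \gamma_i^2 \E[G_i^4] = 3\gamma_i^2$, so $\sum_i \E[Y_i^2] = 3 \sum_i \gamma_i^2$. Next I would use the elementary bound $\sum_i \gamma_i^2 \leq (\max_i \gamma_i)\sum_i \gamma_i = \gamma \max_i \gamma_i$, which is where the $\max_i \gamma_i$ in the denominator of the claimed inequality comes from.

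Finally I would choose $s = t\gamma$ so that the deviation event $\{X \leq \gamma - s\}$ coincides with $\{X \leq (1-t)\gamma\}$. Plugging these choices into Maurer's inequality gives
\begin{equation*}
  \Pr[X \leq (1-t)\gamma] \leq \exp\!\left(\frac{-t^2 \gamma^2}{2 \cdot 3 \gamma \max_i \gamma_i}\right) = \exp\!\left(\frac{-t^2 \gamma}{6 \max_i \gamma_i}\right),
\end{equation*}
and taking logarithms yields exactly the stated bound.

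There is no substantive obstacle here since Maurer's inequality is doing all the heavy lifting; the only points that require any care are getting the factor $\E[G_i^4] = 3$ correct (this is where the constant $6$ in the denominator, rather than $2$, originates) and the straightforward bound $\sum \gamma_i^2 \leq \gamma \max_i \gamma_i$ needed to replace $\sum \gamma_i^2$ by a quantity involving $\max_i \gamma_i$. Everything else is arithmetic.
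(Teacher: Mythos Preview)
Your proposal is correct and matches the paper's approach exactly: the paper states that the lemma is ``a direct consequence of Maurer's inequality'' and gives no further details, and you have filled in precisely those details (computing $\mathbf{E}[G_i^4]=3$, bounding $\sum_i \gamma_i^2 \leq \gamma \max_i \gamma_i$, and taking $s=t\gamma$).
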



\section{Main Results for $\ell_2$ Embedding}
\label{sxn:l2} 

Here is our main result for input-sparsity time low-distortion subspace
embeddings for $\ell_2$.
See also Nelson and Nguyen~\cite{nelson2012osnap} for a similar result with a
slightly better constant.

\begin{theorem}[($1\pm\epsilon$)-distortion Embedding for $\ell_2$]
  \label{thm:sparse_l2}
  Given a matrix $A \in \R^{n \times d}$ and $\epsilon \in (0, 1)$, let $\Pi = S
  D$ where $S \in \R^{s \times n}$ has each column chosen independently and
  uniformly from the $s$ standard basis vectors of $\R^s$ and $D \in \R^{n
    \times n}$ is a diagonal matrix with diagonal entries chosen independently
  and uniformly from $\pm 1$.
  Given any $\delta \in (0, 1)$, let $s = (d^2 + d) / (\epsilon^2 \delta)$.
  Then with probability at least $1-\delta$,
  \begin{equation*}
    (1-\epsilon) \|A x\|_2 \leq \| \Pi A x \|_2 \leq (1+\epsilon) \|A x\|_2, \quad \forall x \in \R^d.
  \end{equation*}
  In addition, $\Pi A$ can be computed in $\bigO(\nnz(A))$ time.
\end{theorem}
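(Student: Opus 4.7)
The plan is to reduce the subspace embedding property to a matrix concentration statement. Let $U \in \R^{n \times d}$ be an orthonormal basis for the column span of $A$ (e.g., from the thin SVD), so that for every $x \in \R^d$ we can write $Ax = Uy$ with $\|y\|_2 = \|Ax\|_2$. It then suffices to show that $\|U^T \Pi^T \Pi U - I_d\|_2 \le \epsilon$ with probability at least $1-\delta$, because this immediately gives $\|\Pi Uy\|_2^2 \in [1-\epsilon,\,1+\epsilon]\,\|y\|_2^2$, and the square root of a $(1\pm\epsilon)$ factor lies in $[1-\epsilon,\,1+\epsilon]$ for $\epsilon \in (0,1)$.

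Next I would bound the spectral norm by the Frobenius norm and apply Markov's inequality, reducing the task to computing $\mathbb{E}[\|U^T \Pi^T \Pi U - I_d\|_F^2]$. Writing $\Pi = SD$ in components via a hash function $h : [n] \to [s]$ with $S_{ki} = \mathbf{1}[h(i)=k]$ and independent random signs $\sigma_i \in \{\pm 1\}$, and letting $M := U^T \Pi^T \Pi U - I_d$, the entries expand as $M_{jj'} = \sum_{i \ne i'} U_{ij} U_{i'j'} \sigma_i \sigma_{i'} \mathbf{1}[h(i) = h(i')]$, since the diagonal $i=i'$ terms reconstruct $U^T U = I_d$ and cancel the $-I_d$. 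In particular $\mathbb{E}[M] = 0$ by the symmetry of the $\sigma_i$.

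The heart of the argument is the second moment. Expanding $M_{jj'}^2$ gives a quadruple sum over $(i_1, i_1', i_2, i_2')$ with $i_1 \ne i_1'$ and $i_2 \ne i_2'$; by sign symmetry only the quadruples in which every index appears an even number of times survive, leaving precisely the two configurations $(i_1, i_1') = (i_2, i_2')$ and $(i_1, i_1') = (i_2', i_2)$. Each pair of distinct indices collides under $h$ with probability $1/s$. Summing over $j, j'$ collapses these two configurations into $\frac{1}{s} \bigl( \sum_{i \ne i'} \|U_i\|_2^2 \|U_{i'}\|_2^2 + \sum_{i \ne i'} \langle U_i, U_{i'}\rangle^2 \bigr)$, where $U_i$ denotes the $i$-th row of $U$. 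The first sum is at most $\|U\|_F^4 = d^2$, and the second is at most $\|UU^T\|_F^2 = \operatorname{tr}(UU^T) = d$, using that $UU^T$ is an orthogonal projection of rank $d$. Hence $\mathbb{E}[\|M\|_F^2] \le (d^2+d)/s = \epsilon^2 \delta$, and Markov's inequality delivers the claim.

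The $\bigO(\nnz(A))$ running time is immediate: one iterates over the nonzeros of $A$, and each entry $A_{ij}$ contributes $\sigma_i A_{ij}$ to position $(h(i), j)$ of $\Pi A$ in $\bigO(1)$ work. The main obstacle is the second-moment bookkeeping: correctly enumerating which four-index configurations survive the sign expectation, and then recognizing that the two surviving sums evaluate cleanly to $d^2$ and $d$ via the orthonormality of $U$ and the idempotence of $UU^T$. All remaining steps are routine.
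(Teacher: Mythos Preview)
Your proposal is correct and follows essentially the same approach as the paper: reduce to an orthonormal basis $U$, set $M = U^T\Pi^T\Pi U - I_d$, compute $\mathbb{E}[\|M\|_F^2]$ via a four-index expansion, bound it by $(d^2+d)/s$, and finish with Markov's inequality together with $\|M\|_2 \le \|M\|_F$. The paper carries out the second-moment computation slightly differently---it evaluates $\mathbb{E}[x_{kl}^2]$ directly (obtaining exact expressions involving $\|U_{*k}\|_4^4$ and $\langle U_{*k}^2, U_{*l}^2\rangle$) rather than first subtracting off the $i=i'$ terms---but the structure and the final bound are identical.
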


\noindent
The construction of $\Pi$ in this theorem is the same as the construction in
Clarkson and Woodruff~\cite{CW12sparse_TR}.
For them, $s = \bigO((d/\epsilon)^4 \log^2(d/\epsilon))$ in order to achieve $(1
\pm \epsilon)$ distortion with a constant probability. 
Theorem~\ref{thm:sparse_l2} shows that it actually suffices to set $s =
\bigO((d^2 + d)/\epsilon^2)$.
Surprisingly, the proof is rather simple.
Let $X = U^T \Pi^T \Pi U$, where $U$ is an orthonormal basis for $\A_2$.
Compute $\mathbf{E}[\|X-I\|_F^2]$ and apply Markov's inequality to $\|X -
I\|_F^2 \leq \epsilon^2$, which implies $\|X - I\|_2 \leq \epsilon$ and hence
the embedding result.
See Appendix~\ref{sec:proof_l2} for a complete proof. 

\noindent
\textbf{Remark.} 
The $\bigO(\nnz(A))$ running time is indeed optimal, up to constant factors, 
for general inputs.
Consider the case when $A$ has an important row $a_j$ such that $A$ becomes
rank-deficient without it.
Thus, we have to observe $a_j$ in order to compute a low-distortion embedding.
However, without any prior knowledge, we have to scan at least a constant
portion of the input to guarantee that $a_j$ is observed with a constant
probability, which takes $\bigO(\nnz(A))$ time.
Note that this optimality result applies to general $p$.

The results of Theorem~\ref{thm:sparse_l2} propagate to related applications,
e.g., to the $\ell_2$ regression problem, the low-rank matrix approximation
problem and the problem of computing approximations to the $\ell_2$ leverage
scores.
Since it underlies the other applications, only the $\ell_2$ regression
improvement is stated here explicitly; its proof is basically combining our
Theorem~\ref{thm:sparse_l2} with Theorem~19 of~\cite{CW12sparse_TR}.

\begin{corollary}[Fast $\ell_2$ Regression]
  With a constant probability, a $(1+\epsilon)$-approximate solution to an
  $\ell_2$ regression problem can be computed in $\bigO(\nnz(A) + \T_2(\epsilon;
  d^2 / \epsilon^2, d))$ time.
\end{corollary}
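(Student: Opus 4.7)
The plan is to follow the standard sketch-and-solve paradigm for $\ell_2$ regression, using our improved embedding from Theorem~\ref{thm:sparse_l2}. Apply that theorem to the augmented matrix $[A \mid b] \in \R^{n \times (d+1)}$ with distortion parameter $\epsilon' = \epsilon/c$ for a small absolute constant $c$ (for example $c=5$); this produces, with constant probability, a sketching matrix $\Pi \in \R^{s \times n}$ with $s = \bigO((d+1)^2/\epsilon'^2) = \bigO(d^2/\epsilon^2)$ that is a $(1\pm\epsilon')$-distortion subspace embedding of the column span of $[A \mid b]$. Then solve the reduced problem $\min_{x \in \R^d} \|\Pi A x - \Pi b\|_2$ using the subroutine whose cost is captured by $\T_2(\epsilon'; s, d)$, and return its (approximate) minimizer $\hat x$.

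The approximation guarantee follows by the usual two-line argument. Every residual $Ax - b$ lies in the column span of $[A \mid b]$ (via the coefficient vector $(x;-1)$), so the embedding property specializes to
\[
(1-\epsilon')\,\|Ax-b\|_2 \;\le\; \|\Pi A x - \Pi b\|_2 \;\le\; (1+\epsilon')\,\|Ax-b\|_2 \qquad \forall x \in \R^d.
\]
Let $f^{\star} = \min_x \|Ax-b\|_2$ and $x^{\star}$ be any optimizer. Applying the left inequality at $\hat x$ and the right inequality at $x^{\star}$, together with the $(1+\epsilon')$-approximate optimality of $\hat x$ in the sketched problem, yields $\|A\hat x - b\|_2 \le \frac{(1+\epsilon')^2}{1-\epsilon'}\,f^{\star}$; choosing $c$ large enough makes this at most $(1+\epsilon)f^{\star}$.

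For the running time, one forms $\Pi [A\mid b]$ by a single streaming pass over the nonzeros, since $\Pi = SD$ touches each nonzero of its input exactly once; this costs $\bigO(\nnz(A) + \nnz(b)) = \bigO(\nnz(A))$ under the standard convention that $A$ has no all-zero rows (so $\nnz(A) \ge n \ge \nnz(b)$). The reduced problem has $\bigO(d^2/\epsilon^2)$ rows and $d$ columns, so it is solved in $\T_2(\epsilon;\, d^2/\epsilon^2,\, d)$ time. Summing gives the claimed total of $\bigO(\nnz(A) + \T_2(\epsilon;\, d^2/\epsilon^2,\, d))$.

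There is no real obstacle in the argument; the one mild point is that Theorem~\ref{thm:sparse_l2} must be applied to $[A \mid b]$ rather than to $A$ alone so that the embedding also preserves norms of residuals $Ax-b$, which is why the embedding dimension is $(d+1)^2/\epsilon'^2$ rather than $d^2/\epsilon^2$ before constants are absorbed. Rescaling $\epsilon$ by a constant to pass from the $\frac{(1+\epsilon')^2}{1-\epsilon'}$ distortion to a clean $(1+\epsilon)$ factor is likewise routine and does not change the asymptotic cost.
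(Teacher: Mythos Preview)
Your proposal is correct and takes essentially the same approach as the paper: the paper simply says the proof ``is basically combining our Theorem~\ref{thm:sparse_l2} with Theorem~19 of~\cite{CW12sparse_TR},'' and what you have written is precisely the standard sketch-and-solve argument that this citation unpacks (apply the $(1\pm\epsilon')$ embedding to the augmented matrix $[A\mid b]$, solve the sketched problem, and use the sandwich inequality on residuals). Your handling of the constants and the $(d+1)$-versus-$d$ issue is fine and matches the paper's level of care elsewhere (cf.\ Algorithm~\ref{alg:fast_l1_reg} and Corollary~\ref{cor:l1reg}).
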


\noindent
\textbf{Remark.}  
Although our simpler direct proof leads to a better result for $\ell_2$ subspace
embedding, the technique used in the proof of Clarkson and
Woodruff~\cite{CW12sparse_TR}, which splits coordinates into ``heavy'' and
``light'' sets based on the leverage scores, highlights an important structural
property of $\ell_2$ subspace: that only a small subset of coordinates can have
large $\ell_2$ leverage scores.  
(We note that the technique of splitting coordinates is also used by Ailon and
Liberty~\cite{AL11} to get an unrestricted fast Johnson-Lindenstrauss transform;
and that the difficulty in finding and approximating the large-leverage
directions was---until
recently~\cite{Mah-mat-rev_BOOK,DMMW12_ICML}---responsible for difficulties in
obtaining fast relative-error random sampling algorithms for $\ell_2$ regression
and low-rank matrix approximation.)  
An analogous structural fact holds for $\ell_1$ and other $\ell_p$ spaces.  
Using this property, we can construct novel input-sparsity time $\ell_p$ 
subspace embeddings for general $p \in [1, 2)$, as we discuss in the next 
two~sections.

\section{Main Results for $\ell_1$ Embedding}
\label{sxn:l1} 

Here is our main result for input-sparsity time low-distortion subspace
embeddings for $\ell_1$.

\begin{theorem}[Low-distortion Embedding for $\ell_1$]
  \label{thm:sparse_l1}
  Given $A \in \R^{n \times d}$ with full column rank, let $\Pi = S C \in \R^{s
    \times n}$, where $S \in \R^{s \times n}$ has each column chosen
  independently and uniformly from the $s$ standard basis vectors of $\R^s$, and
  where $C \in \R^{n \times n}$ is a diagonal matrix with diagonals chosen
  independently from the standard Cauchy distribution.
  Set $s = \omega d^5 \log^5 d$ with $\omega$ sufficiently large.
  Then with a constant probability, we have
  \begin{equation*}
    1/\bigO(d^{2} \log^2 d) \cdot \|A x\|_1 \leq \|\Pi A x\|_1 \leq \bigO( d \log d ) \cdot \|A x\|_1, \quad \forall x \in \R^{d}.
  \end{equation*}
  In addition, $\Pi A$ can be computed in $\bigO(\nnz(A))$ time.
\end{theorem}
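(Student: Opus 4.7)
The strategy mirrors the Sohler--Woodruff dense Cauchy transform analysis for $\ell_1$, adapted to the sparse product $\Pi = SC$. Since the statement depends only on the column span $\A_1$, first replace $A$ by an $\ell_1$ well-conditioned basis $U \in \R^{n \times d}$; existentially an Auerbach-type basis gives $|U|_1 = \bigO(d)$, $\|U_j\|_1 = \Theta(1)$ per column, and $\|x\|_\infty \le \bigO(1)\,\|U x\|_1$ for every $x$. By $1$-stability of the Cauchy, each entry $(\Pi U)_{kj} = \sum_{i:h(i)=k} C_{ii}U_{ij}$ is marginally a Cauchy variable of scale $\sum_{i:h(i)=k}|U_{ij}|$, and the total scale over the $sd$ entries is exactly $|U|_1 = \bigO(d)$. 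Applying Lemma~\ref{lemma:cauchy_upper}---which permits correlated Cauchys---entry-wise on $\Pi U$ yields $\sum_{k,j}|(\Pi U)_{kj}| \le \bigO(\log(sd))\cdot|U|_1 = \bigO(d\log d)$ with constant probability, whence the pointwise bound $\|\Pi U x\|_1 \le \|x\|_\infty \sum_{k,j}|(\Pi U)_{kj}|$ combined with $\|x\|_\infty = \bigO(\|U x\|_1)$ delivers the upper distortion $\bigO(d\log d)$.

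For the lower bound I would partition the unit $\ell_1$-sphere $Y = \{U x : \|U x\|_1 = 1\}$ using the fixed, $x$-independent set $T = \{i : \|U_i\|_1 \ge \tau\}$ of ``heavy rows'' for a threshold $\tau = 1/\poly(d)$ to be tuned so that $|T|^2 \ll s$. Because $|y_i| \le \|U_i\|_1\,\|x\|_\infty$, every coordinate of $y$ with $i \notin T$ satisfies $|y_i| = \bigO(\tau)$, while $|T| \le |U|_1/\tau = \bigO(d/\tau)$. Split $Y = Y_h \cup Y_\ell$ with $Y_h = \{y \in Y : \|y_T\|_1 \ge 1/2\}$ (heavy rows dominate $\|y\|_1$) and $Y_\ell$ its complement. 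Splitting along the fixed rows of $U$---rather than along the $x$-dependent magnitudes of $y$---is what lets the two regimes be attacked by disjoint probabilistic mechanisms and then united by a single union bound over a standard $\epsilon$-net on $Y$ of cardinality $(1/\epsilon)^d$.

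On $Y_\ell$ the light part $y_{\bar T}$ has entries of size $\bigO(\tau)$ and carries at least half the mass of $y$; a Bernstein-style balls-in-bins estimate on the hash $h$ yields $\max_k \sum_{i\in\bar T,\,h(i)=k}|y_i| \le \bigO(\tau\log s)$, and conditioning on $h$ and invoking $1$-stability puts $\|\Pi y_{\bar T}\|_1$ in the form $\sum_k \gamma_k|C_k|$ with small $\max_k\gamma_k$, so Lemma~\ref{lemma:cauchy_lower} gives $\|\Pi y_{\bar T}\|_1 \ge 1/4$ with failure probability $\exp(-\Omega(1/(\tau\log s)))$, easily beating $(1/\epsilon)^d$ for $\tau = 1/\poly(d)$. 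On $Y_h$ the bound $s \gg |T|^2$ together with a birthday argument gives pairwise-distinct buckets for the rows of $T$ with high probability; in each such bucket the ``signal'' $C_{ii}y_i$ is additively perturbed by an independent Cauchy of scale $\bigO(\tau\log s)$ from the light coordinates sharing it. Combining $\Pr[|C|<1/D] = \bigO(1/D)$ with a Chernoff bound over the $|T|$ heavy indices, a constant fraction of them contribute at least $|y_i|/D$ to $\|\Pi y\|_1$, yielding $\|\Pi y\|_1 = \Omega(\|y_T\|_1/D) = 1/\bigO(d^2\log^2 d)$ for $D = \poly(d)$ chosen large enough to absorb both the net cardinality and the light-noise perturbation. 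The main obstacle is precisely this heavy case: Cauchy variables concentrate only polynomially from below, so matching the $(1/\epsilon)^d$-sized net forces a delicate joint choice of $\tau$, $D$, and the net resolution, and in particular the light-noise Cauchy of scale $\bigO(\tau\log s)$ in a heavy bucket must be prevented from swamping a borderline-heavy signal of comparable magnitude.
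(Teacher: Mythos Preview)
Your upper bound and your treatment of $Y_\ell$ are essentially the paper's: Auerbach basis, Lemma~\ref{lemma:cauchy_upper} applied entrywise to $\Pi U$ for the $\bigO(d\log d)$ dilation, and Lemma~\ref{lemma:cauchy_lower} on the hashed light mass combined with an $\epsilon$-net for the contraction on $Y_\ell$. That part is fine.

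The heavy case $Y_h$ has a real gap. Your plan is: use $\Pr[|C|<1/D]=\bigO(1/D)$ together with a Chernoff bound over the $|T|$ heavy indices to conclude that a constant fraction of them have $|c_j|\ge 1/D$, and then union-bound over the net. But the event ``a $(1-\delta)$-fraction of $j\in T$ have $|c_j|\ge 1/D$'' does not depend on $y$ at all---it is one global event on the Cauchy draws---so there is nothing to union-bound over the net. Worse, even when that event holds, it does \emph{not} imply $\sum_{j\in T}|c_j||y_j|=\Omega(\|y_T\|_1/D)$: an adversarial $y\in Y_h$ can place all of its heavy mass on the $\delta|T|$ ``bad'' indices where $|c_j|$ is tiny, and your lower bound collapses. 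Since you need the bound simultaneously for every $y$, a constant-fraction guarantee on the indices is worthless without control on where $\|y_T\|_1$ sits. This is exactly the obstacle you flag at the end, and the Chernoff route does not resolve it.

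The paper sidesteps the net entirely on $Y^H$. It conditions on three global events: (i) the heavy rows are perfectly hashed ($|T|^2\ll s$, birthday); (ii) $\min_{j\in T}|c_j|\ge \omega_3/(d^2\log^2 d)$, which holds with constant probability because $|T|\cdot\Pr[|c|<t]=\bigO(|T|\,t)$ and $|T|=\bigO(d/\tau)$; and (iii) the total norm $|\Pi U^{\hat L}|_1$ of the light rows landing in heavy buckets is $\bigO(1/(d^2\log^2 d))$, handled by Markov on $|U^{\hat L}|_1$ followed by Lemma~\ref{lemma:cauchy_upper}. Given (i)--(iii), one has deterministically, for every $y=Ux\in Y^H$,
\[
\|\Pi y\|_1 \;\ge\; \sum_{j\in T}|c_j||y_j| \;-\; |\Pi U^{\hat L}|_1\,\|x\|_\infty
\;\ge\; \Bigl(\min_{j\in T}|c_j|\Bigr)\,\|y^H\|_1 \;-\; |\Pi U^{\hat L}|_1
\;=\; \Omega\!\bigl(1/(d^2\log^2 d)\bigr)\,\|y\|_1 .
\]
The point is that bounding the \emph{minimum} of the heavy Cauchys (not a constant fraction) is what makes the inequality uniform in $y$, so no $\epsilon$-net is needed on $Y^H$; and the light-noise in heavy buckets is controlled globally through $|\Pi U^{\hat L}|_1$ rather than bucket-by-bucket.
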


\noindent
The construction of the $\ell_1$ subspace embedding matrix is different than its
$\ell_2$ norm counterpart only by the diagonal elements of $D$ (or $C$): whereas
we use $\pm 1$ for the $\ell_2$ norm, we use Cauchy variables for the $\ell_1$
norm. 
The proof of Theorem~\ref{thm:sparse_l1} uses the technique of splitting
coordinates, the fact that the Cauchy distribution is $1$-stable, and the upper
and lower tail tail inequalities regarding the Cauchy distribution from
Lemmas~\ref{lemma:cauchy_upper} and~\ref{lemma:cauchy_lower}.
See Appendix~\ref{sec:proof_l1} for a complete proof.

\noindent
\textbf{Remark.} 
As mentioned above, the $\bigO(\nnz(A))$ running time is optimal.
Whether the distortion $\bigO(d^3 \log^3 d)$ is optimal is still an open
question. 
However, for the same construction of $\Pi$, we can provide a ``bad'' case that
provides a lower bound.
Choose $A =
\begin{pmatrix}
  I_d & \mathbf{0}
\end{pmatrix}^T$.
Suppose that $s$ is sufficiently large such that with an overwhelming
probability, the top $d$ rows of $A$ are perfectly hashed, i.e., $\|\Pi A x\|_1
= \sum_{k=1}^d |c_k||x_k|$, $\forall x \in \R^d$, where $c_k$ is the $k$-th
diagonal of $C$.
Then, the distortion of $\Pi$ is $\max_{k \leq d} |c_k| / \min_{k \leq d} |c_k|
\approx \bigO(d^2)$. 
Therefore, at most an $\bigO(d \log^3 d)$ factor of the distortion is due to
artifacts in our analysis.

Our input-sparsity time $\ell_1$ subspace embedding of
Theorem~\ref{thm:sparse_l1} improves the $\bigO(\nnz(A) \cdot d \log d)$-time
embedding by Sohler and Woodruff~\cite{SW11} and the $\bigO(n d \log n)$-time 
embedding of Clarkson et al.~\cite{CDMMMW13_SODA}.
In addition, by combining Theorem~\ref{thm:sparse_l1} and
Lemma~\ref{lemma:fast_sampling}, we can compute a $(1\pm\epsilon)$-distortion
embedding in $\bigO(\nnz(A) \cdot \log n)$ time, i.e., in \emph{nearly} 
input-sparsity time.

\begin{theorem}[($1\pm\epsilon$)-distortion Embedding for $\ell_1$]
  \label{thm:sparse_l1-eps}
  Given $A \in \R^{n \times d}$, it takes $\bigO(\nnz(A) \cdot \log n)$ time to
  compute a sampling matrix $S \in \R^{s \times n}$ with $s = \bigO(\poly(d)
  \log(1/\epsilon) / \epsilon^2)$ such that with a constant probability, $S$
  embeds $\A_1$ into $(\R^{s}, \|\cdot\|_1)$ with distortion $1 \pm \epsilon$.
\end{theorem}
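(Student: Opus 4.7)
The plan is to combine Theorem~\ref{thm:sparse_l1} (the input-sparsity time low-distortion $\ell_1$ embedding) with Lemma~\ref{lemma:fast_sampling} (the fast subspace-preserving sampling procedure). Both pieces are already proved in the paper, so the proof is essentially a careful composition, with the main task being to verify the running time bookkeeping and that $\bar{\kappa}_1$ of the intermediate conditioned basis is $\poly(d)$.

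First I would apply Theorem~\ref{thm:sparse_l1} to obtain, in $\bigO(\nnz(A))$ time, an embedding matrix $\Pi \in \R^{s_0 \times n}$ with $s_0 = \bigO(d^5 \log^5 d)$ such that $\Pi$ has distortion $\bigO(\poly(d))$ on $\A_1$ with constant probability. Since $\Pi A$ has only $s_0 = \poly(d)$ rows, I can then compute its QR decomposition $\Pi A = QR$ in $\bigO(s_0 d^2) = \poly(d)$ time, which is subsumed by the $\bigO(\nnz(A))$ term for $n \gg \poly(d)$. By the remark immediately following the definition of a low-distortion $\ell_p$ subspace embedding (applied with $p = 1$), the matrix $A R^{-1}$ is well-conditioned in the $\ell_1$ norm, so by Lemma~\ref{lemma:kappa_equiv} we have $\bar{\kappa}_1(A R^{-1}) = \bigO(\poly(d))$.

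Next, I would feed this $R$ into Lemma~\ref{lemma:fast_sampling} with $p = 1$. The lemma produces, in $\bigO(\nnz(A) \cdot \log n)$ additional time, a sampling matrix $S \in \R^{s \times n}$ of sample size
\begin{equation*}
  s = \bigO\bigl( \bar{\kappa}_1(A R^{-1}) \cdot d^{|1/2 - 1| + 1} \log(1/\epsilon) / \epsilon^2 \bigr) = \bigO\bigl( \poly(d) \log(1/\epsilon) / \epsilon^2 \bigr),
\end{equation*}
such that $S$ is a $(1 \pm \epsilon)$-distortion embedding of $\A_1$ with constant probability. Summing the two stages gives the claimed total running time $\bigO(\nnz(A)) + \poly(d) + \bigO(\nnz(A) \cdot \log n) = \bigO(\nnz(A) \cdot \log n)$, and a union bound over the constant failure probabilities of Theorem~\ref{thm:sparse_l1} and Lemma~\ref{lemma:fast_sampling} preserves constant overall success probability.

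There is no single hard step here; the main thing to be careful about is that the conditioning step is performed using only the small matrix $\Pi A$ of size $\poly(d) \times d$, so that the intermediate computation of $R$ does not involve any dependence on $n$ and does not require re-reading $A$. Once $R$ is in hand, the importance-sampling probabilities used by Lemma~\ref{lemma:fast_sampling} are estimated from $A R^{-1}$ in input-sparsity-plus-$\log n$ time without ever forming $A R^{-1}$ densely, which is precisely what makes the overall bound $\bigO(\nnz(A) \cdot \log n)$ rather than $\bigO(\nnz(A) \cdot d)$.
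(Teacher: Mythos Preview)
Your proposal is correct and follows exactly the approach the paper intends: the paper states just before Theorem~\ref{thm:sparse_l1-eps} that the result is obtained ``by combining Theorem~\ref{thm:sparse_l1} and Lemma~\ref{lemma:fast_sampling},'' and Algorithm~\ref{alg:fast_l1_reg} (Steps~2--4) makes explicit the same pipeline of low-distortion embedding, QR on $\Pi A$ to get $R$, and fast subspace-preserving sampling via $R$. Your bookkeeping on $\bar{\kappa}_1(AR^{-1})=\bigO(\poly(d))$, the sample size from Lemma~\ref{lemma:fast_sampling}, and the running-time tally matches the paper's.
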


Our improvements in Theorems~\ref{thm:sparse_l1} and~\ref{thm:sparse_l1-eps}
also propagate to related $\ell_1$-based applications, including the $\ell_1$
regression and the $\ell_1$ subspace approximation problem considered
in~\cite{SW11,CDMMMW13_SODA}.
As before, only the regression improvement is stated here explicitly.
For completeness, we present in Algorithm \ref{alg:fast_l1_reg} our algorithm
for solving $\ell_1$ regression problems in nearly input-sparsity time.
The brief proof of Corollary~\ref{cor:l1reg}, our main quality-of-approximation
result for Algorithm~\ref{alg:fast_l1_reg}, may be found in
Appendix~\ref{sxn:pf-cor-l1reg}.

\begin{algorithm}
  \caption{Fast $\ell_1$ Regression Approximation in $\bigO(\nnz(A) \cdot \log n
    + \poly(d) \log(1/\epsilon)/\epsilon^2)$ Time}
  \label{alg:fast_l1_reg}
  \begin{algorithmic}[1]
    \Require $A \in \R^{n \times d}$ with full column rank, $b \in \R^n$, and
    $\epsilon \in (0, 1/2)$.  

    \Ensure A $(1+\epsilon)$-approximation solution $\hat{x}$ to $\min_{x \in
      \R^d} \|A x - b\|_1$, with a constant probability.

    \State Let $\bar{A} =
    \begin{pmatrix}
      A & b
    \end{pmatrix}$ and denote $\bar{\A}_1$ the $\ell_1$ subspace spanned by
    $A$'s columns and $b$.
    
    \State Compute a low-distortion embedding $\Pi \in \R^{\bigO(\poly(d))
      \times n}$ of $\bar{\A}_1$ (Theorem~\ref{thm:sparse_l1}).
    
    \State Compute $\bar{R} \in \R^{(d+1) \times (d+1)}$ from $\Pi \bar{A}$ such
    that $\bar{A} \bar{R}^{-1}$ is well-conditioned (QR or
    Lemma~\ref{lemma:lp_cond_2d}).

    \State Compute a $(1 \pm \epsilon/4)$-distortion embedding $S \in
    \R^{\bigO(\poly(d) \log(1/\epsilon)/\epsilon^2) \times n}$ of $\bar{\A}_1$
    (Lemma~\ref{lemma:fast_sampling}).

    \State Compute a $(1+\epsilon/4)$-approximate solution $\hat{x}$ to $\min_{x
      \in \R^d} \|S A x - S b\|_1$.
  \end{algorithmic}
\end{algorithm}

\begin{corollary}[Fast $\ell_1$ Regression]
  \label{cor:l1reg}
  With a constant probability, Algorithm~\ref{alg:fast_l1_reg} computes a
  $(1+\epsilon)$-approximate solution to an $\ell_1$ regression problem in
  $\bigO(\nnz(A) \cdot \log n + \T_1(\epsilon; \poly(d) \log(1/\epsilon) /
  \epsilon^2, d))$ time.
\end{corollary}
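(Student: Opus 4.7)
The plan is to assemble the guarantees of the individual steps of Algorithm~\ref{alg:fast_l1_reg} and combine them via Lemma~\ref{lemma:fast_reg}. The key setup is to augment $A$ to $\bar{A} = [A \mid b]$, so that every residual $Ax-b$ lies in the $(d+1)$-dimensional subspace $\bar{\A}_1$; this is needed so that Lemma~\ref{lemma:fast_reg} applies to the full $\ell_1$ regression problem rather than just to the column span of $A$.

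First I would invoke Theorem~\ref{thm:sparse_l1} on $\bar{A}$ to obtain, with constant probability, a $\poly(d)$-distortion embedding $\Pi$ of $\bar{\A}_1$ in $\bigO(\nnz(\bar{A})) = \bigO(\nnz(A))$ time. By the remark following the definition of a low-distortion $\ell_p$ subspace embedding, taking the $R$-factor $\bar{R}$ of the QR decomposition of $\Pi \bar{A}$ makes $\bar{A}\bar{R}^{-1}$ well-conditioned in the $\ell_1$ norm (alternatively, one may use Lemma~\ref{lemma:lp_cond_2d} directly on $\Pi\bar{A}$); computing $\bar{R}$ costs only $\poly(d)$ time because $\Pi\bar{A}$ has $\poly(d)$ rows and $d+1$ columns. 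Feeding $\bar{R}$ into Lemma~\ref{lemma:fast_sampling} with precision $\epsilon/4$ then yields, in $\bigO(\nnz(A)\log n)$ time, a sampling matrix $S$ with $\bigO(\poly(d)\log(1/\epsilon)/\epsilon^2)$ rows that, with constant probability, is a $(1\pm\epsilon/4)$-distortion embedding of $\bar{\A}_1$.

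For correctness, Lemma~\ref{lemma:fast_reg} applied to any optimum $x_S^\ast$ of the sampled problem gives $\|Ax_S^\ast-b\|_1 \le \frac{1+\epsilon/4}{1-\epsilon/4} f^\ast$. Since Step~5 only produces a $(1+\epsilon/4)$-approximate solution $\hat{x}$ to $\min_{x}\|SAx-Sb\|_1$, applying the lower distortion bound on the residual of $\hat{x}$ and the upper distortion bound on the residual of $x_S^\ast$ yields
\begin{equation*}
\|A\hat{x}-b\|_1 \;\le\; \frac{1}{1-\epsilon/4}\,\|SA\hat{x}-Sb\|_1 \;\le\; \frac{1+\epsilon/4}{1-\epsilon/4}\,\|SAx_S^\ast-Sb\|_1 \;\le\; \frac{(1+\epsilon/4)^2}{1-\epsilon/4}\, f^\ast.
\end{equation*}
A direct computation shows that for $\epsilon\in(0,1/2)$ the trailing ratio is at most $1+\epsilon$, yielding the claimed approximation. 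The total running time sums $\bigO(\nnz(A))$ (Step~2), $\poly(d)$ (Step~3), $\bigO(\nnz(A)\log n)$ (Step~4), and $\T_1(\epsilon;\poly(d)\log(1/\epsilon)/\epsilon^2,d)$ (Step~5), matching the bound in the statement.

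The main obstacle is bookkeeping rather than any new mathematical ingredient: one must verify the inequality $\frac{(1+\epsilon/4)^2}{1-\epsilon/4}\le 1+\epsilon$ on $(0,1/2)$, and take a union bound over the constant-probability success events of Theorem~\ref{thm:sparse_l1}, Lemma~\ref{lemma:fast_sampling}, and the Step~5 subroutine. Each of these probabilities can be boosted to an arbitrarily high constant at constant multiplicative overhead, so the overall success probability remains a constant and the runtime is unaffected.
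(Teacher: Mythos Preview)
Your proposal is correct and follows essentially the same route as the paper's proof: both invoke Theorem~\ref{thm:sparse_l1} and Lemma~\ref{lemma:fast_sampling} for the constant-probability guarantees of Steps~2 and~4, derive the identical chain of inequalities $\|A\hat{x}-b\|_1 \le \tfrac{(1+\epsilon/4)^2}{1-\epsilon/4}\,f^\ast \le (1+\epsilon)f^\ast$ (valid for $\epsilon<1/2$), and sum the per-step running times in the same way. One minor wording slip: the final inequality in your chain is justified not by the upper distortion bound on the residual of $x_S^\ast$, but by optimality of $x_S^\ast$ in the sampled problem combined with the upper distortion bound on the residual of the true optimum $x^\ast$; your displayed inequalities are nonetheless correct.
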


\noindent
\textbf{Remark.}
For readers familiar with the impossibility results for dimension reduction in
$\ell_1$~\cite{CS02,LN04,BC05}, note that those results apply to arbitrary point
sets of size $n$ and are interested in embeddings that are ``oblivious,'' in
that they do not depend on the input data.
In this paper, we only consider points in a subspace, and the
subspace-preserving sampling procedure of~\cite{DDHKM09_lp_SICOMP} that we use
is data-dependent.  

\section{Main Results for $\ell_p$ Embedding}
\label{sxn:lp} 

In this section, we use the properties of $p$-stable distributions to generalize
the input-sparsity time $\ell_1$ subspace embedding to $\ell_p$ norms, for $p
\in (1, 2)$. 
Generally, $\D_p$ does not have explicit PDF/CDF, which increases the difficulty
for theoretical analysis. 
Indeed, the main technical difficulty here is that we are not aware of $\ell_p$
analogues of Lemmas~\ref{lemma:cauchy_upper} and~\ref{lemma:cauchy_lower} that
would provide upper and lower tail inequality for $p$-stable distributions.
(Indeed, even Lemmas~\ref{lemma:cauchy_upper} and~\ref{lemma:cauchy_lower} were
established only recently~\cite{CDMMMW13_SODA}.)

Instead of analyzing $\D_p$ directly, for any $p \in (1, 2)$, we
establish an order among the Cauchy distribution, the $p$-stable distribution,
and the Gaussian distribution, and then we derive upper and lower tail
inequalities for the $p$-stable distribution similar to the ones we used to
prove Theorem~\ref{thm:sparse_l1}.
We state these technical results here since they are of independent interest.
We start with the following lemma, which is proved in
Appendix~\ref{sxn:pf-equiv} and which establishes this order.

\begin{lemma}
  \label{lemma:equiv}
  For any $p \in (1, 2)$, there exist constants $\alpha_p > 0$ and $\beta_p > 0$
  such that
  \begin{equation*}
    \alpha_p |C| \succeq |X_p|^p \succeq \beta_p |G|^2, \quad 
  \end{equation*}
  where $C$ is a standard Cauchy variable, $X_p \sim \D_p$, $G$ is a standard
  Gaussian variable.
  By ``$X \succeq Y$'' we mean $\Pr[X \geq t] \geq \Pr[Y \geq t],\ \forall t\in
  \R$, i.e., $F_X(t) \leq F_Y(t),\ \forall t \in \R$, where $F(\cdot)$ is the
  corresponding~CDF.
\end{lemma}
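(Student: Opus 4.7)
The plan is to prove the two stochastic dominance relations separately by comparing the distribution functions of $|X_p|^p$, $\alpha_p|C|$, and $\beta_p|G|^2$ on $(0,\infty)$. Recall that $Y_1 \succeq Y_2$ is equivalent to $F_{Y_1}(t) \leq F_{Y_2}(t)$ for every $t$, so after choosing suitable scalars it suffices in each case to verify a pointwise CDF inequality. The three facts from stable-law theory I will invoke are: (i) $X_p$ has a bounded, continuous, strictly positive density $f_p$ on $\R$; (ii) there exists a constant $C_p > 0$ with $\Pr[|X_p| > s] = C_p s^{-p}(1 + o(1))$ as $s \to \infty$ (the standard heavy-tail asymptotic obtained by inverting the characteristic function $e^{-|t|^p}$), giving $\Pr[|X_p|^p > t] = C_p t^{-1}(1+o(1))$; and (iii) near the origin, $\Pr[|X_p| \leq s] = 2 f_p(0)\, s\, (1+o(1))$, so that $\Pr[|X_p|^p \leq t] = 2 f_p(0)\, t^{1/p}(1+o(1))$ as $t \to 0^+$.

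For the upper dominance $\alpha_p|C| \succeq |X_p|^p$, define $\phi(t) := t \big/ F_{|C|}^{-1}\bigl(1 - \Pr[|X_p|^p > t]\bigr)$; the desired inequality is equivalent to $\alpha_p \geq \phi(t)$ for every $t > 0$. Using the explicit form $F_{|C|}^{-1}(1-q) = \cot(\pi q/2)$, fact (ii) yields $\lim_{t \to \infty}\phi(t) = \pi C_p/2$, while fact (iii) together with $\cot(\pi(1-u)/2) = \tan(\pi u/2) \sim \pi u/2$ as $u \to 0^+$ gives $\phi(t) \sim t^{1 - 1/p}/(\pi f_p(0)) \to 0$ as $t \to 0^+$, since $1 - 1/p > 0$. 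Being continuous and positive on $(0, \infty)$ with finite endpoint limits, $\phi$ is bounded, and one takes $\alpha_p := \sup_{t > 0}\phi(t)$.

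For the lower dominance $|X_p|^p \succeq \beta_p|G|^2$, define $\psi(t) := t \big/ F_{|G|^2}^{-1}\bigl(1 - \Pr[|X_p|^p > t]\bigr)$; the desired inequality is equivalent to $\beta_p \leq \psi(t)$ for every $t > 0$. At infinity, Mills' ratio gives $F_{|G|^2}^{-1}(1 - q) \sim -2\log q$ as $q \to 0^+$, so combined with (ii) we have $\psi(t) \sim t/(2\log t) \to \infty$. Near zero, $\Pr[|G|^2 \leq s] \sim \sqrt{2s/\pi}$ yields $F_{|G|^2}^{-1}(u) \sim \pi u^2/2$ for small $u$, and combined with (iii) this gives $\psi(t) \sim t^{1 - 2/p}/(2\pi f_p(0)^2) \to \infty$ since $1 - 2/p < 0$ for $p \in (1,2)$. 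Continuity and strict positivity of $\psi$ on $(0,\infty)$ together with these two limits imply $\inf_{t>0} \psi(t) > 0$, and we set $\beta_p$ to be this infimum.

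The main obstacle is the absence of a closed form for the CDF of $\D_p$, which forces the argument to be a soft combination of the asymptotic behaviors at $0$ and $\infty$ together with continuity on intermediate intervals. Consequently the constants $\alpha_p, \beta_p$ are only produced qualitatively and may depend poorly on $p$ as $p \to 1^+$ or $p \to 2^-$; this is acceptable for the application in Theorem~\ref{thm:sparse_lp}, where they are absorbed into the hidden constants of the $\bigO(\poly(d))$ distortion bound.
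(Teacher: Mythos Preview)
Your proof is correct and follows essentially the same approach as the paper: compare the survival functions of $|X_p|^p$, $|C|$, and $|G|^2$ by matching their tail asymptotics at infinity (all behaving like $c\,t^{-1}$ for $|X_p|^p$ and $|C|$, and exponentially for $|G|^2$) and their behavior near the origin (governed by the finite positive density $f_p(0)$), then use continuity on the intermediate range to patch the two regimes together. Your packaging via the functions $\phi$ and $\psi$ is a bit more explicit than the paper's version, which simply picks a threshold $t_1$, obtains one constant for $t>t_1$ from the tail asymptotic and another for $t\le t_1$ from compactness, and takes the maximum; but the content is the same.
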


\noindent
Our numerical results suggest that the constants $\alpha_p$ and $\beta_p$ 
are not too far away from $1$.
See Figure~\ref{fig:stable_cdf}, which plots of the CDFs of $|X_p/2|^p$ for 
$p=1,0, 1.1, \ldots, 2.0$, based on which we conjecture 
$|X_{p_1}/2|^{p_1} \succeq |X_{p_2}/2|^{p_2}$, for all 
$1 \leq p_1 \leq p_2 \leq 2$.
This implies that $2^{p-1} |C| \succeq |X_p|^p$ and 
$|X_p|^p \succeq 2^{p-2} |X_2|^2 \simeq 2^{p-1} |G|^2$, 
which therefore provides a value for the constants $\alpha_p$ and $\beta_p$.

\begin{figure}
  \centering
  \includegraphics[width=0.7\textwidth]{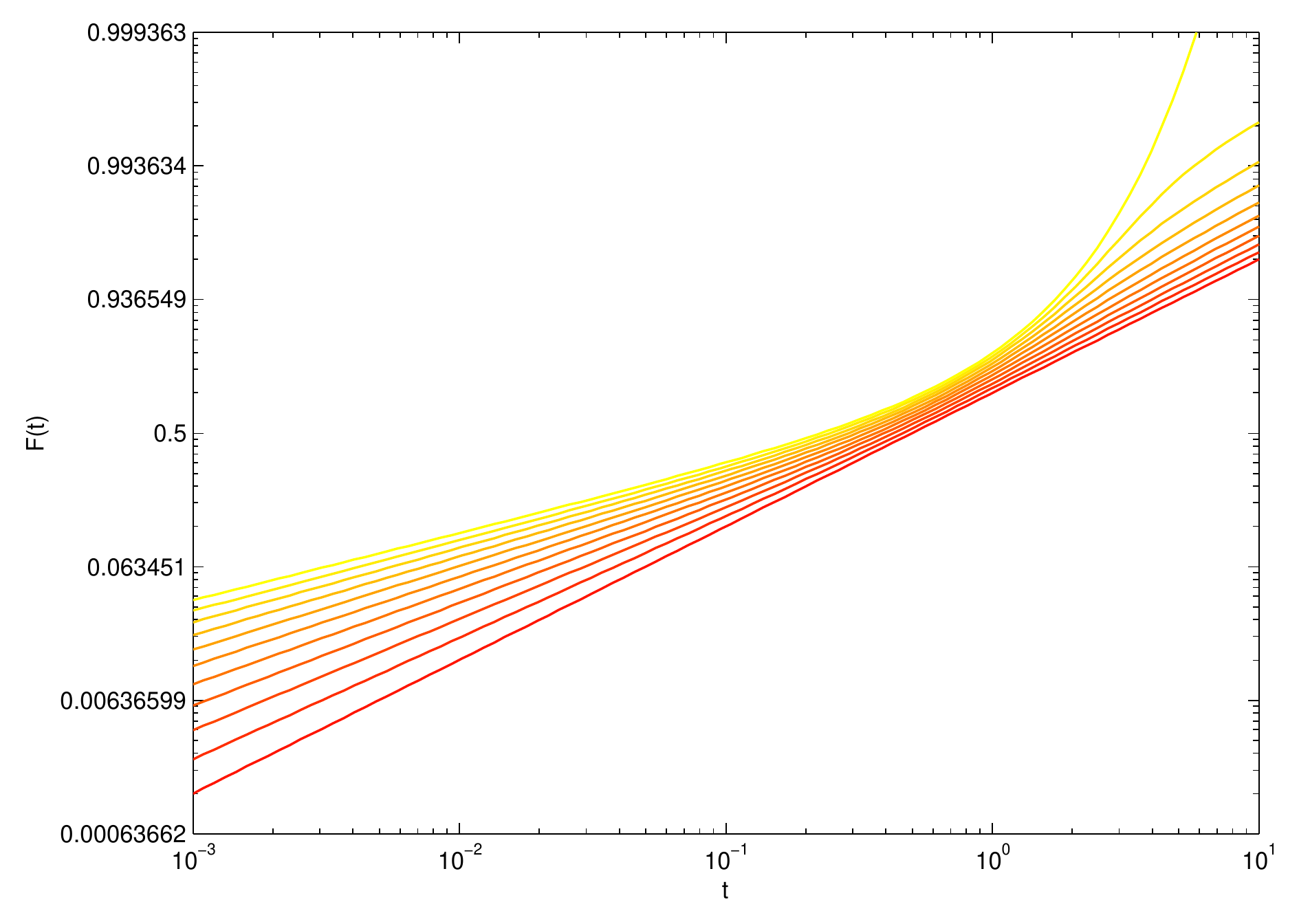}
  \caption{The CDFs ($F(t)$) of $|X_p/2|^p$ for $p = 1.0$ (bottom, i.e., red or
    dark gray), $1.1, \ldots, 2.0$ (top, i.e., yellow or light gray), where $X_p
    \sim \D_p$ and the scales of the axes are chosen to magnify the upper (as $t
    \to \infty$) and lower (as $t \to 0$) tails.
    These empirical results suggest $|X_{p_1}/2|^{p_1} \succeq
    |X_{p_2}/2|^{p_2}$ for all $1 \leq p_1 \leq p_2 \leq 2$.}
  \label{fig:stable_cdf}
\end{figure}

Lemma~\ref{lemma:equiv} suggests that we can use Lemma~\ref{lemma:cauchy_upper}
(regarding Cauchy random variables) to derive upper tail inequalities for
general $p$-stable distributions and that we can use
Lemma~\ref{lemma:gaussian_lower} (regarding Gaussian variables) to derive lower
tail inequalities for general $p$-stable distributions.
The following two lemmas establish these results; the proofs of these lemmas are
provided in Appendix~\ref{sxn:pf-lp-stable-upper} and
Appendix~\ref{sxn:pf-lp-stable-lower}, respectively.

\begin{lemma}[Upper Tail Inequality for $p$-stable Distributions]
  \label{lemma:stable_upper}
  Given $p \in (1, 2)$, for $i=1,\ldots,m$, let $X_i$ be $m$ (not necessarily
  independent) random variables sampled from $\D_p$, and $\gamma_i > 0$ with
  $\gamma = \sum_i \gamma_i$.
  Let $X = \sum_i \gamma_i |X_i|^p$.
  Assume that $m \geq 3$.
  Then for any $t \geq 1$,
  \begin{equation*}
    \Pr[X \geq t \alpha_p \gamma] \leq \frac{2 \log (mt)}{t}.
  \end{equation*}
\end{lemma}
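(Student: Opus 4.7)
The plan is to use Lemma~\ref{lemma:equiv}, which gives the stochastic dominance $\alpha_p |C| \succeq |X_p|^p$ where $C$ is standard Cauchy, to reduce the problem to an instance covered by the Cauchy upper tail inequality (Lemma~\ref{lemma:cauchy_upper}). Since $X = \sum_i \gamma_i |X_i|^p$ is a monotone (nondecreasing in each argument) function of the $|X_i|^p$'s, and since both Lemma~\ref{lemma:cauchy_upper} and the statement to be proved permit arbitrary (possibly dependent) joint distributions, we have substantial freedom in choosing a coupling.

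First I would convert the stochastic ordering into a quantile-function comparison: the statement $\alpha_p |C| \succeq |X_p|^p$ is equivalent to $F_{|X_p|^p}(t) \geq F_{\alpha_p |C|}(t)$ for all $t \in \R$, and hence to $F^{-1}_{\alpha_p |C|}(u) \geq F^{-1}_{|X_p|^p}(u)$ for all $u \in (0,1)$. Next I would build an explicit coupling: on a common probability space carrying the given joint distribution of $(|X_1|^p,\ldots,|X_m|^p)$, set $U_i = F_{|X_i|^p}(|X_i|^p)$ (using a standard randomization trick at any atoms so that $U_i$ is uniform on $[0,1]$), and define $|C_i| = F^{-1}_{|C|}(U_i)$. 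Then each $|C_i|$ is marginally distributed as the absolute value of a standard Cauchy variable, and by the quantile ordering above we have $\alpha_p |C_i| \geq |X_i|^p$ pointwise for every $i$.

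Summing these pointwise inequalities with the nonnegative weights $\gamma_i$ gives $X = \sum_i \gamma_i |X_i|^p \leq \alpha_p \sum_i \gamma_i |C_i|$ almost surely under this coupling, so
\begin{equation*}
  \Pr\bigl[X \geq t \alpha_p \gamma\bigr] \leq \Pr\Bigl[\sum_i \gamma_i |C_i| \geq t \gamma\Bigr].
\end{equation*}
Now I apply Lemma~\ref{lemma:cauchy_upper} to the (not necessarily independent) Cauchy variables $C_i$ constructed above; under the hypothesis $m \geq 3$ and $t \geq 1$, its simplified conclusion yields $\Pr[\sum_i \gamma_i |C_i| \geq t \gamma] \leq 2 \log(mt)/t$, which is exactly the desired bound.

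The only real subtlety is the coupling step: stochastic dominance of marginals does not automatically yield dominance for sums of dependent variables, so one must construct an explicit joint coupling in which the dominance holds pointwise. This is handled cleanly by the quantile transform sketched above, after which the result follows immediately from the already-established Cauchy upper tail bound; no new tail estimates are needed.
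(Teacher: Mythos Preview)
Your proposal is correct and follows essentially the same approach as the paper: both construct, via the CDF/quantile transform, Cauchy variables $C_i$ coupled to the given $X_i$ so that $\alpha_p |C_i| \geq |X_i|^p$ pointwise, then invoke Lemma~\ref{lemma:cauchy_upper}. The paper writes the coupling as $C_i = F_c^{-1}(F_p(X_i))$ directly on the $X_i$, while you work with $|X_i|^p$ and $|C_i|$; since $\D_p$ is continuous, your randomization-at-atoms remark is unnecessary, but the argument is otherwise the same.
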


\begin{lemma}[Lower Tail Inequality for $p$-stable Distributions]
  \label{lemma:stable_lower}
  For $i=1,\ldots,m$, let $X_i$ be independent random variables sampled from
  $\D_p$, and $\gamma_i \geq 0$ with $\gamma = \sum_i \gamma_i$. Let $X
  = \sum_i \gamma_i |c_i|$. Then,
  \begin{equation*}
    \log \Pr[X \leq (1-t) \beta_p \gamma] \leq \frac{- \gamma t^2}{6 \max_i \gamma_i}.
  \end{equation*}
\end{lemma}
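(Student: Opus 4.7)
The plan is to reduce the lower tail bound for sums of $|X_i|^p$ to the already-proven Gaussian lower tail bound in Lemma~\ref{lemma:gaussian_lower}, using the stochastic dominance $|X_p|^p \succeq \beta_p |G|^2$ supplied by Lemma~\ref{lemma:equiv}. The key observation is that stochastic dominance is preserved by nonnegative linear combinations as long as one can couple the variables pointwise, and for independent coordinates such a coupling is easy to manufacture via the inverse-CDF trick.

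Concretely, I would first invoke the standard fact that $Y \succeq Z$ for real-valued random variables is equivalent to the existence of a coupling on a common probability space under which $Y \geq Z$ almost surely. Applying this coordinate-by-coordinate, draw independent uniform variables $U_1,\dots,U_m$ on $[0,1]$ and set
\[
|X_i|^p = F_{|X_p|^p}^{-1}(U_i), \qquad |G_i|^2 = F_{|G|^2}^{-1}(U_i).
\]
Then the $X_i$'s are i.i.d.\ $\mathcal{D}_p$, the $G_i$'s are i.i.d.\ standard Gaussian, and Lemma~\ref{lemma:equiv} guarantees $|X_i|^p \geq \beta_p |G_i|^2$ almost surely for every $i$.

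Next, summing with nonnegative weights $\gamma_i$ gives
\[
X = \sum_{i=1}^m \gamma_i |X_i|^p \;\geq\; \beta_p \sum_{i=1}^m \gamma_i |G_i|^2 \;=:\; \beta_p Y
\]
almost surely, so the event $\{X \leq (1-t)\beta_p \gamma\}$ is contained in $\{Y \leq (1-t)\gamma\}$. Monotonicity of probability then yields
\[
\Pr[X \leq (1-t)\beta_p \gamma] \;\leq\; \Pr[Y \leq (1-t)\gamma],
\]
and Lemma~\ref{lemma:gaussian_lower} applied to $Y=\sum_i \gamma_i |G_i|^2$ with the same weights immediately gives $\log \Pr[Y \leq (1-t)\gamma] \leq -\gamma t^2 /(6\max_i \gamma_i)$, which is exactly the claim.

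The only subtle point, and the one I expect to need the most care, is justifying that marginal stochastic dominance plus independence of the $X_i$'s (and of the $G_i$'s) lifts to pointwise dominance of the whole vector; this is precisely what the coordinate-wise inverse-CDF coupling with a common uniform input $U_i$ per coordinate accomplishes, and the independence of the $U_i$'s preserves the required joint distributions on both sides. Once this coupling is in hand the argument is a one-line monotone transfer, and no new concentration machinery is needed beyond Lemma~\ref{lemma:gaussian_lower} (itself a consequence of Maurer's inequality, as noted in the background section).
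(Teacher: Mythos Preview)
Your proposal is correct and follows essentially the same route as the paper: reduce to the Gaussian lower tail (Lemma~\ref{lemma:gaussian_lower}) via the stochastic dominance $|X_p|^p \succeq \beta_p |G|^2$ from Lemma~\ref{lemma:equiv}. If anything, you are more careful than the paper, which simply asserts the comparison of tail probabilities, whereas you spell out the coordinate-wise inverse-CDF coupling that justifies transferring marginal dominance plus independence to an almost-sure pointwise inequality of the weighted sums.
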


Given these results, here is our main result for input-sparsity time 
low-distortion subspace embeddings for $\ell_p$.
The proof of this theorem is similar to the proof of 
Theorem~\ref{thm:sparse_l1}, except that we replace the $\ell_1$ norm 
$\|\cdot\|_1$ by $\|\cdot\|_p^p$ and use the tail inequalities from 
Lemmas~\ref{lemma:stable_upper} and~\ref{lemma:stable_lower} (rather than 
Lemmas~\ref{lemma:cauchy_upper} and~\ref{lemma:cauchy_lower}).  

\begin{theorem}[Low-distortion Embedding for $\ell_p$]
  \label{thm:sparse_lp}
  Given $A \in \R^{n \times d}$ with full column rank and $p\in(1,2)$, let $\Pi
  = S D \in \R^{s \times n}$ where $S \in \R^{s \times n}$ has each column
  chosen independently and uniformly from the $s$ standard basis vectors of
  $\R^s$, and where $D \in \R^{n \times n}$ is a diagonal matrix with diagonals
  chosen independently from $\D_p$.
  Set $s = \omega d^5 \log^5 d$ with $\omega$ sufficiently large.
  Then with a constant probability, we have
  \begin{equation*}
    1/\bigO((d \log d)^{2/p}) \cdot \|A x\|_p \leq \|\Pi A x\|_p \leq 
    \bigO( (d \log d)^{1/p} ) \cdot \|A x\|_p, \quad \forall x \in \R^{d}.
  \end{equation*}
  In addition, $\Pi A$ can be computed in $\bigO(\nnz(A))$ time.
\end{theorem}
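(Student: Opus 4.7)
The plan is to run the same template as the proof of Theorem~\ref{thm:sparse_l1}, with $\|\cdot\|_1$ replaced by $\|\cdot\|_p^p$ everywhere and the Cauchy tail bounds (Lemmas~\ref{lemma:cauchy_upper},~\ref{lemma:cauchy_lower}) replaced by their $p$-stable counterparts (Lemmas~\ref{lemma:stable_upper},~\ref{lemma:stable_lower}). The latter are available precisely because of the stochastic ordering in Lemma~\ref{lemma:equiv}, and working with $\|\cdot\|_p^p$ is essential since it is in the $p$-th power that $p$-stability aggregates linearly inside each hash bucket.

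First I would reduce to a well-conditioned basis: using Lemma~\ref{lemma:lp_cond_2d}, factor $A = U T$ so that $U$ is $(\alpha,\beta,p)$-conditioned with $\alpha\beta = \bigO(\poly(d))$, and parameterize $y = U z$. The desired two-sided bound on $\|\Pi y\|_p / \|y\|_p$ reduces to the same bound over the compact set $\{z : \|z\|_\infty = 1\}$. Because $\|z\|_\infty \leq \beta \|U z\|_p$ by conditioning, a standard $\ell_\infty$-net of size $\poly(d)^d$ approximates every $y$ up to an additional $\bigO(\poly(d))$ factor in the $\ell_p$ distortion, so I may union-bound a per-$z$ embedding statement over this finite net.

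Next, split the rows of $U$ into a heavy set $H$ of size $\poly(d)$ (those with large $\ell_p$-leverage, bounded because of the conditioning) and a light set $L$. With $s = \omega d^5 \log^5 d$ and $\omega$ large, a birthday-bound argument shows that $S$ perfectly isolates the heavy rows with constant probability. Conditioning on this event and fixing a net point $z$, expand $\|\Pi U z\|_p^p = \sum_{k=1}^s |\sum_{i \in B_k} D_{ii} (U z)_i|^p$ over the $s$ buckets $B_k$. Since the $D_{ii}$'s are i.i.d.\ $\D_p$ and each row lands in exactly one bucket, $p$-stability gives $\sum_{i \in B_k} D_{ii} (U z)_i \simeq \sigma_k X^{(k)}$ with $\sigma_k^p = \sum_{i \in B_k} |(U z)_i|^p$ and independent $X^{(k)} \sim \D_p$. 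Hence $\|\Pi U z\|_p^p = \sum_k \sigma_k^p |X^{(k)}|^p$ with $\sum_k \sigma_k^p = \|U z\|_p^p$, and Lemma~\ref{lemma:stable_upper} with weights $\gamma_k = \sigma_k^p$ yields the upper bound $\|\Pi U z\|_p^p = \bigO(\alpha_p (d \log d) \|U z\|_p^p)$, i.e.\ distortion $\bigO((d\log d)^{1/p})$. For the lower bound I would restrict attention to the ``clean'' buckets that isolate exactly one heavy row and contain no light rows; using the conditioning, these buckets collectively carry at least $\Omega(1/\poly(d)) \cdot \|U z\|_p^p$, and Lemma~\ref{lemma:stable_lower} (powered by Lemma~\ref{lemma:gaussian_lower}) gives the required tail, producing distortion $\bigO((d\log d)^{2/p})$. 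A union bound over the net closes the argument, and $\Pi A$ is computable in $\bigO(\nnz(A))$ time because each nonzero of $A$ is multiplied by one $p$-stable scalar and routed into one bucket.

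The main obstacle will be the lower bound. Random cancellations among the signed $p$-stable weights in a non-clean bucket cannot help, so one must argue that the clean buckets alone already carry $\Omega(1/\poly(d))$ of $\|U z\|_p^p$ uniformly over the net; this is the $\ell_p$ analogue of the structural fact (exploited in the $\ell_1$ proof) that there cannot be too many high-leverage directions, which in $\ell_p$ is encoded by the $(\alpha,\beta,p)$-conditioning of $U$. The heavy/light threshold and the hash width $s = \omega d^5 \log^5 d$ must then be tuned simultaneously so that perfect isolation of the heavy rows and the application of Lemma~\ref{lemma:stable_lower} both succeed with constant probability on every net point.
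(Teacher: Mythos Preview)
Your high-level plan matches the paper exactly: the paper states that the proof of Theorem~\ref{thm:sparse_lp} is the proof of Theorem~\ref{thm:sparse_l1} with $\|\cdot\|_1$ replaced by $\|\cdot\|_p^p$ and Lemmas~\ref{lemma:cauchy_upper}--\ref{lemma:cauchy_lower} replaced by Lemmas~\ref{lemma:stable_upper}--\ref{lemma:stable_lower}. However, two steps of your execution would not go through as written.

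\emph{Upper bound.} Your per-$z$ application of Lemma~\ref{lemma:stable_upper} gives only a polynomial tail, $\Pr[X\ge t\alpha_p\gamma]\le 2\log(st)/t$, which cannot be union-bounded over an $\epsilon$-net of size $\exp(\Theta(d\log d))$. (Note also that for a single $z$ the factor you obtain is $\bigO(\log s)=\bigO(\log d)$, not $\bigO(d\log d)$; the extra $d$ does not arise this way.) The paper does \emph{not} prove the upper bound per net point. It bounds $|\Pi U|_p^p$ once, via a single application of the upper tail to all $sd$ (dependent) $p$-stable entries of $\Pi U$ (the event $\E_U$ in the $\ell_1$ proof), and then combines this with $\|x\|_q\le\|Ux\|_p$ from the Auerbach basis to get an upper bound that holds simultaneously for every $x$. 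This global upper bound is also what supplies the Lipschitz constant needed later in the net argument for the lower bound.

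\emph{Lower bound.} The ``clean buckets that isolate exactly one heavy row and contain no light rows'' step cannot work: with $n\gg s$ every bucket receives light rows, so no such clean bucket exists; and even dropping that requirement, the heavy coordinates need not carry any fixed fraction of $\|Uz\|_p^p$ uniformly in $z$ (take $y=Uz$ with $\|y^L\|_p^p\ge\tfrac12\|y\|_p^p$). The paper instead splits $Y=\{Ux:\|x\|_\infty=1\}$ into $Y^L$ (light-dominated) and $Y^H$ (heavy-dominated) and handles each separately. For $y\in Y^L$ one applies Lemma~\ref{lemma:stable_lower} per net point: the tail there is exponential in $\gamma/\max_i\gamma_i$, and the light threshold together with the event $\E_L$ (bounding $\|Sv^L\|_\infty$) makes $\max_i\gamma_i$ small enough that the union bound over the net succeeds. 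For $y\in Y^H$ one uses perfect hashing of $H$ ($\E_H$), a lower bound on $\min_{j\in H}|D_{jj}|$ ($\E_C$), and an upper bound on the light contamination $|\Pi U^{\hat L}|_p^p$ entering the heavy buckets ($\E_{\hat L}$); this last event is the correct replacement for your ``contain no light rows'' requirement, and the resulting bound is deterministic on those events, so no net is needed for $Y^H$.
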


\noindent
Similar to the $\ell_1$ case, our input-sparsity time $\ell_p$ subspace
embedding of Theorem~\ref{thm:sparse_lp} improves the $\bigO(n d \log n)$-time
embedding of Clarkson et al.~\cite{CDMMMW13_SODA}.
As we mentioned in Section~\ref{sxn:intro}, their construction (and hence the
construction of \cite{CW12sparse_TR}) works for all $p \in [1, \infty)$, but it
requires solving a rounding problem of size $\bigO(n/\poly(d)) \times d$ as an
intermediate step, which may become intractable when $n$ is very large in a
streaming environment, while our construction only needs $\bigO(\poly(d))$
storage.
By combining Theorem~\ref{thm:sparse_lp} and
Lemma~\ref{lemma:fast_sampling}, we can compute a $(1\pm\epsilon)$-distortion
embedding in $\bigO(\nnz(A) \cdot \log n)$ time.

\begin{theorem}[($1\pm\epsilon$)-distortion Embedding for $\ell_p$]
  \label{thm:sparse_lp-eps}
  Given $A \in \R^{n \times d}$ and $p \in [1, 2)$, it takes $\bigO(\nnz(A)
  \cdot \log n)$ time to compute a sampling matrix $S \in \R^{s \times n}$ with
  $s = \bigO(\poly(d) \log(1/\epsilon) / \epsilon^2)$ such that with a constant
  probability, $S$ embeds $\A_p$ into $(\R^{s}, \|\cdot\|_p)$ with distortion $1
  \pm \epsilon$.
\end{theorem}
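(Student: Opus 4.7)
The plan is to combine the low-distortion embedding built in Theorem~\ref{thm:sparse_lp} (or Theorem~\ref{thm:sparse_l1} when $p=1$) with the fast subspace-preserving sampling machinery of Lemma~\ref{lemma:fast_sampling}, using a QR-based conditioning step as the bridge. Concretely, I would execute three stages: (i) produce a $\poly(d)$-distortion embedding $\Pi A$ in input-sparsity time, (ii) extract a conditioning matrix $R$ from $\Pi A$ via QR, and (iii) invoke fast subspace-preserving sampling with $R$ to refine the distortion to $1\pm\epsilon$.

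First, I would apply Theorem~\ref{thm:sparse_lp} (Theorem~\ref{thm:sparse_l1} for $p=1$) to obtain, in $\bigO(\nnz(A))$ time and with constant probability, a matrix $\Pi \in \R^{\poly(d)\times n}$ such that $\Pi A \in \R^{\poly(d)\times d}$ is a low-distortion embedding of $\A_p$ with distortion $\bigO(\poly(d))$. I would then compute a QR factorization $\Pi A = QR$ with $R \in \R^{d\times d}$; since $\Pi A$ has $\poly(d)$ rows, this costs only $\poly(d)$ time, independent of $n$. By the remark following the definition of low-distortion embedding, the matrix $A R^{-1}$ is well-conditioned in the $\ell_p$ norm, i.e., $\bar{\kappa}_p(A R^{-1}) = \bigO(\poly(d))$: the low distortion of $\Pi$ sandwiches $\|A R^{-1} x\|_p$ with $\|\Pi A R^{-1} x\|_p = \|Q x\|_p$, and on $\R^{\poly(d)}$ the $\ell_p$ and $\ell_2$ norms are equivalent up to a $\poly(d)$ factor.

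Second, I would feed this $R$ into Lemma~\ref{lemma:fast_sampling}. That lemma outputs a sampling matrix $S \in \R^{s\times n}$ with $s = \bigO\bigl(\bar{\kappa}_p^p(AR^{-1})\, d^{|p/2-1|+1} \log(1/\epsilon)/\epsilon^2\bigr) = \bigO(\poly(d)\log(1/\epsilon)/\epsilon^2)$ such that, with constant probability, $(1-\epsilon)\|Ax\|_p \le \|SAx\|_p \le (1+\epsilon)\|Ax\|_p$ for all $x \in \R^d$, and it runs in $\bigO(\nnz(A)\cdot \log n)$ additional time. Adding the three stages yields total time $\bigO(\nnz(A)) + \poly(d) + \bigO(\nnz(A)\log n) = \bigO(\nnz(A)\log n)$, and chaining a constant number of constant-probability events still gives constant overall success probability (which can be boosted by repetition if desired).

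The main thing to be careful about is really just the one substantive point, namely that the $R$ obtained from the $\ell_2$ QR of $\Pi A$ does condition $A$ in the $\ell_p$ sense, not only in $\ell_2$; this is exactly the content of the remark after the low-distortion definition and uses crucially that the embedding dimension of $\Pi$ is $\poly(d)$ so that $\ell_p$/$\ell_2$ equivalence on $\R^{\poly(d)}$ costs only another $\poly(d)$ factor. Beyond this verification, there is no new technical obstacle: the theorem is essentially an assembly of Theorems~\ref{thm:sparse_lp} and \ref{thm:sparse_l1} with Lemma~\ref{lemma:fast_sampling}, replacing the $\bigO(nd^3\log n)$ ellipsoidal-rounding conditioning of Lemma~\ref{lemma:lp_cond_2d} by our new input-sparsity time conditioning.
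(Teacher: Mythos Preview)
Your proposal is correct and matches the paper's approach exactly: the paper states that the theorem follows by combining Theorem~\ref{thm:sparse_lp} (or Theorem~\ref{thm:sparse_l1}) with Lemma~\ref{lemma:fast_sampling}, and the intermediate QR-based conditioning step you describe is precisely what Algorithms~\ref{alg:fast_l1_reg} and~\ref{alg:dim} spell out. Your remark about why the $\ell_2$ QR of $\Pi A$ yields $\ell_p$-conditioning of $AR^{-1}$ is the same point the paper makes in the remark following the definition of low-distortion embedding.
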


\noindent
These improvements for $\ell_p$ subspace embedding also propagate to related
$\ell_p$-based applications.
In particular, we can establish an improved algorithm for solving the $\ell_p$
regression problem in nearly input-sparsity time.


\begin{corollary}[Fast $\ell_p$ Regression]
  Given $p \in (1, 2)$, with a constant probability, a
  $(1+\epsilon)$-approximate solution to an $\ell_p$ regression problem can be
  computed in
  $$
  \bigO(\nnz(A) \cdot \log n + \T_p(\epsilon; \poly(d) \log(1/\epsilon) / \epsilon^2, d))
  $$
  time.
\end{corollary}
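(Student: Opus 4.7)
The plan is to mimic Algorithm~\ref{alg:fast_l1_reg} with the $\ell_1$ ingredients replaced by their $\ell_p$ analogues, which we now have in hand.

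First, form the augmented matrix $\bar{A} = \begin{pmatrix} A & b \end{pmatrix} \in \R^{n \times (d+1)}$ and let $\bar{\A}_p$ denote the $\ell_p$ subspace spanned by its columns. Note that $\nnz(\bar{A}) = \bigO(\nnz(A) + n) = \bigO(\nnz(A))$ (assuming without loss of generality that $A$ has no all-zero rows, since these contribute nothing to any $\ell_p$ norm). Apply Theorem~\ref{thm:sparse_lp} to $\bar{A}$: this yields, in $\bigO(\nnz(A))$ time and with constant probability, a matrix $\Pi \in \R^{\poly(d) \times n}$ that is a low-distortion embedding for $\bar{\A}_p$ with distortion $\bigO(\poly(d))$.

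Next, compute $\bar{R} \in \R^{(d+1) \times (d+1)}$ from $\Pi \bar{A}$ so that $\bar{A} \bar{R}^{-1}$ is well-conditioned in the $\ell_p$ norm. This can be done either by taking $\bar{R}$ to be the ``R'' factor of a QR decomposition of $\Pi \bar{A}$ (as noted in the remark following the low-distortion embedding definition), or by applying Lemma~\ref{lemma:lp_cond_2d} to $\Pi \bar{A}$; either way, since $\Pi \bar{A}$ has only $\poly(d)$ rows, this step takes $\poly(d)$ time, independent of $n$. Combining with the low-distortion guarantee on $\Pi$, $\bar{A} \bar{R}^{-1}$ is well-conditioned with $\bar{\kappa}_p = \bigO(\poly(d))$. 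Then invoke Lemma~\ref{lemma:fast_sampling} with accuracy parameter $\epsilon/4$ to obtain, in $\bigO(\nnz(A) \cdot \log n)$ additional time, a sampling matrix $S$ of size $\bigO(\poly(d) \log(1/\epsilon)/\epsilon^2) \times n$ that is a $(1 \pm \epsilon/4)$-distortion embedding for $\bar{\A}_p$.

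Finally, solve the subsampled problem $\min_{x \in \R^d} \|S A x - S b\|_p$ to within relative accuracy $\epsilon/4$; by definition of $\T_p$, this takes $\T_p(\epsilon/4; \poly(d) \log(1/\epsilon)/\epsilon^2, d) = \bigO(\T_p(\epsilon; \poly(d) \log(1/\epsilon)/\epsilon^2, d))$ time. Since $S$ is a $(1 \pm \epsilon/4)$-distortion embedding of the subspace spanned by $A$'s columns together with $b$, Lemma~\ref{lemma:fast_reg} (with the $\epsilon/4$-approximation error from the solver folded in as in the proof of Corollary~\ref{cor:l1reg} in Appendix~\ref{sxn:pf-cor-l1reg}) gives that the returned $\hat{x}$ satisfies
\begin{equation*}
  \|A\hat{x}-b\|_p \le \frac{(1+\epsilon/4)^2}{1-\epsilon/4}\,f^* \le (1+\epsilon)\,f^*
\end{equation*}
for $\epsilon \in (0, 1/2)$. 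Summing the three time contributions yields the claimed bound $\bigO(\nnz(A)\log n + \T_p(\epsilon; \poly(d)\log(1/\epsilon)/\epsilon^2, d))$. The main thing to be careful about is the constant-probability accounting: Theorem~\ref{thm:sparse_lp} and Lemma~\ref{lemma:fast_sampling} each succeed with constant probability, so a union bound (or standard constant-probability amplification) ensures the overall algorithm succeeds with constant probability, which is the only nontrivial step beyond the routine plug-and-play of the $\ell_p$ versions of the earlier $\ell_1$ arguments.
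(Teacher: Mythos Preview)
Your proposal is correct and takes essentially the same approach as the paper: the paper does not give an explicit proof of this corollary, but the intended argument is precisely to rerun Algorithm~\ref{alg:fast_l1_reg} and the proof of Corollary~\ref{cor:l1reg} with Theorem~\ref{thm:sparse_lp} in place of Theorem~\ref{thm:sparse_l1}, which is exactly what you do. The only minor point is that your remark about $\nnz(\bar{A}) = \bigO(\nnz(A))$ via the no-zero-rows assumption is implicit in the paper as well (the same issue arises in Corollary~\ref{cor:l1reg}), so nothing is lost there.
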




For completeness, we also present a result for low-distortion dense 
embeddings for $\ell_p$ that the tail inequalities from 
Lemmas~\ref{lemma:stable_upper} and~\ref{lemma:stable_lower} enable us to 
construct.
See Appendix~\ref{sxn:pf-dense_lp} for a proof of the following theorem.

\begin{theorem}[Low-distortion Dense Embedding for $\ell_p$]
  \label{thm:dense_lp}
  Given $A \in \R^{n \times d}$ with full column rank and $p \in (1, 2)$, let
  $\Pi \in \R^{s \times n}$ whose entries are i.i.d.\ samples from
  $\D_p$.
  If $s = \omega d \log d$ for $\omega$ sufficiently large, with a constant
  probability, we have
  \begin{equation*}
    1/\bigO(1) \cdot \|A x\|_p \leq \|\Pi A x\|_p \leq \bigO((d \log d)^{1/p}) \cdot \|A x\|_p, \quad \forall x \in \R^d.
  \end{equation*}
  In addition, $\Pi A$ can be computed in 
  $\bigO(\nnz(A) \cdot d \log d)$ time.
\end{theorem}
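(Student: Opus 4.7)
The plan is to combine $p$-stability with the tail bounds from Lemmas~\ref{lemma:stable_upper} and~\ref{lemma:stable_lower}, paralleling the structure of the proof of Theorem~\ref{thm:sparse_lp} but without the hashing step. By $p$-stability and the independence of the rows of $\Pi$, for any fixed $y\in\R^n$ the entries $(\Pi y)_1,\ldots,(\Pi y)_s$ are i.i.d.\ samples from $\|y\|_p\,\D_p$, so $\|\Pi y\|_p^p$ has the same distribution as $\|y\|_p^p\sum_{i=1}^s|X_i|^p$ with $X_i\sim\D_p$ i.i.d.; the problem thus reduces to controlling this sum uniformly over $y\in\A_p$.

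For the lower bound, I would apply Lemma~\ref{lemma:stable_lower} with $\gamma_i=1$ and $t=1/2$: this gives $\sum_i|X_i|^p\ge \beta_p s/2$ with probability at least $1-e^{-s/24}$, so for a single $y$ we have $\|\Pi y\|_p\ge(\beta_p s/2)^{1/p}\|y\|_p=\Omega(s^{1/p})\|y\|_p$, which is in particular $\Omega(1)\|y\|_p$. To promote this from one $y$ to all $y\in\A_p$, I would fix a $\gamma$-net of the unit $\ell_p$-sphere of $\A_p$ (of size at most $(3/\gamma)^d$), union-bound over the net, and then use the forthcoming upper bound as a Lipschitz-type buffer to pass from net points to arbitrary $y$; choosing $s=\omega d\log d$ with $\omega$ sufficiently large makes $(3/\gamma)^d e^{-s/24}$ arbitrarily small for a suitable $\gamma$ depending on the upper-bound constant.

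The delicate part is the upper bound, because Lemma~\ref{lemma:stable_upper} has only $\log(mt)/t$ tail decay, so a naive $\gamma$-net union bound on $\A_p$ would blow up exponentially in~$d$. I plan to exploit the $d$-dimensional structure of $\A_p$ by working with a well-conditioned basis $U=AR^{-1}$ obtained via Lemma~\ref{lemma:lp_cond_2d}; by Lemma~\ref{lemma:kappa_equiv} this $U$ is $(\alpha,\beta,p)$-conditioned with $\alpha\beta=\bigO(\poly(d))$. For each column $u_k$, I apply Lemma~\ref{lemma:stable_upper} to the $s$ variables $(\Pi u_k)_i/\|u_k\|_p$ (each marginally $\D_p$) with $t$ of a suitable $\poly(d)$ order, so that the per-column failure probability is $\bigO(1/d)$; a union bound over the $d$ columns then gives $\|\Pi u_k\|_p^p\le \bigO(\poly(d)\cdot s)\|u_k\|_p^p$ simultaneously for all $k$ with constant probability. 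For any $y=Uz\in\A_p$, the triangle inequality and H\"older's inequality in the dual pair $(p,q)$, combined with $|U|_p\le\alpha$ and $\|z\|_q\le\beta\|Uz\|_p$, then yield the claimed upper bound $\|\Pi y\|_p\le \bigO((d\log d)^{1/p})\|y\|_p$.

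The main obstacle is precisely this upper-bound step: because the tail of $\sum_i|X_i|^p$ decays only polynomially, one cannot afford the exponentially many failure events of a direct net argument on the full subspace $\A_p$, so the $d$-column structure of a well-conditioned basis must be exploited, and the $\alpha,\beta$ factors tracked carefully to land on the claimed $\bigO((d\log d)^{1/p})$ distortion. The runtime claim is then immediate: each nonzero entry of $A$ contributes to $s=\bigO(d\log d)$ entries of $\Pi A$, so the total cost of forming $\Pi A$ is $\bigO(\nnz(A)\cdot d\log d)$.
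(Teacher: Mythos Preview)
Your plan follows the paper's outline: $p$-stability reduces $\|\Pi y\|_p^p$ to a sum of $|X_i|^p$'s, the lower bound comes from Lemma~\ref{lemma:stable_lower} pointwise plus an $\epsilon$-net, and the upper bound passes through $|\Pi U|_p$ for a well-conditioned basis via H\"older. Two specific choices in your upper-bound step, however, lose polynomial factors in $d$ and would not deliver the stated $\bigO((d\log d)^{1/p})$ distortion.

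First, the basis. The paper invokes Auerbach's basis (Lemma~\ref{lemma:auerbach}), which is $(d^{1/p},1,p)$-conditioned and in particular has $\beta=1$; on the set $Y=\{Ux:\|x\|_q=1\}$ this gives $\|\Pi y\|_p\le|\Pi U|_p$ and $\|y\|_p\ge1$, so $\|\Pi y\|_p^p\le|\Pi U|_p^p\cdot\|y\|_p^p$ directly. Since the basis is used only in the analysis, mere existence suffices and no algorithm is needed. Your route via Lemma~\ref{lemma:lp_cond_2d} and Lemma~\ref{lemma:kappa_equiv} only guarantees $\alpha\beta=\bigO(d^{1+1/p})$, and that factor appears multiplicatively in your final bound.

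Second, the upper-tail application. Lemma~\ref{lemma:stable_upper} does not assume independence, so the paper applies it \emph{once} to the full double sum
\[
|\Pi U|_p^p \;\simeq\; \sum_{k=1}^d\sum_{i=1}^s \|u_k\|_p^p\,|\tilde X_{ik}|^p,
\]
with $m=sd$ terms and $\gamma=s\,|U|_p^p=sd$, obtaining $|\Pi U|_p^p/s\le\bigO(d\log d)$ with constant probability from $t=\bigO(\log(sd))$. Your per-column scheme must drive each column's failure probability down to $\bigO(1/d)$ before the union bound, which forces $t=\Omega(d\log d)$ and inflates $|\Pi U|_p^p/s$ by an extra factor of~$d$. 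Switching to Auerbach's basis and a single application of Lemma~\ref{lemma:stable_upper} recovers the claimed constant with no other changes to your argument.
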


\noindent
\textbf{Remark.}
The result in Theorem~\ref{thm:dense_lp} is based on a dense $\ell_p$ subspace
embeddings that is analogous to the dense Gaussian embedding for $\ell_2$ and
the dense Cauchy embedding of~\cite{SW11} for $\ell_1$.
Although the running time (if one is simply interested in FLOP counts in RAM) of
Theorem~\ref{thm:dense_lp} is somewhat worse than that of
Theorem~\ref{thm:sparse_lp}, the embedding dimension and condition number
quality (the ratio of the upper bound on the distortion and the lower bound on
the distortion) are much better.
Our numerical implementations, both with the $\ell_1$ norm~\cite{CDMMMW13_SODA}
and with the $\ell_2$ norm~\cite{MSM11_TR}, strongly suggest that the latter
quantities are more important to control when implementing randomized regression
algorithms in large-scale parallel and distributed settings.

\section{Improving the Embedding Dimension}
\label{sxn:improve} 

In Theorem~\ref{thm:sparse_l1} and Theorem~\ref{thm:sparse_lp}, the embedding
dimension is $s = \bigO(\poly(d) \log(1/\epsilon)/\epsilon^2)$, where the
$\poly(d)$ term is a somewhat large polynomial of $d$ that directly multiplies
the $\log(1/\epsilon)/\epsilon^2$ term.  
(See the remark below for comments on the precise value of the $\poly(d)$ term.)
This is not ideal for the subspace embedding and the $\ell_p$ regression,
because we want to have a small embedding dimension and a small subsampled
problem, respectively.
Here, we show that it is possible to decouple the large polynomial of $d$ and
the $\log(1/\epsilon)/\epsilon^2$ term via another round of sampling and
conditioning without increasing the complexity.
See Algorithm~\ref{alg:dim} for details on this procedure.
Theorem~\ref{thm:improved-dim} provides our main quality-of-approximation result
for Algorithm~\ref{alg:dim}; its proof can be found in
Appendix~\ref{sxn:pf-imp-dim}.

\begin{algorithm}
  \caption{Improving the Embedding Dimension}
  \label{alg:dim}
  \begin{algorithmic}[1]
    \Require $A \in \R^{n \times d}$ with full column rank, $p \in [1, 2)$, and
    $\epsilon \in (0, 1)$.

    \Ensure A $(1\pm\epsilon)$-distortion embedding $S \in
    \R^{\bigO(d^{3 + p/2} \log(1/\epsilon)/\epsilon^2) \times n}$ of $\A_p$.

    \State Compute a low-distortion embedding $\tilde{\Pi} \in
    \R^{\bigO(\poly(d)) \times n}$ of $\A_p$ (Theorems~\ref{thm:sparse_l1} and
    \ref{thm:sparse_lp}).
    
    \State Compute $\tilde{R} \in \R^{d \times d}$ from $\tilde{\Pi} A$ such
    that $A \tilde{R}^{-1}$ is well-conditioned (QR or
    Lemma~\ref{lemma:lp_cond_2d}).

    \State Compute a $(1 \pm 1/2)$-distortion embedding $\tilde{S} \in
    \R^{\bigO(\poly(d) \times n)}$ of $\A_p$ (Lemma~\ref{lemma:fast_sampling}).
    
    \State Compute $R \in \R^{d \times d}$ such that $\kappa_p(\tilde{S} A
    R^{-1}) \leq 2 d$ (Theorem~\ref{lemma:lp_cond_2d}).

    \State Compute a $(1 \pm \epsilon)$-distortion embedding $S \in
    \R^{\bigO(d^{3+p/2} \log(1/\epsilon) / \epsilon^2 )\times n}$ of $\A_p$
    (Lemma~\ref{lemma:fast_sampling}).
  \end{algorithmic}
\end{algorithm}

\begin{theorem}[Improving the Embedding Dimension]
  \label{thm:improved-dim}
  Given $p \in [1, 2)$, with a constant probability, Algorithm~\ref{alg:dim}
  computes a $(1\pm\epsilon)$-distortion embedding of $\A_p$ into
  $(\R^{\bigO(d^{3+p/2}\log(1/\epsilon)/\epsilon^2)}, \|\cdot\|_p$) in
  $\bigO(\nnz(A) \cdot \log n)$ time.
\end{theorem}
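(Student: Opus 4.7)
}

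The plan is to verify that Algorithm~\ref{alg:dim} produces an output $S$ that is a valid $(1\pm\epsilon)$-distortion embedding of $\A_p$, and to check that all the intermediate sizes stay $\poly(d)$ so that the overall running time remains $\bigO(\nnz(A)\cdot\log n)$. The conceptual idea is that the first round (Steps~1--3) reduces the problem to a $\poly(d)\times d$ surrogate on which we can afford the expensive $\bigO(d)$-conditioner of Lemma~\ref{lemma:lp_cond_2d}, and the second round (Steps~4--5) then uses that high-quality conditioner to drive the embedding dimension of the final subspace-preserving sample down to $\bigO(d^{3+p/2}\log(1/\epsilon)/\epsilon^2)$.

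First I would record the guarantees of Steps~1--3. By Theorem~\ref{thm:sparse_l1} (for $p=1$) or Theorem~\ref{thm:sparse_lp} (for $p\in(1,2)$), $\tilde\Pi A$ has $\bigO(\poly(d))$ rows and is computable in $\bigO(\nnz(A))$ time; combining with the QR-based remark following the definition of a low-distortion embedding, Step~2 produces $\tilde R$ with $\bar\kappa_p(A\tilde R^{-1})=\bigO(\poly(d))$ in $\bigO(\poly(d))$ additional time. Feeding $A\tilde R^{-1}$ into Lemma~\ref{lemma:fast_sampling} with $\epsilon=1/2$ gives $\tilde S\in\R^{\bigO(\poly(d))\times n}$, a $(1\pm 1/2)$-distortion embedding of $\A_p$, in $\bigO(\nnz(A)\cdot\log n)$ time.

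Next I would pass to the surrogate matrix $\tilde S A$, which has only $\bigO(\poly(d))$ rows, so Lemma~\ref{lemma:lp_cond_2d} applied to $\tilde S A$ in Step~4 runs in time $\bigO(\poly(d))$ independent of $n$ and produces $R$ with $\kappa_p(\tilde S A R^{-1})\le 2d$. The key bridging step is to transfer this bound back to $A R^{-1}$ itself: since $\tilde S$ has distortion $1\pm 1/2$, for every $x\in\R^d$,
\begin{equation*}
\tfrac{2}{3}\,\|\tilde S A R^{-1}x\|_p \;\le\; \|A R^{-1}x\|_p \;\le\; 2\,\|\tilde S A R^{-1}x\|_p,
\end{equation*}
which yields $\kappa_p(A R^{-1})\le 3\,\kappa_p(\tilde S A R^{-1})\le 6d$. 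Converting to $\bar\kappa_p$ via Lemma~\ref{lemma:kappa_equiv} with $\max\{1/2,1/p\}=1/p$ for $p\in[1,2)$ gives $\bar\kappa_p(AR^{-1})=\bigO(d^{1+1/p})$, hence $\bar\kappa_p^{\,p}(AR^{-1})=\bigO(d^{p+1})$.

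Finally, I plug $AR^{-1}$ into Lemma~\ref{lemma:fast_sampling} in Step~5. The sample complexity bound there is
\begin{equation*}
s \;=\; \bigO\!\left(\bar\kappa_p^{\,p}(AR^{-1})\cdot d^{|p/2-1|+1}\cdot\frac{\log(1/\epsilon)}{\epsilon^2}\right)
\;=\; \bigO\!\left(d^{p+1}\cdot d^{2-p/2}\cdot\frac{\log(1/\epsilon)}{\epsilon^2}\right)
\;=\; \bigO\!\left(d^{3+p/2}\cdot\frac{\log(1/\epsilon)}{\epsilon^2}\right),
\end{equation*}
using $|p/2-1|=1-p/2$ on $p\in[1,2)$, and the running time is $\bigO(\nnz(A)\cdot\log n)$. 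Summing the five steps gives the claimed total time. Taking a union bound over the constant-probability success of Steps~1, 3, and 5 (and rescaling constants) gives the overall constant success probability. The main obstacle I anticipate is the bridging argument that turns a good conditioner for the \emph{sampled} matrix $\tilde S A$ into a good conditioner for the \emph{original} $A$; this is where the $(1\pm 1/2)$-distortion of $\tilde S$ is essential, and where one must be careful that the final exponent $3+p/2$ (rather than something larger) really does come out, which forces the choice of $\max\{1/2,1/p\}=1/p$ in Lemma~\ref{lemma:kappa_equiv} for $p\in[1,2)$.
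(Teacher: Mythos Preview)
Your proposal is correct and follows essentially the same route as the paper: it uses the $(1\pm 1/2)$ distortion of $\tilde S$ to transfer $\kappa_p(\tilde S A R^{-1})\le 2d$ into $\kappa_p(A R^{-1})\le 6d$, converts to $\bar\kappa_p(AR^{-1})=\bigO(d^{1+1/p})$ via Lemma~\ref{lemma:kappa_equiv}, and then plugs $\bar\kappa_p^{\,p}\cdot d^{|p/2-1|+1}=\bigO(d^{3+p/2})$ into Lemma~\ref{lemma:fast_sampling}. The paper's own proof is exactly this bridging-plus-conversion argument, only more tersely stated.
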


\noindent
Then, by applying Theorem~\ref{thm:improved-dim} to the $\ell_p$ regression
problem, we can improve the size of the subsampled problem and hence the overall
running time.

\begin{corollary}[Improved Fast $\ell_p$ Regression]
  Given $p \in [1, 2)$, with a constant probability, a
  $(1+\epsilon)$-approximate solution to an $\ell_p$ regression problem can be
  computed in
  \begin{equation*}
    \bigO(\nnz(A) \cdot \log n + \T_p(\epsilon; d^{3+p/2} \log(1/\epsilon) / \epsilon^2, d))
  \end{equation*}
  time.
  The second term comes from solving a subsampled problem of size
  $\bigO(d^{3+p/2} \log(1/\epsilon) / \epsilon^2) \times d$.
\end{corollary}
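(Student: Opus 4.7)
The plan is to combine Theorem~\ref{thm:improved-dim} (the improved-dimension embedding) with Lemma~\ref{lemma:fast_reg} ($\ell_p$ regression via subspace-preserving sampling), in direct analogy with the reduction already used in Corollary~\ref{cor:l1reg} and Algorithm~\ref{alg:fast_l1_reg}. The only change from the earlier fast $\ell_p$ regression corollary is to swap in the better embedding, so no new structural lemma is required.

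First, I would form the augmented matrix $\bar{A} = \bigl(A\ \ b\bigr) \in \R^{n \times (d+1)}$ and apply Algorithm~\ref{alg:dim} with target distortion $\epsilon' := \epsilon/3$ to $\bar{A}$ (not to $A$ alone, since we must preserve $\|Ax-b\|_p$ for every $x$, equivalently $\|\bar{A}\,\bar{x}\|_p$ for every $\bar{x}\in\R^{d+1}$ in the joint subspace). By Theorem~\ref{thm:improved-dim} this produces, in $\bigO(\nnz(\bar{A})\cdot \log n) = \bigO(\nnz(A)\cdot \log n)$ time and with constant probability, a sampling matrix $S\in\R^{s\times n}$ with $s=\bigO((d+1)^{3+p/2}\log(1/\epsilon')/\epsilon'{}^2)=\bigO(d^{3+p/2}\log(1/\epsilon)/\epsilon^2)$ such that
\begin{equation*}
  (1-\epsilon')\|Ax - b\|_p \;\le\; \|S(Ax-b)\|_p \;\le\; (1+\epsilon')\|Ax - b\|_p, \qquad \forall x\in\R^d.
\end{equation*}

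Next, I would invoke Lemma~\ref{lemma:fast_reg}: any optimal (or $(1+\epsilon')$-approximate) solution $\hat{x}$ of the subsampled problem $\min_{x\in\R^d}\|SAx - Sb\|_p$ is a $\tfrac{1+\epsilon'}{1-\epsilon'}$-approximate solution (respectively, $\tfrac{(1+\epsilon')^2}{1-\epsilon'}$-approximate) to the original problem. Solving this subsampled problem of size $s\times d$ to relative accuracy $\epsilon'$ takes $\T_p(\epsilon';\,s,\,d) = \T_p(\epsilon;\,d^{3+p/2}\log(1/\epsilon)/\epsilon^2,\,d)$ time by definition of $\T_p$. A short calculation shows that $\tfrac{(1+\epsilon')^2}{1-\epsilon'}\le 1+\epsilon$ whenever $\epsilon'=\epsilon/3$ and $\epsilon\in(0,1/2)$, so $\hat{x}$ is a $(1+\epsilon)$-approximate solution to the original $\ell_p$ regression problem, as required.

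Adding the two stages gives the claimed running time $\bigO(\nnz(A)\cdot\log n + \T_p(\epsilon; d^{3+p/2}\log(1/\epsilon)/\epsilon^2,\,d))$. The success probability is the product of the constant-probability guarantees of Theorem~\ref{thm:improved-dim} and of the subsampled solver, which is again a constant; if desired it can be boosted by standard independent repetition since one can check solution quality in $\bigO(\nnz(A))$ time by evaluating the residual norm. There is no real obstacle here: the proof is a ``plug-and-play'' composition, and the only mild point to verify carefully is that applying Theorem~\ref{thm:improved-dim} to the $(d+1)$-dimensional augmented subspace (rather than to $\A_p$ itself) keeps the sample size at $\bigO(d^{3+p/2}\log(1/\epsilon)/\epsilon^2)$, which holds because $d+1=\Theta(d)$.
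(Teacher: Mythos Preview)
Your proposal is correct and follows essentially the same approach the paper intends: the corollary is stated immediately after Theorem~\ref{thm:improved-dim} as a direct application, and your argument mirrors the proof of Corollary~\ref{cor:l1reg} (apply the improved embedding to the augmented matrix $\bar{A}=(A\ b)$, then invoke Lemma~\ref{lemma:fast_reg}), with the only new observation being that $(d+1)^{3+p/2}=\bigO(d^{3+p/2})$. There is nothing to add; the paper itself leaves this proof implicit.
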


\noindent
\textbf{Remark.}
We have stated our results in the previous sections as $\poly(d)$ without
stating the value of the polynomial because there are numerous trade-offs
between the conditioning quality and the running time. 
For example, let $p = 1$.
We can use a rounding algorithm instead of QR to compute the $R$ matrix. 
If we use the input-sparsity time embedding with the $\bigO(d)$-rounding
algorithm of~\cite{CDMMMW13_SODA}, then the running time to compute the
$(1\pm\epsilon)$-distortion embedding is $\bigO(\nnz(A) \cdot \log n + d^8 /
\epsilon^2)$ and the embedding dimension is $\bigO(d^{6.5}/\epsilon^2)$
(ignoring $\log$ factors). 
If, on the other hand, we use QR to compute $R$, then the running time is
$\bigO(\nnz(A) \cdot \log n + d^7/\epsilon^2)$ and the embedding dimension is
$\bigO(d^8/\epsilon^2)$. 
However, with the result from this section, the running time is simply
$\bigO(\nnz(A) \cdot \log n + \poly(d) + \T_p(\epsilon; d^{3+p/2}/\epsilon^2,
d))$ and the $\poly(d)$ term can be absorbed by the $\nnz(A)$ term.

\section{Acknowledgments}

The authors want to thank Petros~Drineas for reading a preliminary version of
this paper and pointing out that the embedding dimension in
Theorem~\ref{thm:sparse_l2} can be easily improved from $\bigO(d^4/\epsilon^2)$
to $\bigO(d^2/\epsilon^2)$ using the same technique.
The authors also want to thank Jelani~Nelson and Huy~Nguyen for letting us know
about their independent work on $\ell_2$ embedding.

\bibliographystyle{plain}
\bibliography{sparse_embed}

\appendix
\section{Appendix}

\subsection{Proof of Theorem~\ref{thm:sparse_l2} (($1\pm\epsilon$)-distortion Embedding for $\ell_2$)}
\label{sec:proof_l2}

Let the $n \times d$ matrix $U$ be an orthonormal basis for the range of the $n
\times d$ matrix $A$. 
Rather than proving the theorem by establishing that
$$
(1-\epsilon) \|U z\|_2 \leq \| \Pi U z \|_2 \leq (1+\epsilon) \|U z\|_2
$$ 
holds for all $z \in \R^d$, as is essentially done in, e.g., \cite{DMM06} and
\cite{CW12sparse_TR}, we note that $U^TU=I_d$, and we directly bound the extent
to which the embedding process perturbs this product.
To do so, define
\begin{equation*}
  X = (\Pi U)^T (\Pi U) = U^T D^T S^T S D U.
\end{equation*}
That is,
\begin{equation*}
  x_{kl} = \sum_{i=1}^s \left(\sum_{j=1}^n s_{ij} d_j u_{jk}\right)\left(\sum_{j=1}^n s_{ij} d_j u_{jl} \right), \quad k, l \in \{ 1, \ldots, d \},
\end{equation*}
where $s_{ij}$ is the $(i,j)$-th element of $S$, $d_j$ is the $j$-th diagonal
element of $D$, and $u_{jk}$ is the $(j,k)$-th element of $U$.
We will use the following facts in the proof:
\begin{align*}
  \mathbf{E}[d_{j_1} d_{j_2}] &= \delta_{j_1j_2},\\
  \mathbf{E}[s_{i_1 j_1} s_{i_2 j_2}] &=
  \begin{cases}
    \frac{1}{s^2} & \text{if } j_1 \neq j_2, \\
    \frac{1}{s} & \text{if } i_1 = i_2, j_1 = j_2, \\
    0 & \text{if } i_1 \neq i_2, j_1 = j_2.
  \end{cases}
\end{align*}
We have,
\begin{align*}
  \mathbf{E}[x_{kl}] &= \sum_{i} \sum_{j_1, j_2} \mathbf{E}[ s_{ij_1} d_{j_1}
  u_{j_1k} \cdot s_{ij_2} d_{j_2} u_{j_2l} ] = \sum_{i} \sum_{j} \mathbf{E}[
  s_{ij} u_{jk} u_{jl}] = \sum_{j} u_{jk} u_{jl} = \delta_{kl},
\end{align*}
and we also have
\begin{align*}
  &\mathbf{E}[x_{kl}^2] = \mathbf{E}\left[ \left(\sum_i \left(\sum_{j} s_{ij} d_j u_{jk}\right)\left(\sum_{j} s_{ij} d_j u_{jl} \right)\right)^2 \right] \\
  &= \sum_{i_1, i_2} \mathbf{E} \left[ \left(\sum_{j} s_{i_1j} d_j u_{jk}\right)\left(\sum_{j} s_{i_1j} d_j u_{jl} \right) \left(\sum_{j} s_{i_2j} d_j u_{jk}\right)\left(\sum_{j} s_{i_2j} d_j u_{jl} \right) \right] \\
  &= \sum_{i_1, i_2} \sum_{j_1, j_2, j_3, j_4} \mathbf{E} [ s_{i_1j_1} d_{j_1} u_{j_1k} \cdot s_{i_1j_2} d_{j_2} u_{j_2 l} \cdot s_{i_2j_3} d_{j_3} u_{j_3k} \cdot s_{i_2j_4} d_{j_4} u_{j_4 l} ] \\
  &= \sum_{i_1, i_2}
  \begin{aligned}[t]
    &\left( \sum_{j} \mathbf{E} [ s_{i_1j} u_{jk} \cdot s_{i_1j} u_{j l} \cdot s_{i_2j} u_{jk} \cdot s_{i_2j} u_{j l} ]  \right.\\
    &\quad \null + \sum_{j_1 \neq j_2} \mathbf{E} [ s_{i_1j_1} u_{j_1k} \cdot s_{i_1j_1} u_{j_1 l} \cdot s_{i_2j_2} u_{j_2k} \cdot s_{i_2j_2} u_{j_2 l} ] \\
    &\quad \null + \sum_{j_1 \neq j_2 } \mathbf{E} [ s_{i_1j_1} u_{j_1k}
    \cdot s_{i_1j_2} u_{j_2 l} \cdot s_{i_2j_1} u_{j_1k} \cdot s_{i_2j_2} u_{j_2 l} ] \\
    &\quad \null + \left. \sum_{j_1 \neq j_2} \mathbf{E} [ s_{i_1j_1} u_{j_1k}
      \cdot s_{i_1j_2} u_{j_2 l} \cdot s_{i_2j_2}
      u_{j_2k} \cdot s_{i_2j_1} u_{j_1 l} ] \right) \\
  \end{aligned} \\
  &= \sum_j  u_{jk}^2 u_{jl}^2 + \sum_{j_1 \neq j_2} u_{j_1k} u_{j_1 l} u_{j_2k} u_{j_2 l} + \frac{1}{s} \sum_{j_1 \neq j_2} u_{j_1k}^2 u_{j_2 l}^2 + \frac{1}{s} \sum_{j_1 \neq j_2} u_{j_1k} u_{j_2l} u_{j_2k} u_{j_1l} \\
  &= \left( \sum_j u_{jk} u_{jl} \right)^2 + \frac{1}{s} \left( \left(\sum_j u_{jk}^2 \right)\left(\sum_j u_{jl}^2\right) + \left(\sum_j u_{jk} u_{jl}\right)^2 - 2 \sum_j u_{jk}^2 u_{jl}^2 \right) \\
  &=
  \begin{cases}
    1 + \frac{2}{s} ( 1 - \|U_{*k}\|_4^4 ) & \text{if } k = l,\\
    \frac{1}{s} ( 1 - 2 \langle U_{*k}^2, U_{*l}^2 \rangle ) & \text{if } k \neq l.\\
  \end{cases}
\end{align*}
Given these results, it is easy to obtain that 
\begin{align*}
  \mathbf{E}[\|X - I\|_F^2] = \sum_{k, l} \mathbf{E}[(x_{kl}-\delta_{kl})^2]
  = \frac{2}{s} \left( \sum_k ( 1 - \|U_{*k}\|_4^4 ) + \sum_{k < l} ( 1- 2 \langle U^2_{*k}, U^2_{*l} \rangle) \right) \leq \frac{d^2+d}{s}.
\end{align*}
For any $\delta \in (0, 1)$, set $s = (d^2+d)/(\epsilon^2 \delta)$.
Then, by Markov's inequality,
\begin{equation*}
  \mathbf{Pr}[\|X - I\|_F \geq \epsilon] = \mathbf{Pr}[\|X-I\|_F^2 \geq \epsilon^2] \leq \frac{d^2+d}{\epsilon^2 s} = \delta.
\end{equation*}
Therefore, with probability at least $1-\delta$, we have $\|X - I\|_2 \leq \|X
- I\|_F \leq \epsilon$, which implies
\begin{equation*}
  (1-\epsilon) \|U z\|_2 \leq \|\Pi U z\|_2 \leq (1+\epsilon) \|U z\|_2.
\end{equation*}

\subsection{Proof of Theorem~\ref{thm:sparse_l1} (Low-distortion Embedding for $\ell_1$)}
\label{sec:proof_l1}

We start with the following result, which establishes the existence of the
so-called Auerbach's basis of a $d$-dimensional normed vector space.
For our proof, we will only need its existence and not an algorithm to construct
it.

\begin{lemma} 
  \label{lemma:auerbach}
  (Auerbach~\cite{auerbach1930area}) Let $(\A, \|\cdot\|)$ be a $d$-dimensional
  normed vector space.
  There exists a basis $\{e_1,\ldots,e_d\}$ of $\A$, called Auerbach basis, such
  that $\|e_k\| = 1$ and $\|e^k\|^* = 1$ for $k=1,\ldots,d$, where
  $\{e^1,\ldots,e^n\}$ is a basis of $\A^*$ dual to $\{e_1,\ldots,e_n\}$.
\end{lemma}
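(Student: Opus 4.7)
The plan is to obtain the Auerbach basis via a compactness/maximization argument applied to the determinant. Fix any algebraic basis $\{f_1,\ldots,f_d\}$ of $\A$, and let $B = \{x \in \A : \|x\| \le 1\}$ denote the closed unit ball. Since $\A$ is finite dimensional, $B$ is compact, and so is the product $B^d = B \times \cdots \times B$. Define a function $\phi : B^d \to \R$ by
\begin{equation*}
  \phi(x_1,\ldots,x_d) = |\det M(x_1,\ldots,x_d)|,
\end{equation*}
where $M(x_1,\ldots,x_d)$ is the $d \times d$ matrix whose $k$-th column lists the coordinates of $x_k$ with respect to $\{f_1,\ldots,f_d\}$. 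Then $\phi$ is continuous, and because $B^d$ is compact, $\phi$ attains a maximum at some point $(e_1,\ldots,e_d) \in B^d$. Since $\phi$ is strictly positive on linearly independent tuples (e.g.\ on the rescaled $f_k$'s), the maximum value is strictly positive, and thus $\{e_1,\ldots,e_d\}$ is linearly independent and hence a basis of $\A$.

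Next I would establish $\|e_k\|=1$ for every $k$. Suppose $\|e_k\| < 1$ for some $k$; replacing $e_k$ by $e_k / \|e_k\|$ keeps every coordinate in $B$ and scales the determinant by $1/\|e_k\| > 1$, contradicting maximality. So each $e_k$ lies on the unit sphere. Define the dual basis $\{e^1,\ldots,e^d\} \subset \A^*$ by $e^k(e_l) = \delta_{kl}$; the main remaining task is to prove $\|e^k\|^* = 1$.

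For the dual-norm bound, I would use the multilinearity of the determinant. Fix $k$ and any $y \in B$. Let $M$ be the matrix built from $(e_1,\ldots,e_d)$ and $M'$ the matrix obtained from $M$ by replacing its $k$-th column with the coordinates of $y$. Expanding $\det M'$ along the $k$-th column (or equivalently applying Cramer's rule to the expansion $y = \sum_l e^l(y)\, e_l$) gives $\det M' = e^k(y)\, \det M$. Since $(e_1,\ldots,y,\ldots,e_d) \in B^d$, maximality yields $|\det M'| \le |\det M|$, and therefore $|e^k(y)| \le 1$. Taking the supremum over $y \in B$ gives $\|e^k\|^* \le 1$. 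The reverse inequality is immediate: $e^k(e_k) = 1$ with $\|e_k\| = 1$ forces $\|e^k\|^* \ge 1$. Combining the two bounds completes the proof.

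The only subtle step is the dual-norm computation; the rest is a standard compactness argument. I expect the compactness and the $\|e_k\|=1$ normalization to be routine, while the key insight is recognizing that $e^k(y)$ is precisely the ratio of two column-swapped determinants, so that the very maximality that makes $\{e_k\}$ an Auerbach basis on the primal side automatically enforces the unit dual-norm condition on the coordinate functionals.
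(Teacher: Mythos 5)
Your proof is correct. The paper itself does not supply a proof of this statement; it cites Auerbach's original 1930 thesis as a classical fact and uses only the existence of the basis, so there is no competing argument in the paper to compare against. The compactness/determinant-maximization argument you give — take the tuple in $B^d$ maximizing $|\det M|$, use scaling to force $\|e_k\|=1$, and use column-replacement multilinearity to get $\det M' = e^k(y)\det M$ and hence $\|e^k\|^* \le 1$ — is exactly the standard proof of Auerbach's lemma, and every step is sound (note that strict positivity of the maximum guarantees $e_k \neq 0$, so the normalization $e_k/\|e_k\|$ is legitimate).
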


\noindent
This Auerbach's lemma implies that a $(d, 1, 1)$-conditioned basis matrix of
$\A_1$ exists, which will be denoted by $U$ throughout the proof.
By definition, $U$'s columns are unit vectors in the $\ell_1$ norm 
(thus $|U|_1 = d$, where recall that $|\cdot|_{1}$ denotes the element-wise 
$\ell_{1}$ norm of a matrix) 
and $\|x\|_\infty \leq \|U x\|_1,\ \forall x \in \mathbb{R}^d$.
Denote by $u_j$ the $j$-th row of $U$, $j=1,\ldots,n$.
Define $v_j = \|u_j\|_1$ the $\ell_1$ leverage scores of $A$.
We have $\sum_j v_j =|U|_1 = d$.
Let $\tau > 0$ to be determined later, and define two index sets $H = \{ j \,|\,
v_j \geq \tau \}$ and $L = \{ j \,|\, v_j < \tau \}$.
It is easy to see that $|H| \leq \frac{d}{\tau}$ where $|\cdot|$ is used to
denote the size of a finite set, and $\|v^L\|_\infty \leq \tau$ where
\begin{equation*}
  v^L_j =
  \begin{cases}
    v_j, & \text{if } j \in L \\
    0, & \text{otherwise}
  \end{cases},
  \quad
  j = 1,\ldots,n.
\end{equation*}
Similarly, when an index set appears as a superscript, we mean zeroing out
elements or rows that do not belong to this index set, e.g., $v^L$ and $U^L$.
Define
\begin{equation*}
  Y = \{ y \in \mathbb{R}^n \,|\, y = U x,\ \|x\|_\infty = 1,\ x \in \mathbb{R}^d \}.
\end{equation*}
For any $y = U x \in Y$, we have $\|y\|_1 = \|U x\|_1 \geq \|x\|_\infty = 1$,
\begin{equation*}
  |y_j| = |u_j^T x| \leq \|u_j\|_1 \|x\|_\infty = v_j, \quad j = 1,\ldots,n,
\end{equation*}
and thus $\|y\|_1 \leq \|v\|_1 = d$.
Define $Y^L = \{ y \in Y \,|\, \|y^L\|_1 \geq \frac{1}{2} \|y\|_1\}$ and $Y^H =
Y \backslash Y^L$.
Given $S$, define a mapping $\phi: \{1, \ldots, n\} \to \{1, \ldots, s\}$ such
that $s_{\phi(j), j} = 1$, $j=1,\ldots,n$, and split $L$ into two subsets:
$\hat{L} = \{ j \in L \,|\, \phi(j) \in \phi(H) \}$ and $\bar{L} =
L\backslash\hat{L}$.
Consider these events:
\begin{itemize}
\item $\E_U$: $|\Pi U|_1 \leq \omega_1 d \log d$ for some $\omega_1 > 0$.

\item $\E_L$: $\|S v^L\|_\infty \leq \omega_2/(d \log d)$ for some $\omega_2 > 0$.

\item $\E_H$: $\phi(j_1) \neq \phi(j_2),\ \forall\,j_1 \neq j_2,\ j_1, j_2 \in H$.    

\item $\E_{C}$: $\min_{j \in |H|} |c_j| \geq \omega_3/(d^2 \log^2d)$
  for some $\omega_3 > 0$.

\item $\E_{\hat{L}}$: $|\Pi U^{\hat{L}}|_1 \leq \omega_4/(d^2 \log^2d)$
  for some $\omega_4 > 0$.
\end{itemize}
Recall that we set $s = \omega d^5 \log^5 d$ in Theorem~\ref{thm:sparse_l1}. 
We will show that, with $\omega$ sufficiently large and proper choices of
$\omega_1$, $\omega_2$, $\omega_3$, and $\omega_4$, the event $\E_U$
leads to an upper bound of $\|\Pi y\|_1$ for all $y \in \text{range}(A)$,
$\E_U$ and $\E_L$ lead to a lower bound of $\|\Pi y\|_1$ for
all $y \in Y^L$ with probability at least $0.9$, and $\E_H$,
$\E_{\hat{L}}$, and $\E_C$ together imply an lower bound of
$\|\Pi y\|_1$ for all $y \in Y^H$.

\begin{lemma}
  \label{lemma:upper}
  Provided $\E_U$, we have
  \begin{equation*}
    \|\Pi y\|_1 \leq \omega_1 d \log d \cdot \|y\|_1, 
    \quad \forall y \in \mathrm{range}(A).
  \end{equation*}
\end{lemma}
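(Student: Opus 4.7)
The plan is to reduce the bound for arbitrary $y \in \mathrm{range}(A)$ to the already-assumed entrywise bound on $\Pi U$, using the two defining properties of the Auerbach basis $U$, namely $|U|_1 = d$ together with $\|x\|_\infty \leq \|Ux\|_1$ for every $x \in \R^d$. Since $U$ is a basis for $\A_1 = \mathrm{range}(A)$, any $y \in \mathrm{range}(A)$ can be written uniquely as $y = Ux$ for some $x \in \R^d$, so it suffices to bound $\|\Pi U x\|_1$ by a constant times $\|U x\|_1$.

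First I would expand $\Pi U x = \sum_{k=1}^d x_k (\Pi U)_{*k}$ and apply the triangle inequality for the $\ell_1$ norm, giving $\|\Pi U x\|_1 \leq \sum_{k=1}^d |x_k| \cdot \|(\Pi U)_{*k}\|_1$. Next, I would bound each $|x_k|$ by $\|x\|_\infty$ and pull the latter outside the sum, which recognizes the remaining sum as precisely the entrywise $\ell_1$ norm $|\Pi U|_1 = \sum_{k=1}^d \|(\Pi U)_{*k}\|_1$. Invoking the event $\E_U$ then yields $\|\Pi U x\|_1 \leq \omega_1 d \log d \cdot \|x\|_\infty$.

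Finally, I would use the second Auerbach property $\|x\|_\infty \leq \|U x\|_1 = \|y\|_1$ to replace $\|x\|_\infty$ by $\|y\|_1$ on the right-hand side, yielding the desired inequality $\|\Pi y\|_1 \leq \omega_1 d \log d \cdot \|y\|_1$ uniformly over all $y \in \mathrm{range}(A)$. There is really no obstacle here: the lemma is a deterministic consequence of $\E_U$ combined with the well-conditioning properties of the Auerbach basis, and it serves as the easy ``upper-bound half'' that the rest of the proof will pair with the substantially more delicate lower-bound arguments (for $y \in Y^L$ via $\E_U \cap \E_L$ and for $y \in Y^H$ via $\E_H \cap \E_{\hat L} \cap \E_C$).
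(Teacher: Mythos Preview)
Your proposal is correct and follows essentially the same argument as the paper: write $y = Ux$, bound $\|\Pi U x\|_1 \leq |\Pi U|_1 \|x\|_\infty$ (you spell out this Hölder-type step via the triangle inequality, the paper states it directly), invoke $\E_U$, and finish with the Auerbach property $\|x\|_\infty \leq \|Ux\|_1 = \|y\|_1$.
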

\begin{proof}
  For any $y \in \text{range}(A)$, we can find an $x$ such that $y = U x$.
  Then,
  \begin{equation*}
    \|\Pi y\|_1 = \|\Pi U x\|_1 \leq |\Pi U|_1 \|x\|_\infty 
    \leq |\Pi U|_1 \|U x\|_1 \leq \omega_1 d \log d \cdot \|y\|_1.
  \end{equation*}
\end{proof}
\begin{lemma}
  \label{lemma:L_1}
  Provided $\E_{L}$, for any fixed $y \in Y^L$, we have
  \begin{equation*}
    \log \Pr\left[\|\Pi y\|_1 \leq \frac{1}{4} \|y\|_1 \right] 
    \leq - \frac{d \log d}{24 \omega_2}.
  \end{equation*}
\end{lemma}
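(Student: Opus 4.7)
The plan is to isolate the randomness coming from the ``light'' Cauchys $\{c_j: j \in L\}$, apply Lemma~\ref{lemma:cauchy_lower} only to that piece, and use symmetric unimodality of the Cauchy density to argue that the ``heavy'' Cauchys $\{c_j : j \in H\}$ cannot hurt the lower tail. Write $(\Pi y)_i = a_i + X_i$, where $a_i = \sum_{j \in H,\,\phi(j)=i} c_j y_j$ and $X_i = \sum_{j \in L,\,\phi(j)=i} c_j y_j$. I condition on $S$ (so $\phi$ is fixed and $\E_L$ holds) and on $\{c_j: j \in H\}$, which makes each $a_i$ deterministic. Because the remaining $c_j$'s for $j \in L$ are independent standard Cauchys and the $L$-buckets are disjoint across $i$, 1-stability gives that the conditional joint law of $(X_i)_i$ is a product of $\gamma_i^L C_i$, with $C_i$ iid standard Cauchy and $\gamma_i^L = \sum_{j \in L,\,\phi(j)=i} |y_j|$. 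Two deterministic bookkeeping facts will drive the estimate: $\sum_i \gamma_i^L = \|y^L\|_1$, and since $|y_j| \le v_j$,
$$
\max_i \gamma_i^L \;\le\; \max_i (Sv^L)_i \;=\; \|Sv^L\|_\infty \;\le\; \omega_2 / (d \log d)
$$
under $\E_L$.

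Next, I would invoke Anderson's inequality in one dimension: for any symmetric unimodal density and any shift $a \in \R$, $\Pr[|a + Z| \le t] \le \Pr[|Z| \le t]$ for every $t \ge 0$. This applies to the (symmetric, unimodal) Cauchy law, so $|a_i + \gamma_i^L C_i|$ stochastically dominates $\gamma_i^L |C_i|$ for every $i$. Combining these marginal dominances with independence across $i$ (via Strassen's theorem) yields a coupling in which
$$
\sum_i |a_i + \gamma_i^L C_i| \;\ge\; \sum_i \gamma_i^L |\tilde C_i| \quad \text{a.s.},
$$
where $\tilde C_i$ are iid standard Cauchy. In particular, conditionally on $(S, \{c_j : j \in H\})$, the law of $\|\Pi y\|_1$ stochastically dominates that of $\sum_i \gamma_i^L |\tilde C_i|$.

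Finally, I apply Lemma~\ref{lemma:cauchy_lower} to $\sum_i \gamma_i^L |\tilde C_i|$ with $t = 1/2$ and $\gamma = \|y^L\|_1$, which gives
$$
\log \Pr\!\left[ \sum_i \gamma_i^L |\tilde C_i| \le \tfrac{1}{2} \|y^L\|_1 \right] \;\le\; \frac{-\|y^L\|_1 \cdot (1/2)^2 \cdot d \log d}{3 \omega_2} \;=\; \frac{-\|y^L\|_1 \, d \log d}{12 \omega_2}.
$$
By definition of $Y^L$, $\|y^L\|_1 \ge \tfrac{1}{2}\|y\|_1$; and since $y = Ux$ with $\|x\|_\infty = 1$ and $\|Ux\|_1 \ge \|x\|_\infty$, one has $\|y\|_1 \ge 1$. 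Thus the right-hand side is at most $-d \log d / (24 \omega_2)$. Because $\tfrac{1}{4}\|y\|_1 \le \tfrac{1}{2}\|y^L\|_1$, this upper-bounds $\Pr[\|\Pi y\|_1 \le \tfrac{1}{4}\|y\|_1]$ conditionally, and averaging over $\{c_j: j \in H\}$ (the bound was uniform in these) delivers the stated lemma.

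The main obstacle is the second step. A naive triangle-inequality split $\|\Pi y\|_1 \ge \|\Pi y^L\|_1 - \|\Pi y^H\|_1$ is useless here, because $\|\Pi y^H\|_1$ is a sum of $|H|$-many Cauchy terms with no usable upper bound (Cauchys have no moments). Anderson's inequality sidesteps this cleanly: it shows that the heavy-coordinate ``drift'' $a_i$ can only stochastically enlarge $|a_i + \gamma_i^L C_i|$, so the $H$-part simply drops out of the lower-tail analysis rather than having to be controlled.
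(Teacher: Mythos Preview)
Your proof is correct, but it takes a more circuitous route than the paper's. The paper observes that 1-stability can be applied to \emph{all} coordinates $j$ with $\phi(j)=i$ at once, not just the light ones: since the $c_j$'s in bucket $i$ are independent standard Cauchys, $|z_i|=\bigl|\sum_j s_{ij}c_jy_j\bigr|$ has exactly the law of $\bigl(\sum_j s_{ij}|y_j|\bigr)|\tilde c_i|$, and the $\tilde c_i$ are independent across $i$ because the buckets partition $\{1,\dots,n\}$. Since $\sum_j s_{ij}|y_j|\ge\sum_j s_{ij}|y_j^L|=\gamma_i^L$ deterministically, one gets the pointwise inequality $|z_i|\overset{d}{=}\bigl(\sum_j s_{ij}|y_j|\bigr)|\tilde c_i|\ge\gamma_i^L|\tilde c_i|$ in this coupling, and then Lemma~\ref{lemma:cauchy_lower} applies directly. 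Your Anderson/Strassen maneuver accomplishes the same stochastic domination $|z_i|\succeq\gamma_i^L|\tilde C_i|$, but by splitting off the heavy part as a deterministic shift and then arguing that shifting a symmetric unimodal variable can only increase its absolute value stochastically. That argument is valid, and it is a nice observation that it would still work in settings where the ``heavy'' contribution is not itself a stable sum (so that full 1-stability is unavailable); here, though, the heavy coordinates are Cauchy too, so the paper's one-line application of stability is the cleaner path.
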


\begin{proof}
  Let $z = \Pi y$. We have,
  \begin{equation*}
    |z_i| = \left|\sum_{j} s_{ij} c_j y_j\right| 
    \simeq \left( \sum_{j} s_{ij} |y_j| \right) |\tilde{c}_i| 
    \succeq  \left( \sum_{j} s_{ij} |y^L_j| \right) |\tilde{c}_i| 
    := \tilde{\gamma}_i |\tilde{c}_i|,
  \end{equation*}
  where $\{\tilde{c}_i\}$ are independent Cauchy variables.
  Let $\tilde{\gamma} = \sum_i \tilde{\gamma}_i = \|y^L\|_1$.
  Since $|y| \leq v$, we have $\tilde{\gamma}_i \leq \|S v^L\|_\infty$.
  By Lemma~\ref{lemma:cauchy_lower},
  \begin{equation*}
    \log \Pr\left[X \leq \frac{\|y^L\|_1}{2} \right] 
    \leq  \frac{-\|y^L\|_1}{12 \|S v^L\|_\infty}.
  \end{equation*}
  By assumption $\E_L$ and $\|y^L\|_1 \geq \frac{1}{2} \|y\|_1 \geq
  \frac{1}{2}$, we obtain the result.
\end{proof}

\begin{lemma}
  \label{lemma:L_all}
  Assume both $\E_U$ and $\E_{L}$.
  If $\omega_1$ and $\omega_2$ satisfy
  \begin{equation*}
    d \log \left( 6 d ( 1 + 4 \omega_1 d \log d) \right) 
    - \frac{d \log d}{24 \omega_2} \leq \log \delta
  \end{equation*}
  for some $\delta \in (0, 1)$ regardless of $d$, then, with probability at
  least $1-\delta$, we have
  \begin{equation*}
    \|\Pi y\|_1 \geq \frac{1}{8} \|y\|_1, \quad \forall y \in Y^L.
  \end{equation*}
\end{lemma}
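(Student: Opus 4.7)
The argument is a $\gamma$-net plus Lipschitz extension: apply Lemma~\ref{lemma:L_1} to a finite set that covers $Y^L$, take a union bound, then extend to all of $Y^L$ using the global upper bound $\mathcal{E}_U$ from Lemma~\ref{lemma:upper}.

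I would first take $N$ to be a $\gamma$-net of the unit $\ell_\infty$ ball in $\R^d$ with mesh $\gamma = 1/\bigl(2d(1+4\omega_1 d\log d)\bigr)$. A standard packing bound gives $|N| \leq (1+2/\gamma)^d \leq \bigl(6d(1+4\omega_1 d\log d)\bigr)^d$, so $\log |N|$ matches the first term in the hypothesis. Next I would establish a per-point tail bound on the net. For $x' \in N$ set $y' = Ux'$; since $\|x'\|_\infty \leq 1$ we have $|y'_j| \leq v_j$ for each $j$, and this is the only pointwise property of $y$ used in the proof of Lemma~\ref{lemma:L_1}. Retracing that proof under $\mathcal{E}_L$ yields
\[
\Pr\!\bigl[\|\Pi y'\|_1 \leq \tfrac{1}{2}\|(y')^L\|_1\bigr] \;\leq\; \exp\!\bigl(-\|(y')^L\|_1 \cdot d\log d / (12\omega_2)\bigr).
\]
For the net points that are relevant to approximating some $y \in Y^L$ one still has $\|(y')^L\|_1 \gtrsim 1/2$, so the per-point failure probability is at most $\exp(-d\log d/(24\omega_2))$. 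A union bound over $N$ combined with the hypothesis on $(\omega_1, \omega_2)$ produces, with probability at least $1-\delta$, the simultaneous bound $\|\Pi y'\|_1 \geq \tfrac{1}{2}\|(y')^L\|_1$ for every relevant $x' \in N$.

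Finally I would extend to arbitrary $y \in Y^L$ by approximation. Given $y = Ux$ with $\|x\|_\infty = 1$, pick $x' \in N$ with $\|x-x'\|_\infty \leq \gamma$ and set $y' = Ux'$. Then $\|y-y'\|_1 \leq |U|_1\gamma = d\gamma$, and $\mathcal{E}_U$ (Lemma~\ref{lemma:upper}) yields $\|\Pi(y-y')\|_1 \leq \omega_1 d^2\log d\cdot\gamma$. Since $\|(y')^L\|_1 \geq \|y^L\|_1 - d\gamma \geq \tfrac{1}{2}\|y\|_1 - d\gamma$,
\[
\|\Pi y\|_1 \;\geq\; \|\Pi y'\|_1 - \|\Pi(y-y')\|_1 \;\geq\; \tfrac{1}{4}\|y\|_1 - \tfrac{d\gamma}{2} - \omega_1 d^2 \log d \cdot \gamma.
\]
With the chosen $\gamma$, both additive errors are at most $\|y\|_1/16$ (using $\|y\|_1 \geq 1$), so $\|\Pi y\|_1 \geq \|y\|_1/8$.

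The main technical difficulty is a three-way balancing of constants: the net size determines the union-bound loss $d\log(6d(1+4\omega_1 d\log d))$; the Lipschitz error $\omega_1 d^2\log d\cdot\gamma$ must remain a small fraction of $\|y\|_1 \geq 1$; and the boundary condition $\|y^L\|_1 \geq 1/2$ required by Lemma~\ref{lemma:L_1} must survive a $d\gamma$-perturbation with enough slack to preserve the exponent $d\log d/(24\omega_2)$ in the hypothesis. The scale $\gamma = \Theta\!\bigl(1/(\omega_1 d^2 \log d)\bigr)$ is precisely what reconciles all three requirements.
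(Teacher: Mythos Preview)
Your approach is essentially the paper's: build a net, apply the per-point Cauchy lower-tail bound (Lemma~\ref{lemma:L_1}) via a union bound, and extend by the Lipschitz estimate from Lemma~\ref{lemma:upper}. One difference worth noting is that the paper nets $Y^L$ itself in the $\ell_1$ metric (using the Bourgain--Lindenstrauss--Milman bound $(3d/\epsilon)^d$ with $\epsilon = 1/(2+8\omega_1 d\log d)$), so every net point $y_\epsilon$ already lies in $Y^L$ and Lemma~\ref{lemma:L_1} applies verbatim to give $\|\Pi y_\epsilon\|_1 \geq \tfrac14\|y_\epsilon\|_1$. Your choice of netting the $\ell_\infty$ ball in $\R^d$ forces you to handle net images $y'=Ux'$ that need not lie in $Y^L$, which is why you have to retrace the proof of Lemma~\ref{lemma:L_1} and control $\|(y')^L\|_1$ separately.

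That detour costs you the exact constants, and as written the arithmetic does not close. With your $\gamma = 1/\bigl(2d(1+4\omega_1 d\log d)\bigr)$ you have $d\gamma = 1/\bigl(2(1+4\omega_1 d\log d)\bigr)$, and hence
\[
\omega_1 d^2\log d \cdot \gamma \;=\; \frac{\omega_1 d\log d}{2(1+4\omega_1 d\log d)} \;\longrightarrow\; \tfrac18 \quad (d\to\infty),
\]
so the claim that this error is at most $\|y\|_1/16$ is false; in fact the combined error $\tfrac{d\gamma}{2}+\omega_1 d^2\log d\cdot\gamma = \tfrac{1/2+\omega_1 d\log d}{2(1+4\omega_1 d\log d)}$ strictly exceeds $1/8$, which defeats the final inequality when $\|y\|_1=1$. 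Likewise, for your ``relevant'' net points you only get $\|(y')^L\|_1 \geq \tfrac12 - d\gamma$, so the per-point exponent is $-(\tfrac12-d\gamma)\,d\log d/(12\omega_2)$, not the $-d\log d/(24\omega_2)$ that matches the hypothesis. Both issues are repaired by shrinking $\gamma$ by a fixed constant factor, or---more cleanly---by adopting the paper's device of netting $Y^L$ directly, which yields $\tfrac14\|y_\epsilon\|_1 \geq \tfrac14\|y\|_1 - \tfrac14\epsilon$ and makes the error exactly $(\tfrac14+\omega_1 d\log d)\epsilon = \tfrac18$.
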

\begin{proof}
  Set $\epsilon = 1/(2+8 \omega_1 d \log d)$ and create an $\epsilon$-net
  $Y^L_\epsilon \subseteq Y^L$ such that for any $y \in Y^L$, we can find a
  $y_\epsilon \in Y^L_\epsilon$ such that $\|y - y_\epsilon\|_1 \leq \epsilon$.
  Since $\|y\|_1 \leq d$ for all $y \in Y^L$, there exist such an $\epsilon$-net
  with at most $(3 d/\epsilon)^d$ elements (Bourgain et al.~\cite{BLM89}).
  By Lemma~\ref{lemma:L_1}, we can apply a union bound for all the elements in
  $Y^L_\epsilon$:
  \begin{equation*}
    \Pr[\|\Pi y_\epsilon\|_1 
    \geq \frac{1}{4} \|y_\epsilon\|_1,\ \forall y_\epsilon \in Y^L_\epsilon] 
    \geq 1 - \left( \frac{3 d}{\epsilon} \right)^d e^{-\frac{d \log d}{24 \omega_2}} 
    = 1 - e^{d \log \frac{3d}{\epsilon} - \frac{d \log d}{24 \omega_2}} \geq 1 - \delta.
  \end{equation*}
  For any $y \in Y^L$, we have, noting that $y-y_\epsilon \in \text{range}(A)$,
  \begin{align*}
    \|\Pi y\|_1 &\geq \|\Pi y_\epsilon\|_1 - \|\Pi (y-y_\epsilon)\|_1 
    \geq \frac{1}{4} \|y_\epsilon\|_1 - \omega_1 d \log d \cdot \|y - y_\epsilon\|_1 \\
    &\geq \frac{1}{4} \|y\|_1 - \left( \frac{1}{4} + \omega_1 d \log d
    \right) \epsilon \geq \frac{1}{8} \|y\|_1.
  \end{align*}
  So we establish a lower bound for all $y \in Y^L$.
\end{proof}

\begin{lemma}
  \label{lemma:lower_H}
  Provided $\E_H$ and $\E_{\hat{L}}$, if $\omega_3 > 4
  \omega_4 $, we have
  \begin{equation*}
    \|\Pi y\|_1 \geq \frac{\omega_4}{d^2 \log^2 d} \|y\|_1, \quad \forall y \in Y^H.
  \end{equation*}
\end{lemma}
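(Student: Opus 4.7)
The plan is to exploit the bucket structure induced by the perfect-hashing event $\E_H$ and isolate the contribution of the heavy coordinates from the light ones. Fix $y \in Y^H$ and choose $x$ with $\|x\|_\infty = 1$ such that $y = Ux$; by the definition of $Y^H$ we have $\|y^H\|_1 > \tfrac{1}{2}\|y\|_1$. Decompose $y = y^H + y^{\hat L} + y^{\bar L}$. Under $\E_H$, each index of $H$ sits alone in its bucket; by definition, $\hat L \subseteq \phi^{-1}(\phi(H))$ while $\bar L$ is disjoint from $\phi^{-1}(\phi(H))$. Consequently $\Pi y^H + \Pi y^{\hat L}$ is supported on the coordinates indexed by $\phi(H)$ and $\Pi y^{\bar L}$ on the complementary coordinates, so these parts do not interact in the $\ell_1$-norm:
\begin{equation*}
\|\Pi y\|_1 \;=\; \|\Pi y^H + \Pi y^{\hat L}\|_1 + \|\Pi y^{\bar L}\|_1 \;\geq\; \|\Pi y^H\|_1 - \|\Pi y^{\hat L}\|_1.
\end{equation*}

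For the leading term, perfect hashing collapses the bucketed sum coordinate-by-coordinate, giving $\|\Pi y^H\|_1 = \sum_{j\in H}|c_j|\,|y_j|$. Using the Cauchy lower bound $\E_{C}$ (which, together with $\E_H$ and $\E_{\hat L}$, is implicit in this regime since the constant $\omega_3$ only enters through $\E_C$) and $\|y^H\|_1 > \tfrac{1}{2}\|y\|_1$,
\begin{equation*}
\|\Pi y^H\|_1 \;\geq\; \frac{\omega_3}{d^2\log^2 d}\,\|y^H\|_1 \;\geq\; \frac{\omega_3}{2\,d^2\log^2 d}\,\|y\|_1.
\end{equation*}
For the cross term we use that $y^{\hat L} = U^{\hat L} x$ with the same $x$, together with the elementary bound $\|Mx\|_1 \leq |M|_1\,\|x\|_\infty$, to obtain
\begin{equation*}
\|\Pi y^{\hat L}\|_1 \;=\; \|\Pi U^{\hat L} x\|_1 \;\leq\; |\Pi U^{\hat L}|_1\,\|x\|_\infty \;\leq\; \frac{\omega_4}{d^2\log^2 d}
\end{equation*}
by the hypothesis $\E_{\hat L}$ and $\|x\|_\infty = 1$.

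Assembling the two estimates yields
\begin{equation*}
\|\Pi y\|_1 \;\geq\; \frac{\omega_3}{2\,d^2\log^2 d}\,\|y\|_1 - \frac{\omega_4}{d^2\log^2 d}.
\end{equation*}
Since $\|y\|_1 \geq \|x\|_\infty = 1$ for every $y \in Y$, the additive error $\omega_4/(d^2\log^2 d)$ is at most $(\omega_4/(d^2\log^2 d))\|y\|_1$, and the assumption $\omega_3 > 4\omega_4$ is exactly what is required so that $\omega_3/2 - \omega_4 \geq \omega_4$, giving the claimed bound $\|\Pi y\|_1 \geq (\omega_4/(d^2\log^2 d))\|y\|_1$.

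The main conceptual obstacle (and the whole reason for splitting $L$ into $\hat L$ and $\bar L$) is handling the collisions of light coordinates with heavy ones: these alone threaten to swamp the clean contribution of the perfectly hashed heavy part, and bounding them requires a data-dependent control on $|\Pi U^{\hat L}|_1$ rather than a naive coordinate-wise argument. Once that bound is available via $\E_{\hat L}$, the remainder of the proof is just a disjoint-support decomposition plus the perfect-hashing identity, and the gap condition $\omega_3 > 4\omega_4$ arises naturally from the resulting affine inequality.
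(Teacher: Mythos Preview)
Your proof is correct and follows essentially the same route as the paper: disjoint-support decomposition into $\phi(H)$ and its complement, the perfect-hashing identity $\|\Pi y^H\|_1 = \sum_{j\in H}|c_j||y_j|$, the bound $\|\Pi U^{\hat L}x\|_1 \leq |\Pi U^{\hat L}|_1\|x\|_\infty$, and then $\|y\|_1 \geq 1$ to absorb the additive term. You are also right that the lemma tacitly relies on $\E_C$ (the constant $\omega_3$ has no meaning otherwise); the paper's own proof uses $\min_{j\in H}|c_j| \geq \omega_3/(d^2\log^2 d)$ without listing $\E_C$ among the hypotheses, so this is a minor omission in the statement rather than in your argument.
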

\begin{proof}
  For any $y = U x \in Y^H$, we have,
  \begin{align*}
    \|\Pi y\|_1 &\geq \|\Pi (y^{H} + y^{\hat{L}} )\|_1 \geq \|\Pi y^{H} \|_1 -
    \|\Pi U^{\hat{L}} x \|_1, \\
    &\geq \sum_{j \in H} |c_j| |y_j| - |\Pi U^{\hat{L}}|_1 \|x\|_\infty 
    \geq \min_{j \in H} |c_j| \|y^H\|_1 - |\Pi U^{\hat{L}}|_1 \\
    &\geq \left( \frac{\omega_3}{2 d^2 \log^2 d} - \frac{\omega_4}{d^2 \log^2 d} \right) \|y\|_1
    \geq \frac{\omega_4}{d^2 \log^2 d} \cdot \|y\|_1,
  \end{align*}
  which creates a lower bound for all $y \in Y^H$.
\end{proof}

We continue to show that, with $\omega$ sufficiently large, by setting $\tau =
\omega^{1/4} / ( d \log^2 d )$ and choosing $\omega_1$, $\omega_2$, $\omega_3$,
and $\omega_4$ properly, we have each event with probability at least $1-0.08 =
0.92$ and thus
\begin{equation*}
  \Pr[ \E_U \cap \E_L \cap \E_H 
  \cap \E_{\hat{L}} \cap \E_C ] \geq 0.6.
\end{equation*}
Moreover, the condition in Lemma~\ref{lemma:L_all} holds with $\delta = 0.1$,
and the condition in Lemma~\ref{lemma:lower_H} holds.
Therefore, $\Pi = S C$ has the desired property with probability at least $0.5$,
which would conclude the proof of Theorem~\ref{thm:sparse_l1}.

\begin{lemma}
  \label{lemma:E_U}
  With probability at least $0.92$, $\E_U$ holds with $\omega_1 = 500
  (1 + \log \omega)$.
\end{lemma}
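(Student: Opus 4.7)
\textbf{Proof Proposal for Lemma on $\mathcal{E}_U$.}

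The plan is to reduce $|\Pi U|_1$ to a weighted sum of absolute values of (possibly dependent) standard Cauchy variables and then apply Lemma~\ref{lemma:cauchy_upper} directly. First I would unpack the structure of $\Pi U$. Let $\phi:\{1,\ldots,n\}\to\{1,\ldots,s\}$ be the hash map defined by $S$, so that for every column $k \in \{1,\ldots,d\}$ and every bucket $i \in \{1,\ldots,s\}$,
\begin{equation*}
  (\Pi U_{*k})_i \;=\; \sum_{j:\phi(j)=i} c_j u_{jk}.
\end{equation*}
Since the $c_j$'s are i.i.d.\ standard Cauchy and the Cauchy distribution is $1$-stable, conditional on $S$ we have
\begin{equation*}
  (\Pi U_{*k})_i \;\simeq\; \gamma_{i,k} \tilde{C}_{i,k},
  \qquad \gamma_{i,k}:=\sum_{j:\phi(j)=i}|u_{jk}|,
\end{equation*}
where each $\tilde{C}_{i,k}$ is a standard Cauchy. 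Summing out the buckets and columns and using $|U|_1=\sum_k\|U_{*k}\|_1=d$ (which holds because $U$ is the Auerbach basis and hence $\ell_1$-normalized columnwise), I get
\begin{equation*}
  |\Pi U|_1 \;=\; \sum_{k=1}^d \sum_{i=1}^s |(\Pi U_{*k})_i|
  \;\simeq\; \sum_{(i,k)} \gamma_{i,k}\,|\tilde{C}_{i,k}|,
  \qquad \sum_{(i,k)}\gamma_{i,k}=d.
\end{equation*}

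Next I would apply Lemma~\ref{lemma:cauchy_upper}, crucially exploiting that it does \emph{not} require independence of the Cauchy variables (the $\tilde{C}_{i,k}$ with a common $i$ are correlated because they combine the same $c_j$'s with different weights, but that is irrelevant here). The number of terms is at most $m \le sd$, and setting $t = \omega_1 \log d$ yields
\begin{equation*}
  \Pr\bigl[\,|\Pi U|_1 > \omega_1 d \log d\,\bigr]
  \;\leq\; \frac{2\log(sd\,\omega_1\log d)}{\omega_1 \log d}.
\end{equation*}
Plugging in $s=\omega d^5\log^5 d$ gives $\log(sd\,\omega_1\log d)=\bigO(\log\omega+\log d+\log\omega_1)$, and a direct back-of-the-envelope verification shows that the choice $\omega_1 = 500(1+\log\omega)$ drives the right-hand side below $0.08$ for all $d\geq 3$ (with room to spare for the $\log\omega_1$ term, which is absorbed since $\log(500(1+\log\omega))\ll 500(1+\log\omega)$).

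The main obstacle — and really the only non-routine step — is to justify step one cleanly: that $|\Pi U|_1$ is distributionally equal to a single weighted sum of Cauchy magnitudes with total weight $d$, rather than being a double sum whose marginal Cauchys are tangled up across buckets and columns. Once this reduction is in place, Lemma~\ref{lemma:cauchy_upper} applies off the shelf because it tolerates dependence, and the rest is a concrete calculation chasing constants through the $\log(mt)/t$ bound. I would close by noting the constant $500$ is not tight; any constant large enough to dominate the logarithmic slack in $\log(sd\,\omega_1\log d)$ after dividing by $\omega_1\log d$ would work, and the stated form highlights only the dependence on $\omega$.
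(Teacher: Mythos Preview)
Your proposal is correct and follows essentially the same route as the paper: condition on $S$, use $1$-stability of the Cauchy distribution to rewrite $|\Pi U|_1$ as a weighted sum $\sum_{i,k}\gamma_{i,k}|\tilde{C}_{i,k}|$ of (dependent) Cauchy magnitudes with total weight $|U|_1=d$, then apply Lemma~\ref{lemma:cauchy_upper} with $m=sd$ and $t=\omega_1\log d$ and check the arithmetic. The only difference is cosmetic notation ($\sum_{j:\phi(j)=i}|u_{jk}|$ versus $\sum_j |s_{ij}u_{jk}|$), and your ``main obstacle'' is in fact routine---the paper treats the reduction in one line.
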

\begin{proof}
  With $S$ fixed, we have,
  \begin{equation*}
    |\Pi U|_1 = |S C U|_1 = \sum_{k=1}^d \sum_{i=1}^s |\sum_{j=1}^n s_{ij} c_{j} u_{jk}| 
    \simeq \sum_{k=1}^d \sum_{i=1}^s \sum_{j=1}^n \left( |s_{ij} u_{jk}| \right) |\tilde{c}_{ik}|,
  \end{equation*}
  where $\{\tilde{c}_{ik}\}$ are \emph{dependent} Cauchy random variables. We have
  \begin{equation*}
    \sum_{k=1}^d \sum_{i=1}^s \sum_{j=1}^n |s_{ij} u_{jk}| = \sum_{k=1}^d \sum_{j=1}^n |u_{jk}| = |U|_1 = d.
  \end{equation*}
  Apply Lemma~\ref{lemma:cauchy_upper},
  \begin{equation*}
    \Pr[|\Pi U|_1 \geq t d \,|\, S] \leq \frac{2 \log (s d t)}{t}.
  \end{equation*}
  Setting $\omega_1 = 500 (1 + \log \omega)$ and $t = \omega_1 \log d$, we have
  \begin{equation*}
    \frac{2 \log (s d t)}{t} 
    = \frac{2 \log ( \omega \omega_1 d^6 \log^5 d )}{\omega_1 \log d} \leq 0.08.
  \end{equation*}
  We assume that $\log d \geq 1$ and $\log \omega \geq 1$.
\end{proof}

\begin{lemma}
  \label{lemma:E_L}
  For any $\delta \in (0, 0.1)$, if $s \geq d/\tau$, we have,
  \begin{equation*}
    \Pr \left[ \|S v^L \|_\infty 
      \geq \left( 1 + 2 \log \frac{d}{\delta \tau} \right) \cdot \tau \right] \leq \delta.
  \end{equation*}
\end{lemma}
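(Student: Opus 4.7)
The plan is to view $\|Sv^L\|_\infty$ as the maximum load in a weighted balls-into-bins experiment. Because each column of $S$ picks one of the $s$ standard basis vectors uniformly at random, for each bucket $i\in\{1,\ldots,s\}$ the $i$-th coordinate of $Sv^L$ satisfies
\begin{equation*}
(Sv^L)_i = \sum_{j=1}^n Z_{ij}\, v^L_j,
\end{equation*}
where $Z_{ij}=\mathbf{1}[\phi(j)=i]\sim\mathrm{Bernoulli}(1/s)$ are independent across $j$, each summand lies in $[0,\tau]$ because $v^L_j<\tau$ on $L$ and vanishes off $L$, and the mean is $\mathbf{E}[(Sv^L)_i]=\|v^L\|_1/s\leq d/s\leq\tau$ by the hypothesis $s\geq d/\tau$. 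Thus each $(Sv^L)_i$ is a sum of bounded, nonnegative, independent random variables with expectation at most $\tau$, which is the standard setting for a Chernoff-style tail bound.

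To control each bucket I would apply a classical MGF argument. Using $1+x\leq e^x$ together with the convexity estimate $e^{\lambda v^L_j}-1\leq (v^L_j/\tau)(e^{\lambda\tau}-1)$ for $v^L_j\in[0,\tau]$,
\begin{equation*}
\mathbf{E}\!\left[e^{\lambda (Sv^L)_i}\right] = \prod_{j}\left(1+\tfrac{1}{s}(e^{\lambda v^L_j}-1)\right) \leq \exp\!\left(\tfrac{e^{\lambda\tau}-1}{s\tau}\sum_{j} v^L_j\right)\leq \exp(e^{\lambda\tau}-1),
\end{equation*}
the last step using $\sum_j v^L_j\leq d\leq s\tau$. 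Markov's inequality followed by the optimal choice $\lambda\tau=\log t$ (valid for $t\geq 1$) then yields the Poisson-type tail
\begin{equation*}
\Pr\!\left[(Sv^L)_i \geq t\tau\right] \leq \exp(t-1-t\log t),
\end{equation*}
and a union bound over the $s$ buckets gives $\Pr[\|Sv^L\|_\infty \geq t\tau] \leq s\exp(t-1-t\log t)$. Substituting $t=1+2\log(d/(\delta\tau))$ and checking that the resulting bound is at most $\delta$ will give the lemma.

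The main obstacle is precisely this last constant-matching step. Writing $L=\log(d/(\delta\tau))$, the decay rate $t\log t-t+1\sim 2L\log L$ must dominate the union-bound cost $\log(s/\delta)$; in the parameter regime of Theorem~\ref{thm:sparse_l1}, where $s$ is polynomial in $d$ and $\delta\leq 0.1$ forces $L$ to be bounded below, this is comfortably true, but verifying it cleanly for general $s\geq d/\tau$ requires care. A cleaner route, if the constants become too tight, is to retain the factor $r=d/(s\tau)\leq 1$ in the MGF and obtain $\Pr[(Sv^L)_i\geq t\tau]\leq\exp(t-r-t\log(t/r))$, which exploits the much smaller per-bucket mean $d/s$ when $s\gg d/\tau$; alternatively, Bernstein's inequality with variance $\sigma^2\leq \tau\cdot d/s\leq \tau^2$ and range $\tau$ yields the same additive deviation $O(\tau\log(s/\delta))$. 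In each case the qualitative conclusion is the same: the only real work is the bookkeeping needed to match the exact constant in the stated threshold.
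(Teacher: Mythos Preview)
Your proposal is correct and essentially matches the paper's argument: per-bucket concentration for a sum of independent $[0,\tau]$-bounded variables, followed by a union bound over the $s$ buckets. The paper takes precisely your Bernstein alternative (bounding the variance via H\"older, $\|v^L\|_2^2 \leq \|v^L\|_1\|v^L\|_\infty \leq d\tau$, and range $\tau$), verifies the constants at the endpoint $s=d/\tau$, and then disposes of larger $s$ by the one-line remark that increasing the number of bins can only decrease the failure probability---thereby sidestepping the general-$s$ bookkeeping you flag.
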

\begin{proof}
  Let $X_{ij} = s_{ij} v^L_j$.
  We have $\mathbf{E}[X_{ij}] = v^L_j/s$, $\mathbf{E}[X_{ij}^2] = (v^L_j)^2/s$,
  and $0 \leq X_{ij} \leq v^L_j \leq \tau$.
  Fixed $i$, $X_{ij}$ are independent, $j=1,\ldots,n$.
  By Bernstein's inequality,
  \begin{equation*}
    \log \Pr\left[\sum_j X_{ij} 
      \geq \frac{\|v^L\|_1}{s} + t\right] \leq \frac{-t^2/2}{\|v^L\|_2^2 /s + \tau t / 3 } 
    \leq \frac{-t^2/2}{\tau ( \|v^L\|_1 / s + t / 3)} \leq \frac{- t^2/(2 \tau)}{d/s + t/3}.
  \end{equation*}
  where we use Holder's inequality: $\|v^L\|_2^2 \leq \|v^L\|_1 \|v^L\|_\infty
  \leq d \tau$.
  To obtain a union bound for all $i$ with probability $1-\delta$, we need
  \begin{align*}
    \frac{-t^2/(2 \tau)}{d/s + t/3} + \log s \leq \log \delta.
  \end{align*}
  Given $\delta < 0.1$, it suffices to choose $s = d/\tau$ and $t = 2
  \log(d/(\delta \tau)) \tau$.
  Note that $\|v^L\|_1/s \leq \|v\|_1/s = \tau$.
  We have
  \begin{equation*}
    \Pr \left[\|S v^L\|_\infty 
      \geq \left( 1 + 2 \log \frac{d}{\delta \tau} \right)\cdot \tau \right] \leq \delta.
  \end{equation*}
  Increasing $s$ will decrease the failure rate, so it holds for all $s \geq
  d/\tau$.
\end{proof}

\begin{lemma}
  With probability at least $0.92$, $\E_L$ holds with $\omega_2 = (15 +
  \log \omega)/\omega^{1/4}$.
\end{lemma}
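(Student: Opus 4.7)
The plan is to specialize the preceding Bernstein-type concentration lemma (which bounds $\|S v^L\|_\infty$ by $(1 + 2\log(d/(\delta\tau)))\tau$ whenever $s \ge d/\tau$) to the parameter choices fixed globally in the proof of Theorem~\ref{thm:sparse_l1}: namely $\tau = \omega^{1/4}/(d\log^2 d)$ and $s = \omega d^5 \log^5 d$, and to apply it with failure probability $\delta = 0.08$ so as to match the target success probability of $0.92$.

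First, I would check the hypothesis $s \ge d/\tau$. With the chosen values, $s\tau/d = \omega^{5/4} d^3 \log^3 d$, which is at least $1$ for all $\omega, d \ge 1$, so the hypothesis holds and the lemma applies. This immediately yields, with probability at least $0.92$,
$$\|Sv^L\|_\infty \;\le\; \Big(1 + 2\log\tfrac{d}{\delta\tau}\Big)\,\tau.$$

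Second, I would manipulate the right-hand side into the claimed form $\omega_2/(d\log d)$ with $\omega_2 = (15+\log\omega)/\omega^{1/4}$. Expanding $\log(d/(\delta\tau)) = 2\log d + 2\log\log d + \log(1/\delta) - \tfrac14\log\omega$ and multiplying by $\tau$ gives a quantity proportional to $\omega^{1/4}/(d\log^2 d)$; multiplying through by $d\log d$ and regrouping the logarithmic terms produces an expression of the form $[\,\text{constants} + 4\log\log d - \tfrac12\log\omega\,]\cdot\omega^{1/4}/\log d$, which under the hypothesis of Theorem~\ref{thm:sparse_l1} that $\omega$ is sufficiently large is dominated by $(15+\log\omega)/\omega^{1/4}$. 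Hence on the $0.92$-probability event supplied by the previous lemma, the event $\E_L$ holds with the claimed value of $\omega_2$.

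The only real obstacle is the constant bookkeeping. One has to verify that $15$ is indeed large enough to absorb the additive contributions from $\log(1/0.08)$, from the $\log\log d$ terms for $d$ in the operative range, and from the ``$1$'' in $1+2\log(\cdot)$; and one has to check that the powers of $\omega$ coming from $\tau$ on one side and from the reciprocal $1/\omega^{1/4}$ in the target $\omega_2$ line up consistently. There is no new probabilistic content beyond what the preceding Bernstein-type bound already provides — this lemma is really just repackaging the concentration estimate in the form required by the subsequent union-bound argument over the $\epsilon$-net of $Y^L$ in the next lemma.
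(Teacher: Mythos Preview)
Your approach is exactly the paper's: invoke Lemma~\ref{lemma:E_L} with $\delta=0.08$, then rewrite the bound $(1+2\log(d/(\delta\tau)))\tau$ as $\omega_2/(d\log d)$ and simplify. The paper's whole proof is the single line
\[
\omega_2 \;=\; \frac{1+2\log\bigl(\omega^{1/4} d^2 \log^2 d/0.08\bigr)}{\omega^{1/4}\log d}\;\le\;\frac{15+\log\omega}{\omega^{1/4}}.
\]

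However, you have picked up a typo from the paper's narrative: the value $\tau=\omega^{1/4}/(d\log^2 d)$ stated there is wrong. The value actually used throughout the proof of Theorem~\ref{thm:sparse_l1} is $\tau=1/(\omega^{1/4} d\log^2 d)$; this is what yields $|H|\le d/\tau=\omega^{1/4}d^2\log^2 d$ in the analyses of $\E_H$, $\E_C$, $\E_{\hat L}$, and it is precisely what puts $\omega^{1/4}$ in the \emph{denominator} of the displayed $\omega_2$ above. With your $\tau$, the factor $\tau\cdot d\log d$ equals $\omega^{1/4}/\log d$, so your candidate $\omega_2$ carries $\omega^{1/4}$ in the numerator. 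Concretely, the expression $(1+2\log(d/(\delta\tau)))\cdot\tau\cdot d\log d$ then contains the term $4\log d\cdot\omega^{1/4}/\log d=4\omega^{1/4}$ (which you dropped when you summarized the bracket as ``constants $+4\log\log d-\tfrac12\log\omega$''), and $4\omega^{1/4}$ is certainly not dominated by $(15+\log\omega)/\omega^{1/4}$ for large $\omega$. So the ``dominated by'' step fails as written; this is exactly the check you flagged as the ``only real obstacle,'' and it does not go through with your $\tau$.

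Once you substitute the correct $\tau=1/(\omega^{1/4}d\log^2 d)$, your sketch becomes the paper's computation verbatim, and the constant bookkeeping is $1+\tfrac12\log\omega+4\log d+4\log\log d+2\log(12.5)\le(15+\log\omega)\log d$ under the standing assumptions $\log d\ge 1$ and $\log\omega\ge 1$.
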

\begin{proof}
  By Lemma~\ref{lemma:E_L}, with probability at least $0.92$, $\E_L$
  holds with
  \begin{equation*}
    \omega_2 = \frac{1+2\log \frac{\omega^{1/4} d^2 \log^2 d}{0.08}}{\omega^{1/4} \log d} 
    \leq \frac{15 + \log \omega}{\omega^{1/4}}.
  \end{equation*}
\end{proof}

\begin{lemma}
  With the above choices of $\omega_1$ and $\omega_2$, the condition in
  Lemma~\ref{lemma:L_1} holds with $\delta = 0.1$ for sufficiently large
  $\omega$.
\end{lemma}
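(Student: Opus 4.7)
The plan is to substitute the specific values $\omega_1 = 500(1 + \log\omega)$ and $\omega_2 = (15+\log\omega)/\omega^{1/4}$ into the condition displayed in Lemma~\ref{lemma:L_all}, namely
\begin{equation*}
  d \log\bigl(6d(1+4\omega_1 d\log d)\bigr) - \frac{d\log d}{24\omega_2} \leq \log\delta,
\end{equation*}
and verify that it is satisfied with $\delta = 0.1$ uniformly in the admissible range of $d$ once $\omega$ is chosen large enough.

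First I would bound the positive term. Noting that $\log\omega_1 = \log 500 + \log(1+\log\omega) = O(\log\log\omega)$ and that $1+4\omega_1 d\log d \leq 5\omega_1 d\log d$ for $d \geq 2$, I get
\begin{equation*}
  d \log\bigl(6d(1+4\omega_1 d\log d)\bigr) \leq d\bigl[c_0 + 2\log d + \log\log d + \log(1+\log\omega)\bigr]
\end{equation*}
for a universal constant $c_0$, a quantity of order $d\log d + d\log\log\omega$. Next, with the chosen $\omega_2$, the negative term equals $\frac{d\log d\cdot \omega^{1/4}}{24(15+\log\omega)}$; since $\omega^{1/4}/\log\omega \to \infty$ as $\omega\to\infty$, this grows faster than any polylogarithmic function of $\omega$.

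Combining these two estimates and dividing through by $d\log d$, the claim reduces to
\begin{equation*}
  \frac{\omega^{1/4}}{24(15+\log\omega)} \geq c_1\Bigl(1 + \frac{\log\log\omega}{\log d}\Bigr) + \frac{\log 10}{d\log d}
\end{equation*}
for a universal constant $c_1$, which holds for $\omega$ sufficiently large uniformly over $d \geq 2$: the right-hand side is maximized at $d=2$ and is of order $\log\log\omega$, while the left-hand side grows like $\omega^{1/4}/\log\omega$, which eventually dominates.

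The analysis is not so much an obstacle as a careful logarithmic bookkeeping exercise; the only mildly subtle point is ensuring uniformity in $d$, i.e., that a single choice of $\omega$ works across all admissible $d$ simultaneously. This is automatic because, after the division by $d\log d$, all $d$-dependence on the right-hand side appears through $1/\log d$ and $1/(d\log d)$, both bounded by absolute constants, whereas the left-hand side depends only on $\omega$ and tends to infinity with $\omega$.
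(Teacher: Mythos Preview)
Your proposal is correct and follows the same approach as the paper's proof: both substitute the chosen $\omega_1,\omega_2$, observe that the positive term is of order $d\log d$ (with only a $\log\log\omega$ dependence on $\omega$) while the negative term is of order $d\log d\cdot\omega^{1/4}/\log\omega$, and conclude that the latter dominates for large $\omega$. Your write-up is simply more explicit---carrying out the logarithmic bookkeeping and explicitly addressing uniformity in $d$---whereas the paper compresses the argument to two sentences; note also that you correctly identified that the condition in question is the one stated in Lemma~\ref{lemma:L_all} (the reference to Lemma~\ref{lemma:L_1} in the statement is a typo in the paper).
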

\begin{proof}
  With $\omega_1 = 500 (1+\log \omega)$, and $\omega_2 = (15 + \log
  \omega)/\omega^{1/4}$, the first term in
  \begin{equation*}
    d \log \left( 6 d ( 1 + 4 \omega_1 d \log d ) \right) - \frac{d \log d}{24 \omega_2}
  \end{equation*}
  increases much slower than the second term as $\omega$ increases, while both
  are at the order of $d \log d$.
  Therefore, if $\omega$ is sufficiently large, the condition hold with $\delta
  = 0.1$.
\end{proof}

\begin{lemma}
  If $\omega \geq 160$, event $\E_H$ holds with probability at least
  $0.92$.
\end{lemma}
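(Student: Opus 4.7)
The plan is to recognize $\mathcal{E}_H$ as a standard birthday-paradox event: since each $\phi(j)$ is chosen independently and uniformly from $\{1,\ldots,s\}$ (this follows from the distribution of the columns of $S$), a collision among indices in $H$ occurs with probability at most $\binom{|H|}{2}/s \leq |H|^2/(2s)$ by a union bound over pairs. So it suffices to control $|H|$ and plug in the value of $s$.

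First I would bound $|H|$. By the Auerbach property $|U|_1 = d$, so $\sum_{j=1}^n v_j = d$, and hence by the definition $H = \{j : v_j \geq \tau\}$ we get
\begin{equation*}
  |H| \leq \frac{d}{\tau} = \frac{d^2 \log^2 d}{\omega^{1/4}},
\end{equation*}
using the prescribed choice $\tau = \omega^{1/4}/(d \log^2 d)$. Next I would plug $s = \omega d^5 \log^5 d$ into the collision bound:
\begin{equation*}
  \Pr[\bar{\mathcal{E}}_H] \leq \frac{|H|^2}{2s} \leq \frac{d^4 \log^4 d / \omega^{1/2}}{2\,\omega\, d^5 \log^5 d} = \frac{1}{2\,\omega^{3/2}\, d \log d}.
\end{equation*}

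Finally, assuming (as in the earlier lemmas) that $d \log d \geq 1$ and using $\omega \geq 160$, we get $\Pr[\bar{\mathcal{E}}_H] \leq 1/(2 \cdot 160^{3/2}) < 0.001 \leq 0.08$, so $\Pr[\mathcal{E}_H] \geq 0.92$ as claimed. There is no real obstacle here; the only point to be careful about is that the hash values $\{\phi(j)\}_{j \in H}$ are genuinely i.i.d.\ uniform on $[s]$, which is immediate from the construction of $S$ (each column independently picks a standard basis vector), so the elementary birthday calculation applies without modification.
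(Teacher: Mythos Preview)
Your proof is correct and follows essentially the same approach as the paper: both argue via a union bound over pairs in $H$ (the birthday calculation), use $|H| \leq d/\tau$, and plug in the values of $\tau$ and $s$. The only cosmetic differences are that you keep the factor $1/2$ from $\binom{|H|}{2}$ and carry the tighter bound $1/(2\,\omega^{3/2} d\log d)$, whereas the paper drops the $1/2$ and relaxes to the cruder $|H|^2/s \leq 1/\omega^{1/2}$ before checking $\omega \geq 160$.
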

\begin{proof}
  Given $j_1, j_2 \in H$ and $j_1 \neq j_2$, let $X_{j_1j_2} = 1$ if $\phi(j_1)
  = \phi(j_2)$ and $X_{j_1j_2} = 0$ otherwise.
  It is easy to see that $\Pr[X_{j_1j_2} = 1] = \frac{1}{s}$.
  Therefore,
  \begin{equation*}
    \Pr[\E_H] \geq 1 - \sum_{j_1 < j_2} \Pr[X_{j_1j_2} = 1] 
    \geq 1 - \frac{|H|^2}{s} \geq 1 - \frac{d^2}{s \tau^2} \geq 1 - \frac{1}{\omega^{1/2}}.
  \end{equation*}
  It suffices if $\omega \geq 160$.
\end{proof}

\begin{lemma}
  With probability at least $0.92$, event $\E_C$ holds with $\omega_3 =
  1/(8 \omega^{1/4})$.
\end{lemma}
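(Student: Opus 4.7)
The plan is to apply a union bound over the indices $j \in H$ using the fact that the $c_j$'s are i.i.d.\ standard Cauchy variables, combined with the simple cardinality bound on $H$ coming from the definition of $\tau$ in the proof of Theorem~\ref{thm:sparse_l1}.

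First, I would recall that since $U$ is an Auerbach basis, $\sum_j v_j = |U|_1 = d$, so the set $H = \{j : v_j \geq \tau\}$ satisfies $|H| \leq d/\tau$. With the choice $\tau = \omega^{1/4}/(d \log^2 d)$ fixed earlier in the proof, this gives
\begin{equation*}
  |H| \leq \frac{d^2 \log^2 d}{\omega^{1/4}}.
\end{equation*}
Crucially, $H$ depends only on the deterministic matrix $U$, not on the random Cauchy variables $c_j$, so the $c_j$ for $j \in H$ are genuinely i.i.d.\ standard Cauchy.

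Second, I would use the standard lower-tail bound for the half-Cauchy distribution: for any $x > 0$,
\begin{equation*}
  \Pr[|C| \leq x] = \tfrac{2}{\pi} \arctan(x) \leq \tfrac{2}{\pi} x.
\end{equation*}
Setting $x = \omega_3/(d^2 \log^2 d)$ with $\omega_3 = 1/(8 \omega^{1/4})$, a union bound over $j \in H$ gives
\begin{equation*}
  \Pr\bigl[\,\mathcal{E}_C \text{ fails}\,\bigr]
  \leq |H| \cdot \frac{2}{\pi} \cdot \frac{\omega_3}{d^2 \log^2 d}
  \leq \frac{d^2 \log^2 d}{\omega^{1/4}} \cdot \frac{2}{\pi} \cdot \frac{1}{8 \omega^{1/4} d^2 \log^2 d}
  = \frac{1}{4 \pi \omega^{1/2}}.
\end{equation*}
For $\omega$ sufficiently large (any $\omega \geq (1/(0.32\pi))^2$ will do, which is compatible with the other ``$\omega$ sufficiently large'' conditions in the proof), this is at most $0.08$, so $\mathcal{E}_C$ holds with probability at least $0.92$.

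There is no real obstacle here: the proof is essentially a one-line union bound, and the only subtlety is making sure the bookkeeping between $|H|$, the Cauchy CDF linearization near zero, and the chosen constant $\omega_3 = 1/(8 \omega^{1/4})$ all line up so that the failure probability matches the $0.08$ slack used elsewhere in the proof of Theorem~\ref{thm:sparse_l1}.
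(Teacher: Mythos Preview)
Your proposal is correct and follows exactly the same approach as the paper: bound $|H|$ via $|H|\le d/\tau$, use the half-Cauchy estimate $\Pr[|C|\le x]\le 2x/\pi$, and take a union bound over $j\in H$. The only discrepancy is numerical: the paper computes with $|H|\le \omega^{1/4} d^2\log^2 d$ (consistent with $\tau = 1/(\omega^{1/4} d\log^2 d)$, which is what the other auxiliary lemmas also use), obtaining failure probability exactly $1/(4\pi)<0.08$ without any condition on $\omega$, whereas you used the stated value $\tau=\omega^{1/4}/(d\log^2 d)$ and hence got the even smaller $1/(4\pi\omega^{1/2})$; either way the lemma holds.
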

\begin{proof}
  Let $c$ be a Cauchy variable. We have
  \begin{equation*}
    \Pr[|c| \leq t] = \frac{2}{\pi} \text{tan}^{-1} t \leq \frac{2 t}{\pi}.
  \end{equation*}
  $|H|$ is at most $d/\tau = \omega^{1/4} d^2 \log^2 d$. Then
  \begin{align*}
    \Pr[\E_C] 
    &\geq 1 - |H| \cdot \Pr \left[|c| < \frac{\omega_3}{d^2 \log^2d} \right] \\
    &\geq 1 - \omega^{1/4} d^2 \log^2 d \cdot \frac{2 \omega_3}{\pi d^2 \log^2 d}.
  \end{align*}
  Therefore, $\omega_3 = 1/(8 \omega^{1/4})$ would suffice.
\end{proof}

\begin{lemma}
  With probability at least $0.92$, event $\E_{\hat{L}}$ holds with
  $\omega_4 = 25000(1+\log \omega)/\omega^{3/4}$.
  Thus with $\omega$ sufficiently large and the above choice of $\omega_3$, the
  condition in Lemma~\ref{lemma:lower_H} $\omega_3 > 4 \omega_4$ holds.
\end{lemma}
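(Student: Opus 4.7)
The plan is to fix $S$ first (this determines $\hat{L}$) and then apply the Cauchy upper tail inequality to $|\Pi U^{\hat{L}}|_1$ with the Cauchy variables $\{c_j\}_{j\in\hat{L}}$, which are independent of both $S$ and the Cauchy variables $\{c_j\}_{j\in H}$ appearing in $\E_C$. By the $1$-stability of the Cauchy distribution, conditional on $S$,
\begin{equation*}
|\Pi U^{\hat{L}}|_1 \;=\; \sum_{k=1}^{d}\sum_{i=1}^{s}\Big|\sum_{j\in\hat{L}} s_{ij} c_j u_{jk}\Big| \;\simeq\; \sum_{k=1}^d \sum_{i=1}^{s} \Big(\sum_{j\in\hat{L}} |s_{ij} u_{jk}|\Big)\, |\tilde c_{ik}|,
\end{equation*}
where $\{\tilde c_{ik}\}$ are (dependent) standard Cauchy variables. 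The total weight is
\begin{equation*}
\gamma \;:=\; \sum_{k,i}\sum_{j\in\hat{L}} |s_{ij} u_{jk}| \;=\; \sum_{j\in\hat{L}} \|u_j\|_1 \;=\; \sum_{j\in\hat{L}} v_j,
\end{equation*}
so Lemma~\ref{lemma:cauchy_upper} (applied to $sd\ge 3$ terms) gives, for any $t\ge 1$,
\begin{equation*}
\Pr\!\big[\,|\Pi U^{\hat{L}}|_1 \ge t\gamma \;\big|\; S\,\big] \;\le\; \frac{2\log(sd\,t)}{t}.
\end{equation*}

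Next I would control $\gamma$ as a function of $S$ alone. For each $j\in L$, the event $\{j\in\hat{L}\}$ equals $\{\phi(j)\in\phi(H)\}$, so $\Pr[j\in\hat{L}\mid H]\le |H|/s \le d/(s\tau)$. Hence
\begin{equation*}
\mathbf{E}[\gamma] \;\le\; \frac{|H|}{s}\sum_{j\in L} v_j \;\le\; \frac{d}{s\tau}\cdot d \;=\; \frac{d^2}{s\tau}.
\end{equation*}
Plugging in $s=\omega d^5 \log^5 d$ and $\tau = \omega^{1/4}/(d\log^2 d)$ gives $\mathbf{E}[\gamma] \le 1/(\omega^{5/4} d^2 \log^3 d)$. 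Since $\gamma$ is a sum of independent bounded indicators (each term at most $v_j \le \tau$), Markov's inequality (or Bernstein, if tighter constants are needed) yields $\gamma \le c_0\,\mathbf{E}[\gamma]/\delta_1$ with probability at least $1-\delta_1$, for a suitable absolute constant $c_0$.

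Finally I would combine the two bounds. Choose $t$ so that $t\gamma = \omega_4/(d^2\log^2 d)$; with $\gamma \lesssim 1/(\omega^{5/4} d^2 \log^3 d)$, this forces $t \gtrsim \omega_4\,\omega^{5/4}\log d$, which with $\omega_4 = 25000(1+\log\omega)/\omega^{3/4}$ is large enough that $2\log(sdt)/t \le 0.04$. Together with the $\delta_1 = 0.04$ bound on $\gamma$, a union bound gives $\Pr[\E_{\hat{L}}] \ge 0.92$. Finally, with this $\omega_4$ and the $\omega_3 = 1/(8\omega^{1/4})$ already chosen, $\omega_3 > 4\omega_4$ reduces to $\omega^{1/2} > 8\cdot 10^5(1+\log\omega)$, which holds for $\omega$ sufficiently large, so the hypothesis of Lemma~\ref{lemma:lower_H} is met.

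The main obstacle is the two-stage randomness: $\hat{L}$ is itself a random set depending on $S$, and the Cauchy bound must be applied conditionally so that $\gamma$ is treated as a constant. Care is needed to verify that the Cauchy variables $\{c_j\}_{j\in\hat{L}}$ are independent of those used in $\E_C$ (which is immediate since $\hat{L}\cap H=\emptyset$) and that the stability step does not mix in variables outside $\hat{L}$ (which is immediate from restricting to $U^{\hat{L}}$). Once the bookkeeping is done, the remaining work is an arithmetic check that $\omega_4 = 25000(1+\log\omega)/\omega^{3/4}$ indeed absorbs all the log factors from the Cauchy tail and from the concentration of $\gamma$.
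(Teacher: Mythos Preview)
Your approach is essentially the paper's: bound $\gamma=|U^{\hat L}|_1$ via $\mathbf{E}[\gamma]\le (|H|/s)\,|U^L|_1$ and Markov (the paper uses the factor $25$, i.e.\ $\delta_1=0.04$), then apply the Cauchy upper tail inequality conditionally on $S$ to get $|\Pi U^{\hat L}|_1\le t\gamma$ except with probability $2\log(mt)/t$, and finally pick $t$ of order $(1+\log\omega)\log d$. The only cosmetic differences are that the paper restricts the outer sum to $i\in\phi(H)$ (so $m=|H|\,d$ rather than your $sd$, a harmless change inside the logarithm) and that your $\mathbf{E}[\gamma]$ carries $\omega^{5/4}$ where the paper writes $\omega^{3/4}$, reflecting an internal inconsistency in the paper's stated value of $\tau$ versus the value used in its computation of $|H|$; either way the target $\omega_4$ is comfortably achieved.
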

\begin{proof}
  We have,
  \begin{equation*}
    \mathbf{E}[|U^{\hat{L}}|_1] = \frac{|H|}{s} |U^{L}|_1 
    \leq \frac{\omega^{1/4} d^2 \log^2 d}{\omega d^5 \log^5 d} \cdot d 
    = \frac{1}{\omega^{3/4} d^2 \log^3 d}.
  \end{equation*}
  By Markov's inequality,
  \begin{equation*}
    \Pr\left[|U^{\hat{L}}|_1 \geq \frac{25}{\omega^{3/4} d^2 \log^3 d}\right] \leq 0.04.
  \end{equation*}
  Assume that $|U^{\hat{L}}|_1 \leq \frac{25}{\omega^{3/4} d^2 \log^3 d}$.
  Similar to the proof of Lemma~\ref{lemma:E_U}, we have
  \begin{equation*}
    |\Pi U^{\hat{L}}|_1 = \sum_{k=1}^d  \sum_{i \in \phi(H)} | \sum_j s_{ij} c_j u^{\hat{L}}_{jk} | 
    \simeq \sum_{k=1}^d \sum_{i \in \phi(H)} \left( \sum_{j} s_{ij} |u^{\hat{L}}_{jk}| \right) |\tilde{c}_{ik}|,
  \end{equation*}
  where $\{\tilde{c}_{ik}\}$ are \emph{dependent} Cauchy variables.
  Apply Lemma~\ref{lemma:cauchy_upper},
  \begin{equation*}
    \Pr[ |\Pi U^{\hat{L}}| \geq |U^{\hat{L}}| t ] \leq \frac{2 \log(|H| d t)}{t}
  \end{equation*}
  It suffices to choose $t = 1000 (1 + \log \omega) \log d$ to make the RHS less
  than $0.04$.
  So with probability at least $0.92$, we have $\E_{\hat{L}}$ holds with
  $\omega_4 = 25000(1+\log \omega)/\omega^{3/4}$.
\end{proof}

\subsection{Proof of Corollary~\ref{cor:l1reg} (Fast $\ell_1$ Regression)}
\label{sxn:pf-cor-l1reg}

By Theorem~\ref{thm:sparse_l1} and Lemma~\ref{lemma:fast_sampling}, we know that
Steps 2 and 4 of Algorithm \ref{alg:fast_l1_reg} succeed with a constant
probability. 
Conditioning on this event, we have
\begin{equation*}
  \|A \hat{x} - b\|_1 \leq \frac{1}{1-\epsilon/4} \| S A \hat{x} - S b\|_1 
  \leq \frac{1+\epsilon/4}{1 - \epsilon/4} \|S A x^* - S b\|_1 
  \leq \frac{(1+\epsilon/4)^2}{1-\epsilon/4} \|A x^* - b\|_1 
  \leq (1+\epsilon) \|A x^* - b\|_1,
\end{equation*}
where the last inequality is due to $\epsilon < 1/2$.
By Theorem~\ref{thm:sparse_l1}, Step 2 takes $\bigO(\nnz(A))$ time, and Step 3
takes $\bigO(\poly(d))$ time because $\Pi A$ has $\bigO(\poly(d)$ rows.
Then, by Lemma~\ref{lemma:fast_sampling}, Step 4 takes $\bigO(\nnz(A) \cdot \log
n)$ time, and Step 5 takes $\T_1(\epsilon/4;
\bigO(\poly(d)\log(1/\epsilon)/\epsilon^2), d)$ time.
Therefore, the total running time of Algorithm~\ref{alg:fast_l1_reg} is as
stated.

\subsection{Proof of Lemma~\ref{lemma:equiv}}
\label{sxn:pf-equiv}

First, we know that
\begin{equation*}
  \Pr[|X_p|^p \geq t] = \Pr[|X_p| 
  \geq t^{1/p}] = 2 \cdot \Pr[X_p \geq t^{1/p}].
\end{equation*}
Next, we state the following lemma, which is due to
Nolan~\cite{nolan2012stable}.
\begin{lemma}
  \label{lemma:tail}
  (Nolan~\cite[Thm.~1.12]{nolan2012stable}) Let $X \sim \D_p$ with $p
  \in [1, 2)$.
  Then as $x \to \infty$,
  \begin{equation*}
    \Pr[X > x] \sim c_p x^{-p},
  \end{equation*}
  where $c_p = \sin \frac{\pi p}{2} \cdot \Gamma(p)/\pi$.
\end{lemma}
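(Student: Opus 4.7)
The plan is to derive the stated tail asymptotic by working directly with the density obtained from Fourier inversion of the characteristic function $\psi(t)=e^{-|t|^p}$, and then extracting the leading term of the density at infinity from the behavior of $\psi$ near the origin. Since $\psi$ is real and even, $X$ is symmetric and has a density
\begin{equation*}
f(x)=\frac{1}{\pi}\int_0^\infty \cos(tx)\,e^{-t^p}\,dt,\qquad x>0,
\end{equation*}
and it suffices to show $f(x)\sim \dfrac{p\,\Gamma(p)\sin(\pi p/2)}{\pi}\,x^{-p-1}$ as $x\to\infty$, after which integrating from $x$ to $\infty$ yields $\Pr[X>x]\sim c_p\,x^{-p}$ with $c_p=\Gamma(p)\sin(\pi p/2)/\pi$.

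First I would change variables $u=tx$, writing $f(x)=\frac{1}{\pi x}\int_0^\infty \cos(u)\,\exp(-(u/x)^p)\,du$, and split the range at some fixed $M$. The tail $u\ge M$ contributes $O(x^{-1}e^{-(M/x)^p})$, which after Taylor-expanding the inner exponential is seen to be $O(x^{-p-1-\epsilon})$ for any $\epsilon<p$, so it is negligible for the leading order. On the head $0\le u\le M$ I use the expansion $\exp(-(u/x)^p)=1-(u/x)^p+O((u/x)^{2p})$. The constant $1$ contributes $\int_0^M\cos u\,du$, which is $O(1)$ inside the original integral and, after reinserting the $x^{-1}$ factor, is handled by recombining with the tail via Abel/analytic regularization so that it contributes $0$ to all polynomial orders (this is the same mechanism by which the characteristic function, being an even smooth function minus the cusp $|t|^p$, has Schwartz-class Fourier transform up to the singular piece).

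The key step is then to evaluate the contribution of the singular piece $-|t|^p$ to the Fourier inversion. The plan is to use the classical identity
\begin{equation*}
\int_0^\infty t^{s-1}\cos(tx)\,dt=\Gamma(s)\cos(\pi s/2)\,x^{-s},\qquad 0<\operatorname{Re}(s)<1,\ x>0,
\end{equation*}
and analytically continue in $s$ (equivalently, view $|t|^p$ as a tempered distribution whose Fourier transform is, up to a normalization, $|x|^{-p-1}$). Substituting $s=p+1$ and taking care with the $\cos(\pi(p+1)/2)=-\sin(\pi p/2)$ sign, the singular contribution $-\int_0^\infty t^p\cos(tx)\,dt$ becomes $\Gamma(p+1)\sin(\pi p/2)\,x^{-p-1}$, and dividing by $\pi$ gives
\begin{equation*}
f(x)=\frac{p\,\Gamma(p)\sin(\pi p/2)}{\pi}\,x^{-p-1}+o\bigl(x^{-p-1}\bigr),\qquad x\to\infty,
\end{equation*}
since the $O((u/x)^{2p})$ remainder in the expansion contributes only $O(x^{-2p-1})=o(x^{-p-1})$ (using $p<2$, hence $2p+1>p+1$).

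Integrating this asymptotic from $x$ to $\infty$ gives $\Pr[X>x]\sim \Gamma(p)\sin(\pi p/2)/\pi\cdot x^{-p}$, which is the claim. The main obstacle I anticipate is the rigorous justification of the regularization: the integral $\int_0^\infty t^p\cos(tx)\,dt$ does not converge in the Lebesgue sense for $p>0$, so term-by-term integration of the expansion $e^{-t^p}=\sum_{k\ge 0}(-1)^k t^{pk}/k!$ must be replaced with a limit argument, for instance by introducing a damping factor $e^{-\varepsilon t}$, evaluating in closed form using the Gamma-function identity above extended to complex arguments, and then sending $\varepsilon\to 0^+$. An equivalent route is to deform the contour in $\int_0^\infty e^{i t x}e^{-t^p}\,dt$ into the complex plane (using that $e^{-t^p}$ is analytic in a sector) so that one picks up the exact asymptotic from a saddle-point/Watson-lemma calculation; this avoids the divergence but requires checking the admissibility of the contour rotation for $p\in[1,2)$.
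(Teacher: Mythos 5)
The paper does not prove this lemma; it cites it to Nolan's Theorem~1.12 and uses it as a black box in the proof of Lemma~\ref{lemma:equiv}, so there is no in-paper proof to compare against. Your Fourier-inversion route is the classical derivation of stable-law tail asymptotics (it appears in Feller vol.~2 and in Zolotarev's book), and the arithmetic is right: analytically continuing $\int_0^\infty t^{s-1}\cos(tx)\,dt=\Gamma(s)\cos(\pi s/2)\,x^{-s}$ to $s=p+1$, using $\cos(\pi(p+1)/2)=-\sin(\pi p/2)$ and $\Gamma(p+1)=p\Gamma(p)$, yields $f(x)\sim \frac{p\Gamma(p)\sin(\pi p/2)}{\pi}\,x^{-p-1}$, and integrating recovers $c_p=\Gamma(p)\sin(\pi p/2)/\pi$. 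The endpoint sanity checks (the Cauchy case $p=1$ gives $c_1=1/\pi$; $c_p\to0$ as $p\to2^-$) also hold, so the conclusion is correct.

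The weak spot is the intermediate ``split at a fixed $M$'' estimate, which is wrong as stated. After $u=tx$, for fixed $M$ and $x\to\infty$ we have $e^{-(M/x)^p}\to 1$, and the tail $\frac{1}{\pi x}\int_M^\infty\cos(u)\,e^{-(u/x)^p}\,du$ is not bounded by $x^{-1}e^{-(M/x)^p}$ --- the integrand is oscillatory and does not obey that domination. An integration by parts gives $-\frac{\sin M}{\pi x}\,e^{-(M/x)^p}+O(x^{-p})$, which is of order $x^{-1}$, not $O(x^{-p-1-\epsilon})$. The matching $+\frac{\sin M}{\pi x}$ comes from the head's constant term, and the $x^{-p-1}$ asymptotic emerges only after this $x^{-1}$ cancellation; in addition $\int_0^M u^p\cos u\,du$ does not converge as $M\to\infty$ for $p\ge1$, so a fixed-$M$ truncation cannot by itself produce the Mellin constant. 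The correct way to close the argument is exactly the method you already name at the end: regularize globally with a factor $e^{-\varepsilon t}$, evaluate $\int_0^\infty t^p e^{-\varepsilon t}\cos(tx)\,dt$ in closed form via the Gamma identity at complex argument, and let $\varepsilon\to 0^+$; or rotate the contour of $\int_0^\infty e^{itx}e^{-t^p}\,dt$ into the sector $\arg t\in(0,\pi/(2p))$ --- noting for $p>1$ you cannot reach the imaginary axis --- and apply Watson's lemma. With one of those regularizations carried out, the plan is sound and the proof is complete.
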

\noindent
By Lemma~\ref{lemma:tail}, it follows that, as $t \to \infty$,
\begin{equation*}
  \Pr[|X_p|^p \geq t] \sim 2 c_p t^{-1}.
\end{equation*}
For the Cauchy distribution, we have
\begin{equation*}
  \Pr[|C| \geq t] = 1 - \frac{2}{\pi} \text{tan}^{-1} t 
  = \frac{2}{\pi} \text{tan}^{-1} \frac{1}{t} \sim \frac{2}{\pi} \cdot t^{-1}.
\end{equation*}
Hence, there exist $\alpha_p' > 0$ and $t_1 > 0$ such that for all $t > t_1$,
\begin{equation*}
  \Pr[\alpha_p' |C| \geq t] \geq \Pr[|X_p|^p \geq t].
\end{equation*}
Note that all the $p$-stable distributions with $p \in [1, 2]$ have finite and
positive density at $x = 0$.
Therefore, there exists $\alpha_p'' > 0$ such that for all $0 \leq t \leq t_1$,
\begin{equation*}
  \Pr[\alpha_p'' |C| \geq t] \geq \Pr[|X_p|^p \geq t].
\end{equation*}
Let $\alpha_p = \max \{ \alpha_p', \alpha_p'' \}$.
We get $\alpha_p |C| \succeq |X_p|^p$.
For the Gaussian distribution, we have, as $t \to \infty$,
\begin{equation*}
  \Pr[|G|^2 \geq t] \sim 2 e^{-t/2} t^{-1/2}.
\end{equation*}
which converges to zero much faster than $t^{-1}$, so we can apply similar
arguments to obtain $\beta_p$.

\subsection{Proof of Lemma~\ref{lemma:stable_upper} (Upper Tail Inequality for $p$-stable Distributions)}
\label{sxn:pf-lp-stable-upper}

Let $C_i = F_{c}^{-1}(F_p(X_i))$, $i=1,\ldots,m$, where $F_c$ is the CDF of the
standard Cauchy distribution and $F_p$ is the CDF of $\D_p$.
$C_i$ follows the standard Cauchy distribution, and, by Lemma~\ref{lemma:equiv},
we have $\alpha_p |C_i| \geq |X_i|^p$.
Therefore, for any $t \geq 1$,
\begin{align*}
  \Pr[X \geq t \alpha_p \gamma] \leq \Pr \left[ \sum_i \gamma_i |C_i| \geq t
    \gamma \right] \leq \frac{2 \log(m t)}{t}.
\end{align*}
The last inequality is from Lemma~\ref{lemma:cauchy_upper}.

\subsection{Proof of Lemma~\ref{lemma:stable_lower} (Lower Tail Inequality for $p$-stable Distributions)}
\label{sxn:pf-lp-stable-lower}

Let $G_i$ be independent random variables sampled from the standard Gaussian
distribution, $i=1, \ldots, m$.
By Lemma~\ref{lemma:equiv}, we have
\begin{align*}
  \log \Pr [ X \leq \beta_p(1-t) \gamma] \leq \log \Pr\left[ \sum_i
    \gamma_i |G_i|^2 \leq (1-t) \gamma\right].
\end{align*}
The lower tail inequality from Lemma~\ref{lemma:gaussian_lower} concludes the
proof.

\subsection{Proof of Theorem~\ref{thm:dense_lp} (Low-distortion Dense Embedding for
  $\ell_p$)}
\label{sxn:pf-dense_lp}

The proof is similar to the proof of Sohler and Woodruff~\cite[Theorem~5]{SW11},
except that the Cauchy tail inequalities are replaced by tail inequalities for
the stable distributions.
For simplicity, we omit the complete proof but show where to apply those tail
inequalities.
By Lemma~\ref{lemma:auerbach}, there exists a $(d^{1/p}, 1, p)$-conditioned
basis matrix of $\A_p$, denoted by $U$.
Thus, $|U|_p^p = d$, where recall that $|\cdot|_{p}$ denotes the element-wise 
$\ell_{p}$ norm of a matrix.
We have,
\begin{equation*}
  |\Pi U|_p^p = \sum_{k=1}^d \|\Pi u_k\|_p^p 
  = \sum_{k=1}^d \sum_{i=1}^s \left|\sum_{j=1}^n \Pi_{ij} u_{jk}\right|^p 
  \simeq \sum_{k=1}^d \sum_{i=1}^s  \|u_k\|_p^p |\tilde{X}_{ik}|^p,
\end{equation*}
where $\tilde{X}_{ik} \sim \D_p$.
Applying Lemma~\ref{lemma:stable_upper}, we get $\|\Pi U\|_p^p/s = \bigO(d \log
d)$ with a constant probability.
Define $Y = \{ U x \,|\, \|x\|_q = 1, x \in \R^d\}$.
For any fixed $y \in Y$, we have
\begin{equation*}
  \|\Pi y\|_p^p = \sum_{i=1}^s \left| \sum_{j=1}^n \Pi_{ij} y_j\right|^p 
  \simeq \sum_{i=1}^s \|y\|_p^p |\tilde{X}_i|^p,
\end{equation*}
where $\tilde{X}_i \stackrel{\text{iid}}{\sim} \D_p$.
Applying Lemma~\ref{lemma:stable_upper}, we get $\|\Pi y\|_p^p/s \leq
1/\bigO(1)$ with an exponentially small probability with respect to $s$.
By choosing $s = \omega d \log d$ with $\omega$ sufficiently large and an
$\epsilon$-net argument on $Y$, we can obtain a union lower bound of $\|\Pi
y\|_p^p$ on all the elements of $Y$ with a constant probability.
Then,
\begin{equation*}
  1/\bigO(1) \cdot \|y\|_p^p \leq \|\Pi y\|_p^p/s \leq |\Pi U|_p^p \|x\|_q^p 
  \leq \bigO(d \log d) \cdot \|U x\|_p^p = \bigO(d \log d) \|y\|_p^p, \quad y \in Y,
\end{equation*}
which gives us the desired result.

\subsection{Proof of Theorem~\ref{thm:improved-dim} (Improving the Embedding Dimension)}
\label{sxn:pf-imp-dim}

Each of Steps 1, 3, and 5 of Algorithm~\ref{alg:dim} succeeds with a constant
probability.
We can control the success rate of each by adjusting the constant factor in the
embedding dimension, such that all steps succeed with a constant probability.
Conditioning on this event, we have $\kappa_p(A R^{-1}) = 6 d$ because
\begin{align*}
  \|A R^{-1} x\|_p &\leq 2 \|\tilde{S} A R^{-1} x\|_p \leq 4 d \|x\|_2, \\
  \|A R^{-1} x\|_p &\geq \frac{2}{3} \|\tilde{S} A R^{-1} x\|_p 
  \geq \frac{2}{3} \|x\|_2, \quad \forall x \in \R^d.
\end{align*}
By Lemma~\ref{lemma:kappa_equiv}, $\bar{\kappa}_p(A R^{-1}) \leq 6 d^{1/p+1}$, and
then by Lemma~\ref{lemma:fast_sampling}, the embedding dimension of $S$ is
$\bigO( \bar{\kappa}_p^p(A R^{-1}) d^{|p/2-1|} d \log(1/\epsilon) / \epsilon^2 )
= \bigO(d^{3+p/2} \log(1/\epsilon) / \epsilon^2)$.

\end{document}